\documentclass[journal]{IEEEtran}

\IEEEoverridecommandlockouts
\makeatletter
\def\ps@headings{%
\def\@oddhead{\mbox{}\scriptsize\rightmark \hfil \thepage}%
\def\@evenhead{\scriptsize\thepage \hfil \leftmark\mbox{}}%
\def\@oddfoot{}%
\def\@evenfoot{}}
\makeatother
\usepackage{subfigure}
\usepackage{colortbl}
\usepackage{bm}

\usepackage{graphicx}  
\usepackage{url}       

\usepackage{amsmath}   
\usepackage{cite}

\usepackage{amsfonts,amssymb}
\usepackage{cite}
\usepackage{stfloats}

\usepackage{diagbox}

\usepackage{cases}
\usepackage{algorithm,algorithmic}
\usepackage{multirow}
\usepackage{stfloats}
\usepackage{epstopdf}

\usepackage{xcolor}
\usepackage{xpatch}
\makeatletter
\def\changeBibColor#1{%
	\in@{#1}{
	}
	\ifin@\color{blue}\else\normalcolor\fi
}
\xpatchcmd\@bibitem
{\item}
{\changeBibColor{#1}\item}
{}{\fail}
\xpatchcmd\@lbibitem
{\item}
{\changeBibColor{#2}\item}
{}{\fail}

\makeatletter

\newcommand{\Rmnum}[1]{\expandafter\@slowromancap\romannumeral #1@}
\makeatother

\UseRawInputEncoding

\newtheorem{theorem}{{Theorem}}
\newtheorem{lemma}{{Lemma}}

\newtheorem{proposition}{{Proposition}}

\newtheorem{remark}{{Remark}}

\newcommand{\ls}[1]
    {\dimen0=\fontdimen6\the\font
     \lineskip=#1\dimen0
     \advance\lineskip.5\fontdimen5\the\font
     \advance\lineskip-\dimen0
     \lineskiplimit=.9\lineskip
     \baselineskip=\lineskip
     \advance\baselineskip\dimen0
     \normallineskip\lineskip
     \normallineskiplimit\lineskiplimit
     \normalbaselineskip\baselineskip
     \ignorespaces
    }

\begin{document}

\title{ 
	 Anti-Jamming Optimization for EM-Compliant Active RIS via Decoupling Architecture
 
   }
\vspace{10pt}


\author{\IEEEauthorblockN{Yang Cao, \emph{Graduate Student Member}, \emph{IEEE}, Wenchi Cheng, \emph{Senior Member}, \emph{IEEE}, Jingqing Wang, \emph{Member}, \emph{IEEE}, and Lifeng Wang}\\[0.2cm]
	\vspace{-5pt}

	\vspace{-25pt}
	
	\thanks{
		
		
		Yang Cao, Wenchi Cheng and Jingqing Wang are with the State Key Laboratory of Integrated Services Networks, Xidian University, Xi'an,
		710071, China (e-mails: caoyang@stu.xidian.edu.cn, wccheng@xidian.edu.cn, wangjingqing00@gmail.com).
		
		Lifeng Wang is with National Key Laboratory of Multi-domain Data Collaborative Processing and Control, Beijing, China (e-mail: wolfbox@163.com).
		
	}
}

\maketitle

\begin{abstract}
  
 Wireless communication systems are increasingly vulnerable to sophisticated jamming attacks with the rapid evolution of jamming technologies and advanced signal processing techniques. While traditional anti-jamming techniques offer limited performance gains, active reconfigurable intelligent surfaces (RISs) have emerged as a promising channel-domain solution for improving resilience against jamming. Nonetheless, existing studies often rely on simplified electromagnetic (EM) models that do not fully capture mutual coupling (MC) and impedance mismatches in RIS hardware. In this paper, we propose an EM-compliant active (EMC-Active) RIS model for anti-jamming systems, explicitly incorporating the EM and physical properties at active RIS, such as MC effects, channel correlation, and discrete phase. To evaluate the anti-jamming performance of the proposed EMC-Active RIS, we develop a low-complexity alternating optimization (AO) algorithm based on the decoupling architecture (DA) to maximize the ergodic achievable rate. By leveraging the DA to explicitly eliminate MC effects among REs, the original coupled system is transformed into a tractable and scalable uncoupled representation. Numerical results demonstrate that the DA-based AO algorithm can significantly reduce the modeling and optimization complexity and efficiently solve the problem in an alternating manner with substantially reduced iteration overhead.

\end{abstract}

\vspace{-5pt}

\begin{IEEEkeywords}
	
Active reconfigurable intelligent surface, EM-compliant model, decoupling architecture, alternating optimization.
\end{IEEEkeywords}

\vspace{-15pt}
\section{Introduction}
 \IEEEPARstart{O}{wing} to the superposition and broadcast nature of electromagnetic (EM) wave propagation, wireless communication systems are intrinsically susceptible to intentional jamming from malicious jammers. With the rapid advancement of hardware and signal processing capabilities, jamming devices have evolved from conventional non-adaptive schemes---such as single-tone, frequency-sweeping, or tracking-based jamming---toward more sophisticated and adaptive jamming strategies capable of exploiting multiple jamming patterns \cite{Cao2}. In parallel, anti-jamming techniques have progressed from traditional spread-spectrum approaches to more advanced intelligent countermeasures. Nevertheless, as jamming mechanisms continue to gain intelligence, existing anti-jamming methods that primarily rely on power control, frequency agility, or spatial processing are facing diminishing performance returns. In particular, both jammers and legitimate systems increasingly employ similar intelligent architectures and learning-based algorithms, which significantly reduces the likelihood of achieving a decisive advantage on either side and ultimately leads to a persistent confrontation stalemate \cite{yao2023wireless}.
 
 To counter increasingly sophisticated intelligent jamming attacks, there is a growing demand for anti-jamming solutions in new design dimensions. Enabled by recent advances in metamaterials, reconfigurable intelligent surfaces (RISs) have emerged as a promising and energy-efficient technology for reshaping wireless propagation environments \cite{Cao2}. By adaptively adjusting the amplitude or phase of incident EM waves through reflection elements (REs), RISs introduce a new degree of freedom in the channel domain, offering fresh opportunities for anti-jamming system design. Recent studies have explored RIS-assisted robust beamforming for secure communication under jamming and eavesdropping \cite{Sun1}, as well as beamforming strategies for multiple RISs-assisted anti-jamming system \cite{Cao1}. Despite their advantages, conventional passive RISs suffer from an inherent limitation known as double-fading attenuation, where the cascaded RIS-assisted link suffers two large-scale fading effects, significantly constraining the achievable performance gains. To address this limitation, the concept of active RIS has been introduced \cite{Active_RIS1}, where each RE is equipped with a a reflection amplifier (RA) and a phase shifter (PS)  capable of jointly controlling the amplitude and phase of incident signals. Recent studies indicate that active RISs can deliver significant performance gains compared to passive designs \cite{Active_RIS2}. In the context of secure communications, active RIS-assisted anti-jamming systems have demonstrated substantial improvements even with a limited number of REs \cite{Cao2}. Furthermore, an effective optimization method has been developed to address the non-convex security rate maximization problem arising in active RIS-aided secure scenarios \cite{Aided-Secure-Tran}.
 
 
 Most existing passive RIS-assisted communication studies model RIS elements as ideal scatterers, which often deviates from their actual EM properties and physical realizations, especially when mutual coupling (MC) and impedance mismatch are present. To address this issue, recent works have introduced EM-compliant RIS models based on multiport network theory using either impedance ($Z$-) \cite{Renzo_2021} or scattering ($S$-) \cite{Shen_2022} parameters, enabling explicit characterization of MC effects and hardware constraints and revealing their significant impact on system-level optimization \cite{Z_parameter,S_parameter,thus3}. However, similar limitations persist for active RISs, as most existing studies focus on system-level optimization while relying on simplified EM-incompatible models \cite{Active_RIS1,Active_RIS2}. Although preliminary efforts have investigated the EM design of REs at active RIS \cite{Active_RIS_cankao,Active_RIS_cankao1}, an EM-compliant modeling framework that connects active RIS hardware characteristics with system-level optimization is still lacking. To fill this gap, our prior work \cite{Cao3} proposed an EM-compliant active (EMC-Active) RIS model for MIMO systems that explicitly incorporates MC and accurately captures the EM properties of active RIS hardware.

 However, the design of EMC-Active RIS-assisted anti-jamming systems has not yet been thoroughly explored. Existing EM-compliant RIS models \cite{Cao3} are primarily tailored to simple communication scenarios and cannot be directly extended to anti-jamming systems, which require robustness against unknown jammers and optimization based on imperfect channel state information (CSI) \cite{Sun1}. Thus, these models need to be extended to suit the requirements of anti-jamming applications. Moreover, MC-aware optimization methods usually rely on computationally intensive iterative algorithms, whose complexity grows rapidly with the number of RIS elements \cite{BD_RIS}. Motivated by these challenges, in this paper we develop an EMC-Active RIS-assisted anti-jamming framework that jointly captures the EM and physical properties of active RIS, such as mutual coupling effects, channel correlation, and discrete phase. To evaluate the anti-jamming performance of the proposed EMC-Active RIS, we develop a low-complexity alternating optimization (AO) algorithm based on the decoupling architecture (DA) to maximize the ergodic achievable rate. By leveraging the DA to explicitly eliminate MC effects among REs, the original coupled system is transformed into an equivalent uncoupled representation. This transformation alters the solvability structure of the original optimization problem, enabling the originally intractable coupled optimization to be transformed into a manageable and scalable form. Finally, the numerical results show that the DA-based AO algorithm can significantly reduce the modeling and optimization complexity, which can efficiently solve the problem in an alternating manner with substantially reduced iteration overhead.

 The remainder of this paper is organized as follows. Section~\ref{sec:System_model} presents the EMC-Active RIS-assisted anti-jamming system model and develops its corresponding $S$-parameter and $Z$-parameter representations. Section~\ref{sec:Decoupling_Problem_formulation} derives the decoupling architecture and formulates the associated non-convex optimization problem. In Section~\ref{sec:Alternating_optimization_perfect_CSI}, the DA-based AO algorithm is proposed to solve the formulated problem. Section~\ref{sec:simulation} provides numerical results to validate the effectiveness of the proposed framework. Finally, conclusions are drawn in Section~\ref{sec:Conclusion}.

 \begin{figure}[t]
	\vspace{-15pt}
	\centering
	\includegraphics[scale=0.10]{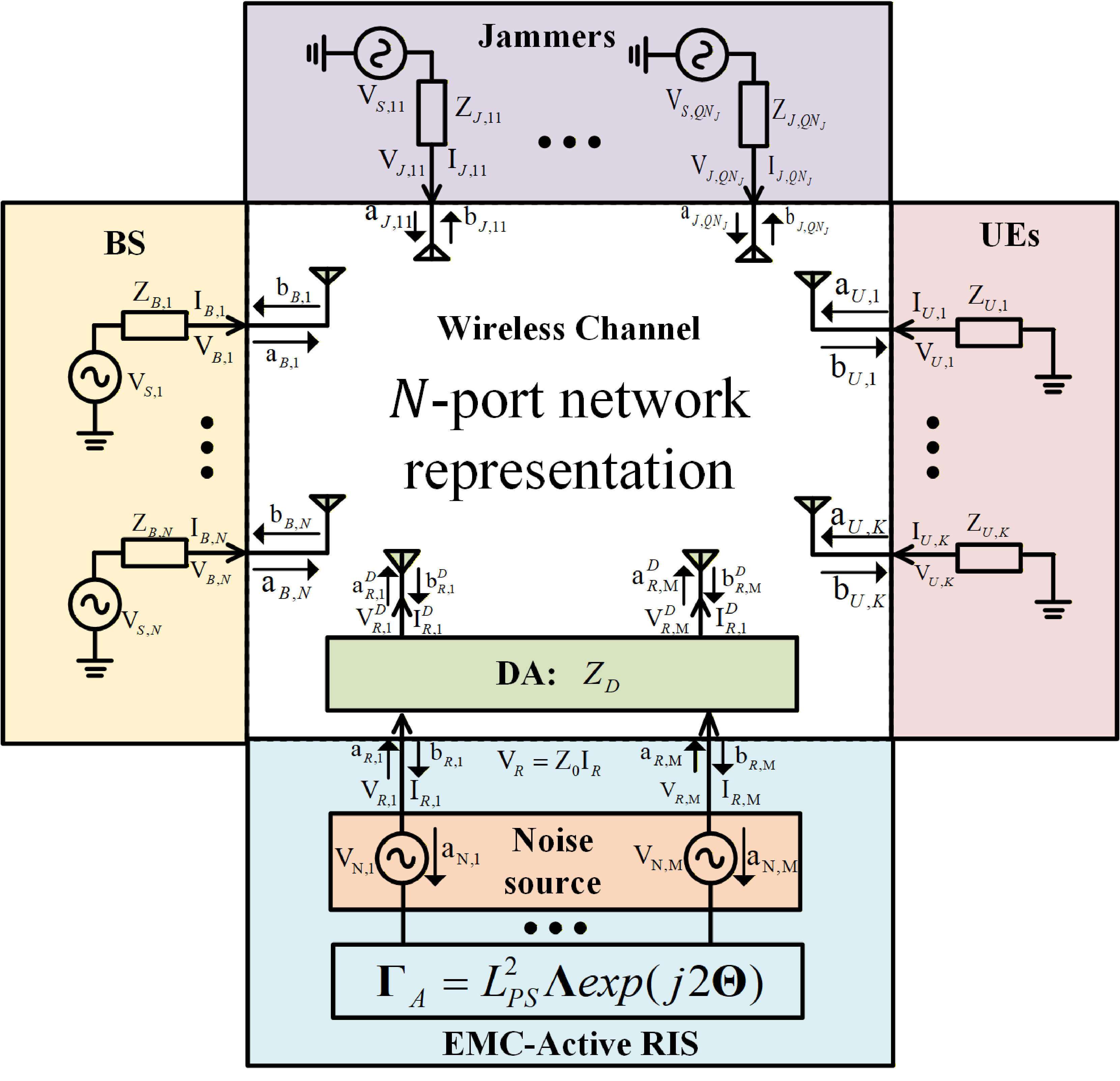}
	\vspace{-10pt}
	\caption{Multiport model of EMC-Active RIS with a decoupling architecture.}
	\vspace{-13pt}
	\label{fig:shiyitu}
\end{figure}

\vspace{-5mm}

 \section{System Model}
 \label{sec:System_model}
 
 We consider an active RIS-assisted anti-jamming system, comprising a BS with $N$ antennas, $Q$ malicious jammers with $N_J$ antennas, $K$ single-antenna UEs, and an active RIS with $M$ REs. The BS, equipped with a uniform linear array (ULA) \cite{OAM}, serves the UEs with the support of active RIS, while jammers equipped with ULA send malicious jamming signals to the UE seeking to disrupt legitimate communications, where REs are configured as a uniform planar array (UPA) with $M_h$ rows and $M_v$ columns (i.e., $M = M_hM_v$). Let $\mathcal{M}=\{1,2,\cdots,M\}$, $\mathcal{K}=\{1,2,\cdots,K\}$, and $\mathcal{Q}=\{1,2,\cdots,Q\}$, denote the RE set, the UE set, and the jammer set, respectively.
 
 \subsection{EM-Compliant Active RIS Model}
 \label{subsec:EM_Active_RIS}
 
 Active RIS is equipped with reflection amplifiers (RAs) for signal amplification and phase shifters (PSs) for phase adjustment, effectively eliminating the double-fading attenuation effect introduced by traditional passive RIS. To accurately characterize the impact of the EM and physical properties of active RIS architectures in the system, we introduce an EM-compliant active RIS (abbreviated as EMC-Active RIS) model proposed in our previous work \cite{Cao3}, as shown in Fig.~\ref{fig:shiyitu}. Thus, the reflection coefficient matrix $\boldsymbol{\rm \Gamma}_A$ of EMC-Active RIS can be written as 
 
 \vspace{-5mm}
 \begin{small}
 	 \begin{align}\label{gamma_0}
 	\boldsymbol{\rm \Gamma}_A=L_{P\hspace{-0.3mm}S}^{2}\boldsymbol{\rm \Lambda}\exp\left(j2\boldsymbol{\rm \Theta}\right),
 	\end{align}
 \end{small}where $L_{P\hspace{-0.3mm}S}$, $\boldsymbol{\rm \Lambda}=\mbox{Diag}\left(\left[\alpha_{1},\cdots,\alpha_{M}\right]^T\right)$, and $\boldsymbol{\rm \Theta}=\mbox{Diag}\left(\left[\theta_1,\cdots,\theta_M\right]^T\right)$ denote the insertion loss of the PS, the amplitude matrix, and the phase shift matrix, respectively, where $\alpha_{m}$, $\theta_m$, and $\mbox{Diag}\left(\cdot\right)$ represent the amplitude of RA, the phase of PS at the $m$th RE, and a diagonal matrix respectively.
 
 Based on the EMC-Active RIS model and multiport network theory \cite{Ivrlac_Nossek_2010}, we can further develop an EMC-Active RIS-assisted anti-jamming communication system model, which further incorporates the EM properties at the active RIS---primarily the EM mutual coupling (MC) among the REs---on top of the conventional communication model. Specifically, to begin with, we define the transmitted signals at the BS and the $q$th jammer as follows:
 
 \vspace{-5mm}
 \begin{small}
 	\begin{align}
 	\boldsymbol{\rm x}=\sum_{k=1}^K\boldsymbol{\rm w}_ks_k, ~~~~
 	\boldsymbol{\rm {x}}_{J,q} = \sum_{k=1}^K\boldsymbol{\rm {w}}_{J,qk}s_{J,qk}, \forall q\in \mathcal{Q},
 	\end{align}
 \end{small}where $\boldsymbol{\rm {w}}_k \in \mathbb{C}^{N\times 1}$, $s_k\sim\mathcal{CN}(0,1)$, $\boldsymbol{\rm {w}}_{J,qk} \in \mathbb{C}^{N_J\times 1}$, and $s_{J,qk}\sim\mathcal{CN}(0,1)$ denote the transmit beamforming vector and symbol at the BS and the $q$th jammer for the $k$th UE, respectively. Besides, we define that $\sum_{k=1}^K\Vert\boldsymbol{\rm {w}}_k\Vert^2\le P_{\max}$ and $P_{J,q}=\sum_{k=1}^K\Vert\boldsymbol{\rm {w}}_{J,qk}\Vert^2\le P_{J,\max}, \forall q\in \mathcal{Q}$, where $P_{\max}$ and $P_{J,\max}$ are the maximum transmit power budget at the BS and jammers, respectively. The notation $\mathcal{CN}(0,1)$ denotes the a zero mean and unit variance random variable with independent and identically distributed (i.i.d.).
 
 According to Fig.~\ref{fig:shiyitu}, based on the scattering parameter ($S$-parameter) representation, the signal received by the $k$th UE in the  EMC-Active RIS-assisted anti-jamming system can be formulated as follows \cite{Cao3}:
 
 \vspace{-5mm}
 \begin{small}
    \begin{align}\label{received_signal}
 	\begin{split}
 	y_k=\boldsymbol{\rm H}^{E}_{S,k}\boldsymbol{\rm x}+\sum_{q=1}^Q\boldsymbol{\rm H}^{J}_{S,qk}\boldsymbol{\rm x}_{J,q}+\boldsymbol{\rm H}^{N}_{S,k}\boldsymbol{\rm n}_{R}+{n}_{k}, \forall k\in \mathcal{K},
 	\end{split}
 	\end{align}
 \end{small}where we define

\vspace{-3mm}
\begin{small}
	\begin{align}\label{HE}
	&\boldsymbol{\rm H}^{E}_{S,k}\triangleq\boldsymbol{\rm S}_{BU,k}+\boldsymbol{\rm  S}_{RU,k}\left(\boldsymbol{\rm I}_{M}-\boldsymbol{\rm \Gamma}_{A}\boldsymbol{\rm  S}_{AA}\right)^{-1}\boldsymbol{\rm \Gamma}_{A}\boldsymbol{\rm S}_{BR},\\
	&\boldsymbol{\rm H}^{J}_{S,qk}\triangleq\boldsymbol{\rm  S}_{JU,qk}+\boldsymbol{\rm  S}_{RU,k}\left(\boldsymbol{\rm I}_{M}-\boldsymbol{\rm \Gamma}_{A}\boldsymbol{\rm  S}_{AA}\right)^{-1}\boldsymbol{\rm \Gamma}_{A}\boldsymbol{\rm  S}_{JR,q},\label{HJ}\\
	&\boldsymbol{\rm H}^{N}_{S,k}\triangleq\boldsymbol{\rm  S}_{RU,k}\left(\boldsymbol{\rm I}_{M}-\boldsymbol{\rm \Gamma}_{A}\boldsymbol{\rm  S}_{AA}\right)^{-1}\boldsymbol{\rm \Gamma}_{A},\label{HN}
	\end{align}
\end{small}where $\boldsymbol{\rm S}_{\varpi_1\varpi_2}$ for $\varpi_1,\varpi_2\in\left\{B,J,R,U\right\}$ denotes the $S$-parameter representation of the channel matrix between the end of $\varpi_1$ and $\varpi_2$ with $\left\{B,J,R,U\right\}$ referring to the BS, jammer, EMC-active RIS, and UE, respectively. The notations $\boldsymbol{\rm n}_{ R}\sim\mathcal{CN}(0,\sigma^2_{ R}\boldsymbol{\rm I}_M)$ and $n_k\sim\mathcal{CN}(0,\sigma_k^2)$ denote the additive complex Gaussian noises at the EMC-Active RIS and UEs, respectively. The notation $\boldsymbol{\rm I}$ is the identity matrix of the dimension corresponding to the subscripts. The notation $\boldsymbol{\rm  S}_{AA}$ represents the scattering matrix at the EMC-Active RIS, where the diagonal elements indicate the self-impedance of the REs, while the off-diagonal elements indicate the EM MC among REs. In addition, the output signal, denoted by $\boldsymbol{\rm y}_A$, at the EMC-Active RIS is formulated as follows:

\vspace{-3mm}
\begin{small}
 \begin{align}\label{PA1}
	\begin{split}
	\boldsymbol{\rm y}_A =\boldsymbol{\rm \breve {H}}_{S}^{E}\boldsymbol{\rm x}+\sum_{q=1}^Q\boldsymbol{\rm \breve H}^{J}_{S,q}\boldsymbol{\rm x}_{J,q}+\boldsymbol{\rm \breve H}^{N}_{S}\boldsymbol{\rm n}_{R},
	\end{split}
	\end{align}
\end{small}where we introduce the definition as follows:

\vspace{-3mm}
\begin{small}
	\begin{align}\label{H_A1}
	&\boldsymbol{\rm \breve {H}}_{S}^{E}\triangleq\left(\boldsymbol{\rm I}_{M}-\boldsymbol{\rm \Gamma}_{A}\boldsymbol{\rm  S}_{AA}\right)^{-1}\boldsymbol{\rm \Gamma}_{A}\boldsymbol{\rm S}_{BR},\\
	&\boldsymbol{\rm \breve {H}}_{S,q}^{J}\triangleq\left(\boldsymbol{\rm I}_{M}-\boldsymbol{\rm \Gamma}_{A}\boldsymbol{\rm  S}_{AA}\right)^{-1}\boldsymbol{\rm \Gamma}_{A}\boldsymbol{\rm S}_{JR,q},\\
	&\boldsymbol{\rm \breve {H}}_{S}^{N}\triangleq\left(\boldsymbol{\rm I}_{M}-\boldsymbol{\rm \Gamma}_{A}\boldsymbol{\rm  S}_{AA}\right)^{-1}\boldsymbol{\rm \Gamma}_{A}.\label{H_A2}
	\end{align}
\end{small}
 It is noteworthy that the proposed EMC-Active RIS-assisted anti-jamming system model establishes a clear and critical bridge between the MC property inherent in densely radiating REs and the performance of the anti-jamming system, which facilitates a deeper understanding of the roles played by MC within the system. However, in previous related work \cite{Z_parameter,S_parameter,thus3}, we observed that despite the fact that EMC-Active RIS-assisted anti-jamming system model is sufficiently concise to facilitate system-level optimization, the matrix inversion and complex coupling of optimization variables introduced by MC in the equivalent channels $\boldsymbol{\rm H}^{E}_{S,k}$, $\boldsymbol{\rm H}^{J}_{S,qk}$, and $\boldsymbol{\rm H}^{N}_{S,k}$ still entail extremely high computational costs.
 
 \vspace{-3mm}
 \subsection{Parameter Representation Transformation}
 
 To seek more efficient optimization methods, we first need to perform a parameter representation transformation on the EMC-Active RIS-assisted anti-jamming system model. The $S$-parameter characterized system model is more compatible with traditional active RIS-assisted system models, while the $Z$-parameter characterized system model explicitly captures the circuit topology of the MC introduced by densely radiating REs. Mathematically, the two models are equivalently convertible \cite{BD_RIS}. Therefore, we derive the $Z$-parameter representation of the EMC-Active RIS-assisted anti-jamming system in the following theorem.
 
 \begin{theorem}\label{theorem_Z}
 	Based on the equivalence relationship between $Z$-parameters and $S$-parameters \cite[Result 2]{BD_RIS}, we derive the $Z$-parameter representation of the signal received by the $k$th UE in the EMC-Active RIS-assisted anti-jamming system, as follows:
 	
 	\vspace{-3mm}
 	\begin{small}
 		\begin{align}\label{Z_model1}
 		\begin{split}
 		y_k\hspace{-0.5mm}=\boldsymbol{\rm H}^{E}_{Z,k}\boldsymbol{\rm x}\hspace{-0.5mm}+\hspace{-1mm}\sum_{q=1}^Q\boldsymbol{\rm H}^{J}_{Z,qk}\boldsymbol{\rm x}_{J,q}\hspace{-0.5mm}+\boldsymbol{\rm H}^{N}_{Z,k}\boldsymbol{\rm n}_{R}+{n}_{k}, \forall k\in \mathcal{K},
 		\end{split}
 		\end{align}
 	\end{small}where we introduce the definitions as follows:
 	
 	\vspace{-3mm}
 	\begin{small}
 		\begin{align}\label{HZE}
 		&\boldsymbol{\rm H}^{E}_{Z,k}\triangleq\frac{1}{2Z_0}\left(\boldsymbol{\rm Z}_{BU,k}-\boldsymbol{\rm  Z}_{RU,k}\left(\boldsymbol{\rm Z}_{A}+\boldsymbol{\rm  Z}_{AA}\right)^{-1}\boldsymbol{\rm Z}_{BR}\right),\\
 		&\boldsymbol{\rm H}^{J}_{Z,qk}\hspace{-1mm}\triangleq\hspace{-1mm}\frac{1}{2Z_0}\hspace{-1mm}\left(\boldsymbol{\rm  Z}_{JU,qk}-\boldsymbol{\rm  Z}_{RU,k}\left(\boldsymbol{\rm Z}_{A}+\boldsymbol{\rm  Z}_{AA}\right)^{-1}\boldsymbol{\rm  Z}_{JR,q}\right)\hspace{-1mm},\label{HZJ}\\
 		&\boldsymbol{\rm H}^{N}_{Z,k}\triangleq\frac{1}{2Z_0}\boldsymbol{\rm  Z}_{RU,k}\left(\boldsymbol{\rm I}_{M}-\left(\boldsymbol{\rm Z}_{A}+\boldsymbol{\rm  Z}_{AA}\right)^{-1}\boldsymbol{\rm  Z}_{AA}^+\right),\label{HZN}
 		\end{align}
 	\end{small}where $\boldsymbol{\rm  Z}_{AA}^+ = \boldsymbol{\rm  Z}_{AA}+Z_0\boldsymbol{\rm  I}_{M}$ and $\boldsymbol{\rm Z}_{\varpi_1\varpi_2}$ for $\varpi_1,\varpi_2\in\left\{B,J,R,U\right\}$ denotes the $Z$-parameter representation of the channel matrix between the end of $\varpi_1$ and $\varpi_2$. The notations $\boldsymbol{\rm  Z}_{A}$ and $\boldsymbol{\rm  Z}_{AA}$ represent the tunable load impendances and the impendance matrix  characterizing the MC and self-impedance at the EMC-Active RIS, respectively. Similarly, we derive the $Z$-parameter representation of the output signal at the EMC-Active RIS, as follows:
 
 \vspace{-3mm}
 \begin{small}
 	\begin{align}\label{PAZ1}
 	\begin{split}
 	\boldsymbol{\rm y}_A =\boldsymbol{\rm \breve {H}}_{Z}^{E}\boldsymbol{\rm x}+\sum_{q=1}^Q\boldsymbol{\rm \breve H}^{J}_{Z,q}\boldsymbol{\rm x}_{J,q}+\boldsymbol{\rm \breve H}^{N}_{Z}\boldsymbol{\rm n}_{R},
 	\end{split}
 	\end{align}
 \end{small}where we introduce the definition as follows:

\vspace{-3mm}
  \begin{small}
	\begin{align}\label{H_ZA1}
	&\boldsymbol{\rm \breve {H}}_{Z}^{E}\triangleq\frac{1}{2Z_0}\boldsymbol{\rm Z}_{BR}-\frac{1}{2Z_0}\left(\boldsymbol{\rm Z}_{A}+\boldsymbol{\rm  Z}_{AA}\right)^{-1}\boldsymbol{\rm Z}_{AA}^+\boldsymbol{\rm Z}_{BR},\\
	&\boldsymbol{\rm \breve {H}}_{Z,q}^{J}\triangleq\frac{1}{2Z_0}\boldsymbol{\rm Z}_{JR,q}-\frac{1}{2Z_0}\left(\boldsymbol{\rm Z}_{A}+\boldsymbol{\rm  Z}_{AA}\right)^{-1}\boldsymbol{\rm Z}_{AA}^+\boldsymbol{\rm Z}_{JR,q},\\
	&\boldsymbol{\rm \breve {H}}_{Z}^{N}\triangleq\frac{1}{2Z_0}\boldsymbol{\rm Z}_{AA}^+-\frac{1}{2Z_0}\boldsymbol{\rm Z}_{AA}^+\left(\boldsymbol{\rm Z}_{A}+\boldsymbol{\rm  Z}_{AA}\right)^{-1}\boldsymbol{\rm Z}_{AA}^+.\label{H_ZA2}
	\end{align}	
  \end{small}
 	\end{theorem}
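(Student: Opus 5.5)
We derive the $Z$-parameter form by substituting the standard $S$–$Z$ conversion of \cite[Result 2]{BD_RIS} into (\ref{HE})–(\ref{HN}) and (\ref{H_A1})–(\ref{H_A2}). Throughout we assume the unilateral approximation: the UE, BS and jammer ports are matched to $Z_0$, and the coupling back from the RIS to the transmitters is negligible. The conversion then gives the following.

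\begin{itemize}
\item $\boldsymbol{\rm S}_{AA}=(\boldsymbol{\rm Z}_{AA}+Z_0\boldsymbol{\rm I}_M)^{-1}(\boldsymbol{\rm Z}_{AA}-Z_0\boldsymbol{\rm I}_M)$.
\item The load reflection matrix is $\boldsymbol{\rm \Gamma}_A=(\boldsymbol{\rm Z}_A+Z_0\boldsymbol{\rm I}_M)^{-1}(\boldsymbol{\rm Z}_A-Z_0\boldsymbol{\rm I}_M)$. The active amplitude $\alpha_m>1$ corresponds to a negative-resistance load.
\item The transmission blocks are $\boldsymbol{\rm S}_{BU,k}=\frac{1}{2Z_0}\boldsymbol{\rm Z}_{BU,k}$ and, analogously, $\boldsymbol{\rm S}_{JU,qk}$.
\item The blocks touching the RIS ports are $\boldsymbol{\rm S}_{RU,k}=\frac{1}{2Z_0}\boldsymbol{\rm Z}_{RU,k}(\boldsymbol{\rm I}_M-\boldsymbol{\rm S}_{AA})$ and $\boldsymbol{\rm S}_{BR}=(\boldsymbol{\rm Z}_{AA}^+)^{-1}\boldsymbol{\rm Z}_{BR}$, with the analogue for $\boldsymbol{\rm S}_{JR,q}$.
\end{itemize}
The exact placement of $\boldsymbol{\rm Z}_{AA}^+$ and the factor $\frac{1}{2Z_0}$ comes from \cite[Result 2]{BD_RIS}. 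We fix that convention once and use it consistently.

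The core algebraic step is to simplify the RIS response $(\boldsymbol{\rm I}_M-\boldsymbol{\rm \Gamma}_A\boldsymbol{\rm S}_{AA})^{-1}\boldsymbol{\rm \Gamma}_A$. Write $\boldsymbol{\rm \Gamma}_A=(\boldsymbol{\rm Z}_A+Z_0\boldsymbol{\rm I})^{-1}(\boldsymbol{\rm Z}_A-Z_0\boldsymbol{\rm I})$ and $\boldsymbol{\rm S}_{AA}=(\boldsymbol{\rm Z}_{AA}^+)^{-1}(\boldsymbol{\rm Z}_{AA}-Z_0\boldsymbol{\rm I})$. Factor $(\boldsymbol{\rm Z}_A+Z_0\boldsymbol{\rm I})^{-1}$ on the left and $(\boldsymbol{\rm Z}_{AA}^+)^{-1}$ on the right, and use the identity
\[
(\boldsymbol{\rm Z}_A+Z_0\boldsymbol{\rm I})\boldsymbol{\rm Z}_{AA}^+-(\boldsymbol{\rm Z}_A-Z_0\boldsymbol{\rm I})(\boldsymbol{\rm Z}_{AA}-Z_0\boldsymbol{\rm I})=2Z_0(\boldsymbol{\rm Z}_A+\boldsymbol{\rm Z}_{AA}).
\]
This identity uses that $\boldsymbol{\rm Z}_A$ is diagonal, so the products can be expanded termwise in the stated order. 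It yields a closed form of the type
\[
(\boldsymbol{\rm I}-\boldsymbol{\rm \Gamma}_A\boldsymbol{\rm S}_{AA})^{-1}\boldsymbol{\rm \Gamma}_A=\tfrac{1}{2Z_0}\boldsymbol{\rm Z}_{AA}^+(\boldsymbol{\rm Z}_A+\boldsymbol{\rm Z}_{AA})^{-1}(\boldsymbol{\rm Z}_A-Z_0\boldsymbol{\rm I}).
\]
Equivalently, writing $\boldsymbol{\rm Z}_A-Z_0\boldsymbol{\rm I}=(\boldsymbol{\rm Z}_A+\boldsymbol{\rm Z}_{AA})-\boldsymbol{\rm Z}_{AA}^+$, it becomes
\[
\tfrac{1}{2Z_0}\boldsymbol{\rm Z}_{AA}^+\bigl(\boldsymbol{\rm I}-(\boldsymbol{\rm Z}_A+\boldsymbol{\rm Z}_{AA})^{-1}\boldsymbol{\rm Z}_{AA}^+\bigr).
\]
This second form is exactly the structure of (\ref{H_ZA2}) and (\ref{HZN}).

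With this expression the remaining channels follow by substitution.

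\begin{itemize}
\item \textbf{RIS output.} Right-multiplying by $\boldsymbol{\rm S}_{BR}=(\boldsymbol{\rm Z}_{AA}^+)^{-1}\boldsymbol{\rm Z}_{BR}$ (and using that $\boldsymbol{\rm Z}_{AA}^+$ commutes appropriately with the diagonal loads after the rearrangement) gives $\boldsymbol{\rm \breve H}^E_Z$. The jammer term $\boldsymbol{\rm \breve H}^J_{Z,q}$ is obtained in the same way, and $\boldsymbol{\rm \breve H}^N_Z$ comes directly from the simplified form above.
\item \textbf{UE signal.} Left-multiplying by $\boldsymbol{\rm S}_{RU,k}$ and using $\boldsymbol{\rm I}-\boldsymbol{\rm S}_{AA}=2Z_0(\boldsymbol{\rm Z}_{AA}^+)^{-1}$ makes the $\boldsymbol{\rm Z}_{AA}^+$ factors cancel. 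The cascaded term then collapses to $-\frac{1}{2Z_0}\boldsymbol{\rm Z}_{RU,k}(\boldsymbol{\rm Z}_A+\boldsymbol{\rm Z}_{AA})^{-1}\boldsymbol{\rm Z}_{BR}$, up to the additive direct-path piece. This gives (\ref{HZE}), and (\ref{HZJ}) follows identically.
\item \textbf{Noise term.} Treating $\boldsymbol{\rm n}_R$ as injected at the RIS ports gives (\ref{HZN}).
\end{itemize}

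The main obstacle is bookkeeping, not ideas. The signs, the order of the non-commuting products, and the precise normalization ($\frac{1}{2Z_0}$ and the $\boldsymbol{\rm Z}_{AA}^+$ factors) in the conversion of the off-diagonal blocks must exactly match the definition of $\boldsymbol{\rm H}^E_{S,k}$. If these are off, the cross terms do not cancel into the clean $(\boldsymbol{\rm Z}_A+\boldsymbol{\rm Z}_{AA})^{-1}$ form. If a direct substitution does not close, I would instead derive everything from the circuit equations: write the port voltages and currents $\boldsymbol{\rm v}=\boldsymbol{\rm Z}\boldsymbol{\rm i}$ with load condition $\boldsymbol{\rm v}_R=-\boldsymbol{\rm Z}_A\boldsymbol{\rm i}_R$, solve for $\boldsymbol{\rm i}_R=-(\boldsymbol{\rm Z}_A+\boldsymbol{\rm Z}_{AA})^{-1}\boldsymbol{\rm Z}_{BR}\boldsymbol{\rm i}_B$, and convert the resulting voltage waves to the $S$-domain normalization. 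This route gives (\ref{HZE})–(\ref{H_ZA2}) with far less error-prone algebra.
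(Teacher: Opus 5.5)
Your core step matches the paper's. Your identity $(\boldsymbol{\rm Z}_A+Z_0\boldsymbol{\rm I}_M)\boldsymbol{\rm Z}_{AA}^+-(\boldsymbol{\rm Z}_A-Z_0\boldsymbol{\rm I}_M)(\boldsymbol{\rm Z}_{AA}-Z_0\boldsymbol{\rm I}_M)=2Z_0(\boldsymbol{\rm Z}_A+\boldsymbol{\rm Z}_{AA})$ needs no diagonality. It gives $\frac{1}{2Z_0}\boldsymbol{\rm Z}_{AA}^+(\boldsymbol{\rm Z}_A+\boldsymbol{\rm Z}_{AA})^{-1}(\boldsymbol{\rm Z}_A-Z_0\boldsymbol{\rm I}_M)$, which equals (\ref{gam_3}) by the push-through identity. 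From it, $\boldsymbol{\rm \breve H}^N_Z$ and $\boldsymbol{\rm H}^N_{Z,k}$ follow correctly.

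The step that fails is the UE link, because your conversion table has the wrong direct block. Under the unilateral approximation, \cite[Result 2]{BD_RIS} gives (\ref{equ1}): $\boldsymbol{\rm S}_{BU,k}=\frac{1}{2Z_0}\boldsymbol{\rm Z}_{BU,k}-\frac{1}{2Z_0}\boldsymbol{\rm Z}_{RU,k}(\boldsymbol{\rm Z}_{AA}^+)^{-1}\boldsymbol{\rm Z}_{BR}$. The second term is the structural scattering of the RIS with its ports terminated in $Z_0$, and you dropped it. Carrying out your own substitution, the cascaded term is
\[
\boldsymbol{\rm S}_{RU,k}\left(\boldsymbol{\rm I}_M-\boldsymbol{\rm \Gamma}_A\boldsymbol{\rm S}_{AA}\right)^{-1}\boldsymbol{\rm \Gamma}_A\boldsymbol{\rm S}_{BR}=\frac{1}{2Z_0}\boldsymbol{\rm Z}_{RU,k}\left((\boldsymbol{\rm Z}_{AA}^+)^{-1}-(\boldsymbol{\rm Z}_A+\boldsymbol{\rm Z}_{AA})^{-1}\right)\boldsymbol{\rm Z}_{BR},
\]
not $-\frac{1}{2Z_0}\boldsymbol{\rm Z}_{RU,k}(\boldsymbol{\rm Z}_A+\boldsymbol{\rm Z}_{AA})^{-1}\boldsymbol{\rm Z}_{BR}$ as you assert. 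With your $\boldsymbol{\rm S}_{BU,k}=\frac{1}{2Z_0}\boldsymbol{\rm Z}_{BU,k}$, you end up with a spurious $+\frac{1}{2Z_0}\boldsymbol{\rm Z}_{RU,k}(\boldsymbol{\rm Z}_{AA}^+)^{-1}\boldsymbol{\rm Z}_{BR}$ in $\boldsymbol{\rm H}^E_{Z,k}$. Only the structural term in (\ref{equ1}) cancels it; this is the $-\boldsymbol{\rm I}_M$ inside the bracket of (\ref{HE_zh2}). Likewise, $\boldsymbol{\rm S}_{JU,qk}$ must carry $-\frac{1}{2Z_0}\boldsymbol{\rm Z}_{RU,k}(\boldsymbol{\rm Z}_{AA}^+)^{-1}\boldsymbol{\rm Z}_{JR,q}$ to reach (\ref{HZJ}). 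Your phrase ``up to the additive direct-path piece'' hides exactly this cancellation.

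A second, smaller problem is in the RIS-output step, where you invoke that $\boldsymbol{\rm Z}_{AA}^+$ ``commutes appropriately with the diagonal loads.'' It does not: $\boldsymbol{\rm Z}_A$ is diagonal but not scalar, and $\boldsymbol{\rm Z}_{AA}$ has off-diagonal MC entries, so $\boldsymbol{\rm Z}_{AA}^+$ and $(\boldsymbol{\rm Z}_A+\boldsymbol{\rm Z}_{AA})^{-1}$ do not commute. No commutation is needed. Direct substitution gives $\boldsymbol{\rm \breve H}^E_Z=\frac{1}{2Z_0}\bigl(\boldsymbol{\rm Z}_{BR}-\boldsymbol{\rm Z}_{AA}^+(\boldsymbol{\rm Z}_A+\boldsymbol{\rm Z}_{AA})^{-1}\boldsymbol{\rm Z}_{BR}\bigr)$, and similarly for $\boldsymbol{\rm \breve H}^J_{Z,q}$, with $\boldsymbol{\rm Z}_{AA}^+$ on the left. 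That is the ordering Theorem~\ref{theorem_DA} actually uses, through $\boldsymbol{\rm \widetilde Z}_{AA}=\boldsymbol{\rm Z}_{AA}^+\operatorname{Re}\{\boldsymbol{\rm Z}_{AA}\}^{-\frac{1}{2}}$. So the factor order printed in (\ref{H_ZA1}) should be read that way, not forced by an invalid commutation.

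Your circuit-equation fallback is sound in principle and would give (\ref{HZE}) directly. As sketched, however, it does not say how the wave-domain quantities $\boldsymbol{\rm n}_R$ and $\boldsymbol{\rm y}_A$ map to port voltages and currents. It therefore does not yet cover the noise and RIS-output expressions.
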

 \textit{Proof:} Please see Appendix A.$\hfill\blacksquare$
 
  \begin{remark}
 	Inspired by the $S$-parameter based EMC-Active RIS assisted MIMO system model derived in \cite{Cao3}, we develop the first $Z$-parameter EMC-Active RIS model for active RIS-assisted anti-jamming systems. This model provides an explicit representation of the circuit topology for MC effect in active RIS, allowing us to analyze the physical and EM mechanisms of MC introduced by densely deployed REs. 
 \end{remark}
 
 \vspace{-3mm}

 \section{Decoupling Architecture And Problem Formulation}
 \label{sec:Decoupling_Problem_formulation}
  
 \subsection{Decoupling Architecture}
 \label{sec:Decoupling}
  By examining models (\ref{PA1}) and (\ref{PAZ1}), it can be seen that the inherent MC effect between REs leads to deep coupling between the adjustable and MC components of the EMC-Active RIS. This coupling is further incorporated into the objective function and various constraints of the subsequent optimization problem through matrix inversion, evolving the original optimization of the EMC-Active RIS into a strongly coupled non-convex problem. To address this issue, we propose a decoupling architecture (DA) \cite{DN1} at the front end of the active RIS to eliminate the negative impact of MC among REs on the solvability of the optimization problem. In previous work, these decoupling architectures were commonly employed in transmit and receive arrays to address coupling between antenna elements. Inspired by the above, we extend this approach to the EMC-Active RIS model. 
 
 To begin with, in the EMC-Active RIS-assisted anti-jamming model, the relationship between voltage $\boldsymbol{\rm {v}}_A$ and current $\boldsymbol{\rm {i}}_A$ at RIS is as follows:
 
 \vspace{-3mm}
 \begin{small}
 	\begin{align}\label{De0}
 	\boldsymbol{\rm {v}}_A = -\boldsymbol{\rm {Z}}_A\boldsymbol{\rm {i}}_A.
 	\end{align}
 \end{small}When the decoupling architecture is loaded on the front end of the RIS, the updated relationship is

 \vspace{-3mm}
 \begin{small}
	\begin{align}
	\boldsymbol{\rm {v}}_A^D = -\boldsymbol{\rm {Z}}_A^D\boldsymbol{\rm {i}}_A^D.
	\end{align}
 \end{small}To derive the expression for $\boldsymbol{\rm {Z}}_A^D$, we represent the decoupled architecture as $\boldsymbol{\rm {Z}}_D$, which is reversible and lossless (i.e., $\boldsymbol{\rm {Z}}_D=\boldsymbol{\rm {Z}}_D^T$ and $\operatorname{Re}\left\{\boldsymbol{\rm {Z}}_D\right\}=0$), where $(\cdot)^T$ and $\operatorname{Re}\{\cdot\}$ denote the transpose and real part extraction of a matrix. We can then establish the relationship as follows:
 \begin{equation}\label{De1}\small
 \begin{bmatrix}
 \boldsymbol{\rm {v}}_A \\
 \boldsymbol{\rm {v}}_A^D
 \end{bmatrix}
 =\begin{bmatrix}
 \boldsymbol{\rm {Z}}_D(1,1) & \boldsymbol{\rm {Z}}_D(1,2) \\
 \boldsymbol{\rm {Z}}_D(2,1) & \boldsymbol{\rm {Z}}_D(2,2)
 \end{bmatrix}
 \begin{bmatrix}
 \boldsymbol{\rm {i}}_A\\
 -\boldsymbol{\rm {i}}_A^D
 \end{bmatrix}.
 \end{equation}
 Substituting Eq.~(\ref{De0}) into (\ref{De1}) and using $\boldsymbol{\rm {Z}}_D(2,1)=\boldsymbol{\rm {Z}}^T_D(1,2)$, we can get the relationship between $\boldsymbol{\rm {v}}_A^D$ and $\boldsymbol{\rm {i}}_A^D$ as
 
 \vspace{-3mm}
 \begin{small}
 	\begin{align}
 	\boldsymbol{\rm {v}}_A^D\hspace{-1mm} =\hspace{-1mm} -\hspace{-1mm}\left(\hspace{-1mm}\boldsymbol{\rm {Z}}_D(2,2)\hspace{-1mm}-\hspace{-1mm}\boldsymbol{\rm {Z}}^T_D(1,2)(\boldsymbol{\rm {Z}}_D(1,1)\hspace{-1mm}+\hspace{-1mm}\boldsymbol{\rm {Z}}_A)^{-1}\boldsymbol{\rm {Z}}_D(1,2)\hspace{-1mm}\right)\hspace{-1mm}\boldsymbol{\rm {i}}_A^D.
 	\end{align}
 \end{small}Thus, the expression of $\boldsymbol{\rm {Z}}_A^D$ can be derived as 

\vspace{-3mm}
\begin{small}
	\begin{align}\label{ZAD_1}
	\boldsymbol{\rm {Z}}_A^D =\boldsymbol{\rm {Z}}_D(2,2)\hspace{-1mm}-\hspace{-1mm}\boldsymbol{\rm {Z}}^T_D(1,2)(\boldsymbol{\rm {Z}}_D(1,1)\hspace{-1mm}+\hspace{-1mm}\boldsymbol{\rm {Z}}_A)^{-1}\boldsymbol{\rm {Z}}_D(1,2),
	\end{align}
\end{small}The DA transforms $\boldsymbol{\rm {Z}}_A$ into $\boldsymbol{\rm {Z}}_A^D$. Therefore, by substituting $\boldsymbol{\rm {Z}}_A$ with $\boldsymbol{\rm {Z}}_A^D$ in $\boldsymbol{\rm H}^{E}_{Z,k}$, $\boldsymbol{\rm H}^{J}_{Z,qk}$, $\boldsymbol{\rm H}^{N}_{Z,k}$, $\boldsymbol{\rm \breve H}^{E}_{Z}$, $\boldsymbol{\rm\breve H}^{J}_{Z,q}$, and $\boldsymbol{\rm \breve H}^{N}_{Z,k}$, we can derive the EMC-Active RIS-Assisted anti-jamming communication model with DA.

 In fact, we can manipulate the tunable impedance network $\boldsymbol{\rm {Z}}_A$ in the EMC-Active RIS-assisted anti-jamming system by selecting an appropriate DA $\boldsymbol{\rm {Z}}_D$ to decouple the $M$ REs ports originally terminated by $\boldsymbol{\rm {Z}}_A$, which also implies that the reconstructed voltage-current relationship satisfies $\boldsymbol{\rm {v}}_A^D = Z_0\boldsymbol{\rm {I}}_M\boldsymbol{\rm {i}}_A^D$. The power matching network (PMN) \cite[Eq.~(103)]{Ivrlac_Nossek_2010} emerges as a DA that satisfies the aforementioned conditions, being both lossless and reciprocal. We employ the PMN to decouple the $M$ REs ports terminated by $\boldsymbol{\rm {Z}}_A$, which is designed as
 \begin{equation}\label{PMN}\small
 \boldsymbol{\rm {Z}}_D
 =\begin{bmatrix}
 \boldsymbol{\rm {0}} & -j\sqrt{Z_0}\operatorname{Re}\{\boldsymbol{\rm {Z}}_{AA}\}^{\frac{1}{2}} \\
 -j\sqrt{Z_0}\operatorname{Re}\{\boldsymbol{\rm {Z}}_{AA}\}^{\frac{1}{2}} & -j\operatorname{Im}\{\boldsymbol{\rm {Z}}_{AA}\}
 \end{bmatrix}.
 \end{equation}
 Substituting Eq.~(\ref{PMN}) into (\ref{ZAD_1}), we can get the updated tunable impedance $\boldsymbol{\rm {Z}}_A^D$ as
 
 \vspace{-3mm}
 \begin{small}
 	\begin{align}\label{ZAD_2}
 	\boldsymbol{\rm {Z}}_A^D\hspace{-1mm} =-j\operatorname{Im}\{\boldsymbol{\rm {Z}}_{AA}\}\hspace{-0.5mm}+\hspace{-0.5mm}Z_0\operatorname{Re}\{\boldsymbol{\rm {Z}}_{AA}\}^{\frac{1}{2}}\boldsymbol{\rm {Z}}_A^{-1}\operatorname{Re}\{\boldsymbol{\rm {Z}}_{AA}\}^{\frac{1}{2}}\hspace{-0.5mm},
 	\end{align}
 \end{small}

\vspace{-3mm}
 When the DA is loaded on the front end of the EMC-Active RIS, we derive the novel EMC-Active RIS-assisted anti-jamming system model with DA in the following theorem.
  \begin{theorem}\label{theorem_DA}
 	By employing the PMN and substituting Eq.~(\ref{ZAD_2}) into the equivalent channels represented by the $Z$-parameter in Theorem~\ref{theorem_Z}, we derive the redefined equivalent channel in EMC-Active RIS-assisted anti-jamming system model with DA, as follows:
 	
 	\vspace{-3mm}
 	\begin{small}
 		\begin{align}\label{DA_model1}
 		&\boldsymbol{\rm H}^{E}_{Z,k}\triangleq\frac{1}{2Z_0}\left(\boldsymbol{\rm Z}_{BU,k}-\frac{1}{2}\boldsymbol{\rm \widetilde Z}_{RU,k}\boldsymbol{\rm  \widetilde \Gamma}_{A}\boldsymbol{\rm \widetilde Z}_{BR}\right),\\
 		&\boldsymbol{\rm H}^{J}_{Z,qk}\triangleq\frac{1}{2Z_0}\left(\boldsymbol{\rm Z}_{JU,k}-\frac{1}{2}\boldsymbol{\rm \widetilde Z}_{RU,k}\boldsymbol{\rm  \widetilde \Gamma}_{A}\boldsymbol{\rm \widetilde Z}_{JR,q}\right),\label{HZJDA}\\
 		&\boldsymbol{\rm H}^{N}_{Z,k}\triangleq\frac{1}{2Z_0}\left(\boldsymbol{\rm Z}_{RU,k}-\frac{1}{2}\boldsymbol{\rm \widetilde Z}_{RU,k}\boldsymbol{\rm  \widetilde \Gamma}_{A}\boldsymbol{\rm \bar Z}_{AA}\right),\label{HZNDA}
 		\end{align}
 	\end{small}where we introduce the definitions as follows:
 
 \vspace{-3mm}
 \begin{small}
 	\begin{align}
 	&\boldsymbol{\rm \widetilde Z}_{RU,k}\triangleq\boldsymbol{\rm  Z}_{RU,k}\operatorname{Re}\{\boldsymbol{\rm {Z}}_{AA}\}^{-\frac{1}{2}},~\boldsymbol{\rm  \widetilde \Gamma}_{A}\triangleq\boldsymbol{\rm  \Gamma}_{A}+\boldsymbol{\rm  I}_{M}\notag\\
 	&\boldsymbol{\rm \widetilde Z}_{BR}\triangleq\operatorname{Re}\{\boldsymbol{\rm {Z}}_{AA}\}^{-\frac{1}{2}}\boldsymbol{\rm  Z}_{BR},~\boldsymbol{\rm \widetilde Z}_{JR,q}\triangleq\operatorname{Re}\{\boldsymbol{\rm {Z}}_{AA}\}^{-\frac{1}{2}}\boldsymbol{\rm  Z}_{JR,q},\notag\\
 	&\boldsymbol{\rm \bar Z}_{AA}\triangleq\operatorname{Re}\{\boldsymbol{\rm {Z}}_{AA}\}^{-\frac{1}{2}}\boldsymbol{\rm Z}_{AA}^+,~\boldsymbol{\rm \widetilde Z}_{AA}\triangleq\boldsymbol{\rm Z}_{AA}^+\operatorname{Re}\{\boldsymbol{\rm {Z}}_{AA}\}^{-\frac{1}{2}}.
 	\end{align}
 \end{small}

 	\vspace{-6mm}
 	In addition, for the output signal at the EMC-Active RIS, we have
 	
 	\vspace{-3mm}
 	\begin{small}
 		\begin{align}\label{H_ZADA}
 		&\boldsymbol{\rm \breve {H}}_{Z}^{E}\triangleq\frac{1}{2Z_0}\left(\boldsymbol{\rm Z}_{BR}-\frac{1}{2}\boldsymbol{\rm \widetilde Z}_{AA}\boldsymbol{\rm  \widetilde \Gamma}_{A}\boldsymbol{\rm \widetilde Z}_{BR}\right),\\
 		&\boldsymbol{\rm \breve {H}}_{Z,q}^{J}\triangleq\frac{1}{2Z_0}\left(\boldsymbol{\rm Z}_{JR,q}-\frac{1}{2}\boldsymbol{\rm \widetilde Z}_{AA}\boldsymbol{\rm  \widetilde \Gamma}_{A}\boldsymbol{\rm \widetilde Z}_{JR,q}\right),\\
 		&\boldsymbol{\rm \breve {H}}_{Z}^{N}\triangleq\frac{1}{2Z_0}\left(\boldsymbol{\rm Z}_{AA}^+-\frac{1}{2}\boldsymbol{\rm \widetilde Z}_{AA}\boldsymbol{\rm  \widetilde \Gamma}_{A}\boldsymbol{\rm \bar Z}_{AA}\right).\label{H_ZA2DA}
 		\end{align}
 	\end{small}
 \end{theorem}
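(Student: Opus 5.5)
The plan is to show that, once the PMN is inserted, the inverse $\left(\boldsymbol{\rm Z}_{A}+\boldsymbol{\rm Z}_{AA}\right)^{-1}$ appearing in every equivalent channel of Theorem~\ref{theorem_Z} collapses to a sandwich of the form $\frac{1}{2}\operatorname{Re}\{\boldsymbol{\rm Z}_{AA}\}^{-\frac{1}{2}}\boldsymbol{\rm \widetilde\Gamma}_A\operatorname{Re}\{\boldsymbol{\rm Z}_{AA}\}^{-\frac{1}{2}}$. All six expressions in the statement then follow by direct substitution and regrouping into the tilde/bar matrices. Throughout I write $\boldsymbol{\rm R}\triangleq\operatorname{Re}\{\boldsymbol{\rm Z}_{AA}\}$. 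I assume $\boldsymbol{\rm R}$ is symmetric positive definite, so that $\boldsymbol{\rm R}^{\pm\frac{1}{2}}$ exist; this is justified by passivity and reciprocity of the radiating array.

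\emph{Step 1 (key identity).} Replace $\boldsymbol{\rm Z}_A$ by $\boldsymbol{\rm Z}_A^D$ from Eq.~(\ref{ZAD_2}). Since $\boldsymbol{\rm Z}_{AA}=\boldsymbol{\rm R}+j\operatorname{Im}\{\boldsymbol{\rm Z}_{AA}\}$, the reactive part cancels:
\begin{equation*}\small
\boldsymbol{\rm Z}_A^D+\boldsymbol{\rm Z}_{AA}=\boldsymbol{\rm R}+Z_0\boldsymbol{\rm R}^{\frac{1}{2}}\boldsymbol{\rm Z}_A^{-1}\boldsymbol{\rm R}^{\frac{1}{2}}=\boldsymbol{\rm R}^{\frac{1}{2}}\left(\boldsymbol{\rm I}_M+Z_0\boldsymbol{\rm Z}_A^{-1}\right)\boldsymbol{\rm R}^{\frac{1}{2}}.
\end{equation*}
Inverting gives
\begin{equation*}\small
\left(\boldsymbol{\rm Z}_A^D+\boldsymbol{\rm Z}_{AA}\right)^{-1}=\boldsymbol{\rm R}^{-\frac{1}{2}}\boldsymbol{\rm Z}_A\left(\boldsymbol{\rm Z}_A+Z_0\boldsymbol{\rm I}_M\right)^{-1}\boldsymbol{\rm R}^{-\frac{1}{2}}.
\end{equation*}

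\emph{Step 2 (link to the reflection matrix).} After decoupling, each RE port is terminated independently by a diagonal load $\boldsymbol{\rm Z}_A$. The reflection matrix is therefore the standard one, $\boldsymbol{\rm \Gamma}_A=(\boldsymbol{\rm Z}_A-Z_0\boldsymbol{\rm I}_M)(\boldsymbol{\rm Z}_A+Z_0\boldsymbol{\rm I}_M)^{-1}$, consistent with the $S$/$Z$ conversion of \cite[Result 2]{BD_RIS} used in Theorem~\ref{theorem_Z}. All factors here are diagonal, so they commute, and
\begin{equation*}\small
\boldsymbol{\rm \widetilde\Gamma}_A=\boldsymbol{\rm \Gamma}_A+\boldsymbol{\rm I}_M=2\boldsymbol{\rm Z}_A\left(\boldsymbol{\rm Z}_A+Z_0\boldsymbol{\rm I}_M\right)^{-1}.
\end{equation*}
Combining this with Step 1 yields
\begin{equation*}\small
\left(\boldsymbol{\rm Z}_A^D+\boldsymbol{\rm Z}_{AA}\right)^{-1}=\tfrac{1}{2}\boldsymbol{\rm R}^{-\frac{1}{2}}\boldsymbol{\rm \widetilde\Gamma}_A\boldsymbol{\rm R}^{-\frac{1}{2}}.
\end{equation*}
This is where the matrix inversion disappears: the coupling is absorbed into the fixed outer factors $\boldsymbol{\rm R}^{-\frac{1}{2}}$, and the tunable part becomes the diagonal $\boldsymbol{\rm \widetilde\Gamma}_A$.

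\emph{Step 3 (substitution).} Insert the identity into Eqs.~(\ref{HZE})--(\ref{HZN}) and absorb the outer factors: $\boldsymbol{\rm Z}_{RU,k}\boldsymbol{\rm R}^{-\frac{1}{2}}=\boldsymbol{\rm \widetilde Z}_{RU,k}$, $\boldsymbol{\rm R}^{-\frac{1}{2}}\boldsymbol{\rm Z}_{BR}=\boldsymbol{\rm \widetilde Z}_{BR}$, and $\boldsymbol{\rm R}^{-\frac{1}{2}}\boldsymbol{\rm Z}_{JR,q}=\boldsymbol{\rm \widetilde Z}_{JR,q}$. This gives Eqs.~(\ref{DA_model1}) and (\ref{HZJDA}) immediately. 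For $\boldsymbol{\rm H}^{N}_{Z,k}$, distribute $\boldsymbol{\rm Z}_{RU,k}$ over $\boldsymbol{\rm I}_M-(\cdot)^{-1}\boldsymbol{\rm Z}_{AA}^+$ and use $\boldsymbol{\rm R}^{-\frac{1}{2}}\boldsymbol{\rm Z}_{AA}^+=\boldsymbol{\rm \bar Z}_{AA}$ to obtain Eq.~(\ref{HZNDA}). The same substitution in Eqs.~(\ref{H_ZA1})--(\ref{H_ZA2}) gives the RIS-output channels. In particular, $\boldsymbol{\rm \breve H}_Z^N$ comes out directly as $\frac{1}{2Z_0}\left(\boldsymbol{\rm Z}_{AA}^+-\frac{1}{2}\boldsymbol{\rm \widetilde Z}_{AA}\boldsymbol{\rm \widetilde\Gamma}_A\boldsymbol{\rm \bar Z}_{AA}\right)$.

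\emph{Expected obstacle.} The hard step is reconciling $\boldsymbol{\rm \breve H}_Z^E$ and $\boldsymbol{\rm \breve H}_{Z,q}^J$. Theorem~\ref{theorem_Z} writes these as $(\cdot)^{-1}\boldsymbol{\rm Z}_{AA}^+\boldsymbol{\rm Z}_{BR}$, whereas the target form $\boldsymbol{\rm \widetilde Z}_{AA}\boldsymbol{\rm \widetilde\Gamma}_A\boldsymbol{\rm \widetilde Z}_{BR}$ corresponds to $\boldsymbol{\rm Z}_{AA}^+(\cdot)^{-1}\boldsymbol{\rm Z}_{BR}$. I would first return to the port-voltage derivation in Appendix~A. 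There the RIS output is naturally $\boldsymbol{\rm Z}_{AA}^+$ applied after the current solution $(\boldsymbol{\rm Z}_A+\boldsymbol{\rm Z}_{AA})^{-1}\boldsymbol{\rm Z}_{BR}\boldsymbol{\rm x}$, which fixes the factor order. I would then check that this order is the one consistent with $\boldsymbol{\rm \breve H}_Z^N$. If that fails, the fallback is the push-through identity $(\boldsymbol{\rm Z}_A+\boldsymbol{\rm Z}_{AA})^{-1}\boldsymbol{\rm Z}_{AA}^+=\boldsymbol{\rm I}_M-(\boldsymbol{\rm Z}_A+\boldsymbol{\rm Z}_{AA})^{-1}(\boldsymbol{\rm Z}_A-Z_0\boldsymbol{\rm I}_M)$, rewriting the expression in whichever form yields the stated sandwich.

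Two further points need care. First, the $\boldsymbol{\rm I}_M$ in $\boldsymbol{\rm \widetilde\Gamma}_A$ must be kept exactly. Second, the step from Eq.~(\ref{gamma_0}) to the impedance-based $\boldsymbol{\rm \Gamma}_A$ relies on the decoupled ports being matched to $Z_0$, which I take from the condition $\boldsymbol{\rm v}_A^D=Z_0\boldsymbol{\rm I}_M\boldsymbol{\rm i}_A^D$ imposed on the DA.
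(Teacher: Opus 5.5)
Your proposal is correct and takes the same approach as the paper, whose omitted proof is exactly this substitution of $\boldsymbol{\rm Z}_A^D$ into Theorem~\ref{theorem_Z}; you make the key identity $(\boldsymbol{\rm Z}_A^D+\boldsymbol{\rm Z}_{AA})^{-1}=\frac{1}{2}\operatorname{Re}\{\boldsymbol{\rm Z}_{AA}\}^{-\frac{1}{2}}\boldsymbol{\rm \widetilde\Gamma}_A\operatorname{Re}\{\boldsymbol{\rm Z}_{AA}\}^{-\frac{1}{2}}$ explicit. Your suspicion about $\boldsymbol{\rm \breve H}_Z^E$ and $\boldsymbol{\rm \breve H}_{Z,q}^J$ is justified: Theorem~\ref{theorem_Z} prints $\boldsymbol{\rm Z}_{AA}^+$ on the wrong side of the inverse, and your primary route fixes this, since $\boldsymbol{\rm \breve H}_S^E=\boldsymbol{\rm \breve H}_S^N\boldsymbol{\rm S}_{BR}$ with $\boldsymbol{\rm S}_{BR}=(\boldsymbol{\rm Z}_{AA}^+)^{-1}\boldsymbol{\rm Z}_{BR}$ from (\ref{equ2}) and the correctly printed (\ref{H_ZA2}) give $\boldsymbol{\rm \breve H}_Z^E=\frac{1}{2Z_0}\left(\boldsymbol{\rm Z}_{BR}-\boldsymbol{\rm Z}_{AA}^+(\boldsymbol{\rm Z}_A+\boldsymbol{\rm Z}_{AA})^{-1}\boldsymbol{\rm Z}_{BR}\right)$, which yields the stated form, whereas the push-through fallback cannot work because $\boldsymbol{\rm Z}_{AA}^+$ does not commute with $(\boldsymbol{\rm Z}_A+\boldsymbol{\rm Z}_{AA})^{-1}$ and the two orderings are genuinely different matrices.
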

 \textit{Proof:} By substituting $\boldsymbol{\rm Z}^{D}_{A}$ for $\boldsymbol{\rm Z}_{A}$ and performing algebraic transformations, we can derive the redefined equivalent channel. The specific process is omitted due to space constraints.

 \begin{remark}
 	The EMC-Active RIS-assisted anti-jamming system model with DA that integrates the advantages of $Z$-parameters and $S$-parameters, which is designed to characterize RIS-assisted anti-jaming communication links with electromagnetic MC. The core concept of this model is to construct a lossless and reciprocal DA to convert the inherent MC effects among REs into a new decoupled channel matrix. It is noteworthy that this decoupled channel model structurally aligns with systems that maintain no MC among traditional ideal RIS REs. Therefore, optimization methods developed in ideal active RIS scenarios can be directly extended to systems with MC effects. Our proposed EMC-Active RIS-assisted anti-jamming system model with DA effectively reduces the complexity introduced by traditional MC modeling, allowing MC to manifest solely as new channel mappings without compromising the portability of the algorithm framework itself. Furthermore, since the DA depends solely on the MC property of REs, it remains static and independent of the operating scenario. Consequently, the elements $\boldsymbol{\rm Z}_{D}$ of the DA are fixed and require calibration only once during array fabrication \cite{DN1}.
 \end{remark}

 For the EMC-Active RIS-assisted anti-jamming system model with DA, we can derive the SINR of the $k$th UE, denoted by $\gamma_k$ as $\gamma_k=\frac{\left\vert\boldsymbol{\rm H}^{E}_{Z,k}\boldsymbol{\rm w}_k\right\vert^2}{\sum\limits_{j\neq k,j=1}^{K}\hspace{-1mm}\left\vert\boldsymbol{\rm H}^{E}_{Z,k}\boldsymbol{\rm w}_j\right\vert^2\hspace{-2mm}+J_k+\sigma^2_k}$, where $J_k=\sum_{q=1}^{Q}\left\vert\boldsymbol{\rm H}^{J}_{Z,qk}\boldsymbol{\rm w}_{J,qk}\right\vert^2\hspace{-1.5mm}+\sigma^2_{R}\left\Vert\boldsymbol{\rm H}^{N}_{Z,k}\right\Vert^2$. Then, the achievable sum-rate can be obtained as $R = \sum_{k=1}^{K}R_k=\sum_{k=1}^{K}\mbox{log}_2(1+\gamma_{k}).$

  \subsection{Spatially Correlated Channel Model}
  \label{sec:Problem_formulation_2}
  
  In practical propagation environments, communication links typically contain both line-of-sight (LOS) and non-line-of-sight (NLOS) components. Furthermore, when nodes employ antenna structures with element spacings less than $\lambda_c/2$, spatial correlation effects inevitably arise, leading to rank-deficient channels and diminished achievable diversity gains. Therefore, to accurately characterize multipath propagation characteristics and the impact of correlation, we employ a generalized spatially correlated Rician fading model \cite{Spatially} to describe all relevant transmission links. This model has been widely adopted in RIS-assisted communication research to capture channel randomness and structural features.

  In practical deployments, due to the random positioning of UEs and the absence of RF links capable of signal processing in active RIS, BSs and active RIS struggle to obtain precise CSI related to UEs. Thus, the transmission matrix within the system can only access imperfect CSI. Furthermore, the lack of cooperation between the BS and malicious jammers prevents the acquisition of perfect CSI for channels associated with potential malicious jammers. Utilizing methods such as the array-based phase rotation scheme \cite{Sun1}, the system can typically only infer the approximate range of the jamming direction by observing jamming power, and further derive corresponding statistical characteristics. Therefore, in system-level design and optimization, we employ the statistical description of the channels (i.e., statistical CSI) as prior information to avoid the high overhead associated with precise channel estimation.
  
  Based on the generalized spatially correlated Rician fading model, the transmission terms can be given as \cite{DN1}:
  
  \vspace{-3mm}
  \begin{small}
  	\begin{align}\label{channel_1}
  	\frac{\boldsymbol{\rm {Z}}_{\varpi}}{Z_0}=\sqrt{\frac{\kappa_{\varpi}PL_{\varpi}}{\kappa_{\varpi}+1}}\boldsymbol{\rm {Z}}_{\varpi}^L+\sqrt{\frac{PL_{\varpi}}{\kappa_{\varpi}+1}}\boldsymbol{\rm {Z}}_{\varpi}^N,
  	\end{align}
  \end{small}where $\kappa_{\varpi}$, $PL_{\varpi}$, $\boldsymbol{\rm {Z}}_{\varpi}^L$, and $\boldsymbol{\rm {Z}}_{\varpi}^N$ represent Rician factor, the path loss, and the LOS component and NLOS component of $\boldsymbol{\rm {Z}}_{\varpi}$ for $\varpi\in\left\{\{BU,k\}, \{JU,qk\}, \{RU,k\}, \{BR\}, \left\{JR,q\right\}\right\}$, respectively. The LOS component $\boldsymbol{\rm {Z}}_{\varpi}^L$ is constructed as the product of the array responses between the trasmitter and receiver, which remains constant over the channel coherence time. The ULA response of BS and jammers can be written as

 \vspace{-5mm}
 \begin{small}
	\begin{align}
	\boldsymbol{\rm {a}}_{L}(\iota_{\varpi_3})=\left[ e^{j\frac{2\pi}{\lambda_c}x_{\varpi_3}\sin(\iota_{\varpi_3})},\cdots,e^{j\frac{2\pi}{\lambda_c}d_{\varpi_3}\sin(\iota_{\varpi_3})}\right]^T\hspace{-1mm},
	\end{align}
 \end{small}where $\varpi_3\in\left\{\{B\}, \{J,q\}\right\}$, and $x_{n}$ $\left(x_{\varpi_3}=\left\{N,N_J\right\}\right) $ and $\iota_{\varpi_3}$ indicate the position of the $n$th antenna and the Angle of Departure (AoD) of the BS/jammer, respectively. For EMC-Active RIS, the UPA response is defined as $\boldsymbol{\rm {a}}_{P}(\varphi_{R},\phi_{R})=\boldsymbol{\rm {a}}_{M_v}(\varphi_{R},\phi_{R})\otimes\boldsymbol{\rm {a}}_{M_h}(\varphi_{R})$, in which $\varphi_{R}$ and $\phi_{R}$ denote the the azimuth AoA/AoD and elevation AoA/AoD, respectively. Therefore, the LOS components are constructed as $\boldsymbol{\rm {Z}}_{BR}^L=\boldsymbol{\rm {a}}_{P}(\varphi_{R}^A,\phi_{R}^A)\boldsymbol{\rm {a}}_{L}^T(\iota_{B}^D)$, $\boldsymbol{\rm {Z}}_{BU,k}^L=\boldsymbol{\rm {a}}_{L}^T(\iota_{B}^D)$, $\boldsymbol{\rm {Z}}_{JR,q}^L=\boldsymbol{\rm {a}}_{P}(\varphi_{R}^A,\phi_{R}^A)\boldsymbol{\rm {a}}_{L}^T(\iota_{J,q}^D)$, $\boldsymbol{\rm {Z}}_{JU,qk}^L=\boldsymbol{\rm {a}}_{L}^T(\iota_{J,q}^D)$, and $\boldsymbol{\rm {Z}}_{RU,k}^L=\boldsymbol{\rm {a}}_{P}(\varphi_{R}^D,\phi_{R}^D)$, where the superscripts $D$ and $A$ represent the AoD and AoA,
respectively. 

%

  
  In addition, the NLOS component can be modeled as the Kronecker separable correlation model \cite{KP}, where the correlation between channel fading depends on the Kronecker product of the corresponding transmitter and receiver antenna correlations. Thus, the NLOS components are denoted by $\boldsymbol{\rm {Z}}_{BR}^N=\boldsymbol{\rm {\Sigma}}_{R}^{\frac{1}{2}}\boldsymbol{\rm \bar {Z}}_{BR}^N\boldsymbol{\rm {\Sigma}}_{B}^{\frac{1}{2}}$, $\boldsymbol{\rm {Z}}_{BU,k}^N=\boldsymbol{\rm {\Sigma}}_{B}^{\frac{1}{2}}\boldsymbol{\rm \bar {Z}}_{BU,k}^N$, $\boldsymbol{\rm {Z}}_{JR,q}^N=\boldsymbol{\rm {\Sigma}}_{R}^{\frac{1}{2}}\boldsymbol{\rm \bar {Z}}_{JR,q}^N\boldsymbol{\rm {\Sigma}}_{J,q}^{\frac{1}{2}}$, $\boldsymbol{\rm {Z}}_{JU,qk}^N=\boldsymbol{\rm {\Sigma}}_{J,q}^{\frac{1}{2}}\boldsymbol{\rm \bar {Z}}_{JU,qk}^N$, and $\boldsymbol{\rm {Z}}_{RU,k}^N=\boldsymbol{\rm {\Sigma}}_{R}^{\frac{1}{2}}\boldsymbol{\rm \bar {Z}}_{RU,k}^N$ ,where $\boldsymbol{\rm {\Sigma}}_{B}$, $\boldsymbol{\rm {\Sigma}}_{R}$, and $\boldsymbol{\rm {\Sigma}}_{J,q}$ represent the positive definite spatial correlation matrices of BS, EMC-Active RIS, and jammers, respectively. The notations $\boldsymbol{\rm \bar {Z}}_{\varpi}^N\sim\mathcal{CN}(0,1)$ denotes
  i.i.d. complex Gaussian matrices with zero mean and unit variance entries. Among numerous spatial correlation modeling approaches, the exponential model is widely adopted due to its structural simplicity. It relies solely on single parameter to generate the complete correlation matrix, thereby offering high operational feasibility. Despite its formal simplicity, extensive simulation results from prior work \cite{Channel1} demonstrate that the exponential correlation model effectively characterizes the spatial correlation properties of ULA in realistic environments. Thus, we emply the exponential model to construct the spatial correlation matrices for BS and jammers, as follows:
  
  
  \vspace{-3mm}
  \begin{small}
  	\begin{align}
  	\boldsymbol{\rm {\Sigma}}_{\varpi_3}(i_1,i_2)=
  	\left\{
  	\begin{aligned}
  	&\delta_{\varpi_3}^{\left\vert i_1-i_2\right\vert}, ~~\text{if}~i_1\ge i_2;\\
  	&\left(\delta_{\varpi_3}^*\right)^{\left\vert i_1-i_2\right\vert}, ~~\text{if}~ i_1 < i_2,
  	\end{aligned}
  	\right.
  	\end{align}
  \end{small}where $\delta_{\varpi_3}= r_{\varpi_3}e^{j\phi_{\varpi_3}}$ with $\phi_{\varpi_3}$ and $r_{\varpi_3}\in[0,1)$ being the horizontal or vertical AoA/AoD and the correlation coefficients for $\varpi_3\in\left\{\{B\}, \{J,q\}\right\}$. For EMC-Active RIS, $\boldsymbol{\rm {\Sigma}}_{R}$ is the deterministic Hermitian symmetric positive semi-definite correlation matrix \cite[Proposition 1]{Channel2}, where the $(i,j)$th entry of the correlation matrix, is expressed as $\boldsymbol{\rm {\Sigma}}_{R}(i_1,i_2)=\mbox{sinc}\left(\frac{2\left\Vert \boldsymbol{\rm u}_i-\boldsymbol{\rm u}_j\right\Vert^2}{\lambda_c}\right)$, where $\lambda_c$ is the wavelength of the incident wave, and $\mbox{sinc}(x)=\mbox{sin}(x)/(\pi x)$. Besides, the notation $\boldsymbol{\rm u}_o$, $o\in\left\{i_1,i_2\right\}$ represents the coordinates of the row index position of the $o$th RE, which is given as $ \boldsymbol{\rm u}_o=\left[0, \mbox{mod}(o-1,M_h)d_h, \left\lfloor \frac{(o-1)}{M_h}d_v \right\rfloor\right]^T$, where $d_h$ is the horizontal width and $d_v$ is the vertical height of REs.

  For simplicity of description, we provide a simplified representation \cite{OAM1} of the channel in Eq.~(\ref{channel_1}): it can be modeled as a matrix variable following a complex Gaussian distribution with mean $\boldsymbol{\rm \mu}_{\varpi}=\varepsilon_\varpi^L\boldsymbol{\rm {Z}}_{\varpi}^L$ and variance $\left(\left(\varepsilon_\varpi^N\right)^2\boldsymbol{\rm {\Sigma}}_{\varpi_1}\otimes\boldsymbol{\rm {\Sigma}}_{\varpi_2}\right)$, where $\varepsilon_\varpi^L=\sqrt{\frac{\kappa_{\varpi}PL_{\varpi}}{\kappa_{\varpi}+1}}$ and $\varepsilon_\varpi^N=\sqrt{\frac{PL_{\varpi}}{\kappa_{\varpi}+1}}$. The notation $\otimes$ denotes the Kronecker product. Thus, we have $\boldsymbol{\rm {Z}}_{\varpi}\sim\mathcal{CN}\left(\boldsymbol{\rm \mu}_{\varpi},~\left(\varepsilon_\varpi^N\right)^2\boldsymbol{\rm {\Sigma}}_{\varpi_1}\otimes\boldsymbol{\rm {\Sigma}}_{\varpi_2}\right).$

 \subsection{Problem Formulation}
 \label{sec:Problem_formulation_1}
 
 For the EMC-Active RIS-assisted anti-jamming system with DA, taking into account the situation that BS and EMC-Active RIS can only access the statistical CSI, the ergodic achievable rate formulated as $R_e=\sum_{k=1}^{K}\mathbb{E} \left[R_k\right]=\sum_{k=1}^{K}\mathbb{E} \left[\log_2\left(1+\gamma_{k}\right)\right]$. Since the expected term in the ergodic achievable rate involves integration over the SINR, it is typically difficult to obtain an analytical closed-form solution. Therefore, we utilize Jensen's inequality and leverage the statistical properties of the correlation matrix to bound the achievable rate from above, thereby deriving its upper-bound expression $\widetilde{R}_e$ as $\widetilde{R}_e=\sum_{k=1}^{K}\log_2\left(1+\mathbb{E}\left[\gamma_{k}\right]\right)$.
 
  In addition, both the power of the active RIS output signal $\boldsymbol{\rm y}_A$ and the amplification magnitude of the signal are constrained by the active load at the EMC-Active RIS. Therefore, when the statistical CSI is known, the maximum amplification power constraint and the amplitude constraint for single RE are expressed as follows:
  
  \vspace{-3mm}
  \begin{small}
  	\begin{align}\label{constraint_1}
  	&P_A=\mathbb{E}\left[\sum_{k=1}^{K}\left\Vert\boldsymbol{\rm \breve {H}}_{Z}^{E}\boldsymbol{\rm w}_k\right\Vert^2+\sum_{k=1}^{K}\sum_{q=1}^Q\left\Vert\boldsymbol{\rm \breve H}^{J}_{Z,q}\boldsymbol{\rm w}_{J,q}\right\Vert^2\right.\notag\\
  	&\qquad\qquad\qquad\qquad\qquad\left.+\sigma^2_{R}\left\Vert\boldsymbol{\rm \breve H}^{N}_{Z}\right\Vert^2\right]\le P_{A,\max},\\
  	&\Gamma_{A,m}\le\Gamma_{A,\max}, \forall m\in \mathcal{M},\label{constraint_2}
  	\end{align}
  \end{small}where $P_{A,\max}$ and $\Gamma_{A,\max}$ are the amplification power budget and the maximum amplitude, respectively. Furthermore, the phase modulation in actual RIS hardware is typically implemented by adjustable components with limited precision, whose achievable phase states are constrained by the number of quantization bits and circuit complexity, making continuous phase variation unrealistic. Therefore, compared to the idealized continuous-phase assumption, the discrete-phase model more accurately reflects the physical implementation conditions and engineering constraints of RIS, making the designed optimization algorithms more practically feasible and valuable for real-world applications \cite{PS}. Also, with respect to discrete phase shifts, $\theta_{m}$ is selected from a set of finite discrete values uniformly sampled from the interval $\left(0,2\pi\right]$, which is given as follows:
  \begin{equation}\label{constraint_3}\small
  \theta_{m}\in\mathcal{F}=\left\{0,\frac{2\pi}{2^b},2\frac{2\pi}{2^b},\cdots,\left(2^b-1\right)\frac{2\pi}{2^b}\right\},
  \end{equation}
  where $b$ presents the number of bits used to indicate the number of phase shift levels $2^b$.
  
 
  In this subsection, we focus on the joint design of transmit beamforming and the reflection coefficient matrix of the EMC-active RIS, aiming to maximize the upper-bound of the ergodic achievable rate. As such, the optimization problem is described as follows:
  
  \vspace{-3mm}
 \begin{subequations}\label{Problem:P1}\small
	\begin{alignat}{2}
	\textbf{\textit{P}1:}
	&\mathop{\max}\limits_{\boldsymbol{\rm w}_k,\boldsymbol{\rm \Gamma}_A} \widetilde{R}_e\left(\boldsymbol{\rm w}_k,\boldsymbol{\rm \Gamma}_A\right)  \notag \\
	{\rm{s.t.}}:&1).\ \sum_{k=1}^{K}\Vert\boldsymbol{\rm {w}}_{k}\Vert^2\le P_{\max};\\
	&2).\ (\ref{constraint_1}),~(\ref{constraint_2}),~\text{and}~(\ref{constraint_3}).
	\end{alignat}
 \end{subequations}
 \begin{remark}
 	By comparing the original coupled optimization problem \cite[Eq.~(33)]{Cao3} with $\textbf{\textit{P}1}$, it can be seen that the negative impact of MC in RE on the solvability of the original coupled optimization problem is eliminated in $\textbf{\textit{P}1}$. This demonstrates that DA fundamentally transforms the coupled EMC-Active RIS system into an tractable decoupled representation that preserves the essential EM behavior while eliminating the optimization bottleneck caused by MC, thereby enabling scalable optimization.
 \end{remark}

 \section{Proposed Alternating Optimization}
 \label{sec:Alternating_optimization_perfect_CSI}
 
 In this section, in order to simplify the optimization problem $\textbf{\textit{P}1}$, we first derive its expression after eliminating expectations, then employ an alternating iteration (AO) algorithm to decouple optimization variables for efficient solution. To solve optimization problem $\textbf{\textit{P}1}$, we first need to simplify the expectation operations within the objective function and constraints in the following proposition.
 
 \begin{proposition}\label{proposition}
 	The objective function for $\textbf{\textit{P}1}$ can be simplified into a form without expectation operations, as follows:
 	
 	\vspace{-5mm}
 	\begin{small}
 		\begin{align}\label{Re}
 		\widetilde{R}_e=\sum_{k=1}^{K}\log_2\left(1+\frac{\boldsymbol{\rm w}_k^H\boldsymbol{\rm C}_{1,k}\boldsymbol{\rm w}_k}{\sum\limits_{j\neq k,j=1}^{K}\hspace{-1mm}\boldsymbol{\rm w}_j^H\boldsymbol{\rm C}_{1,k}\boldsymbol{\rm w}_j+{\bar J}_k+\sigma^2_k}\right),
 		\end{align}
 	\end{small}where ${\bar J}_k=\sum_{q=1}^{Q}\boldsymbol{\rm w}_{J,q}^H\boldsymbol{\rm C}_{2,qk}\boldsymbol{\rm w}_{J,q}+\sigma^2_{R}\mbox{Tr}\left(\boldsymbol{\rm C}_{3,k}\right)$ and $\mbox{Tr}(\cdot)$ is the trace of the matrix. To simplify notations of the objective function for $\textbf{\textit{P}1}$, we introduce the following notations
 
   \vspace{-3mm}
   \begin{small}
 	 \begin{align}
 	 \boldsymbol{\rm C}_{1,k}\triangleq&\boldsymbol{\rm R}_{BU,k}+\frac{1}{\left(4Z_0\right)^2}\boldsymbol{\rm\widetilde \mu}_{BR}^H\boldsymbol{\rm \widetilde\Gamma}_{A}^H\boldsymbol{\rm\widetilde R}_{RU,k}\boldsymbol{\rm \widetilde\Gamma}_{A}\boldsymbol{\rm \widetilde \mu}_{BR}\notag\\
 	 &+\frac{1}{\left(4Z_0\right)^2}\left(\varepsilon_{BR}^N\right)^2\mbox{Tr}\left(\boldsymbol{\rm \widetilde\Gamma}_{A}^H\boldsymbol{\rm\widetilde R}_{RU,k}\boldsymbol{\rm \widetilde\Gamma}_{A}\boldsymbol{\rm\widetilde {\Sigma}}_{R}\right)\boldsymbol{\rm {\Sigma}}_{B},\notag\\
 	 \boldsymbol{\rm C}_{2,qk}\triangleq&\boldsymbol{\rm R}_{JU,qk}+\frac{1}{\left(4Z_0\right)^2}\boldsymbol{\rm\widetilde \mu}_{JR,q}^H\boldsymbol{\rm \widetilde\Gamma}_{A}^H\boldsymbol{\rm\widetilde R}_{RU,k}\boldsymbol{\rm \widetilde\Gamma}_{A}\boldsymbol{\rm \widetilde \mu}_{JR,q}\notag\\
 	 &+\frac{1}{\left(4Z_0\right)^2}\left(\varepsilon_{JR,q}^N\right)^2\mbox{Tr}\left(\boldsymbol{\rm \widetilde\Gamma}_{A}^H\boldsymbol{\rm\widetilde R}_{RU,k}\boldsymbol{\rm \widetilde\Gamma}_{A}\boldsymbol{\rm\widetilde {\Sigma}}_{R}\right)\boldsymbol{\rm {\Sigma}}_{J,q},\notag\\
 	 \boldsymbol{\rm C}_{3,k}\triangleq&\frac{\boldsymbol{\rm R}_{RU,k}}{\left(2Z_0\right)^2}+\frac{1}{\left(4Z_0\right)^2}\boldsymbol{\rm\bar Z}_{AA}^H\boldsymbol{\rm \widetilde\Gamma}_{A}^H\boldsymbol{\rm\widetilde R}_{RU,k}\boldsymbol{\rm \widetilde\Gamma}_{A}\boldsymbol{\rm\bar Z}_{AA}^H,
 	 \end{align}
  \end{small}where we have

  \vspace{-3mm}
  \begin{small}
  	\begin{align}
  	&\boldsymbol{\rm\widetilde \mu}_{\varpi}=\boldsymbol{\rm \mu}_{\varpi}\operatorname{Re}\{\boldsymbol{\rm {Z}}_{AA}\}^{-\frac{1}{2}},\boldsymbol{\rm\widetilde \Sigma}_{R}=\operatorname{Re}\{\boldsymbol{\rm {Z}}_{AA}\}^{-\frac{1}{2}}\boldsymbol{\rm \Sigma}_{R}\operatorname{Re}\{\boldsymbol{\rm {Z}}_{AA}\}^{-\frac{1}{2}}\notag\\
  	&\boldsymbol{\rm\widetilde R}_{\varpi}=\operatorname{Re}\{\boldsymbol{\rm {Z}}_{AA}\}^{-\frac{1}{2}} \left(\boldsymbol{\rm \mu}_{\varpi}^H\boldsymbol{\rm \mu}_{\varpi}+\left(\varepsilon_{\varpi}^N\right)^2\boldsymbol{\rm {\Sigma}}_{\varpi_3}\right)\operatorname{Re}\{\boldsymbol{\rm {Z}}_{AA}\}^{-\frac{1}{2}},\notag\\
  	&\boldsymbol{\rm R}_{\varpi}=\frac{1}{\left(2Z_0\right)^2}\left(\boldsymbol{\rm \mu}_{\varpi}^H\boldsymbol{\rm \mu}_{\varpi}+\left(\varepsilon_{\varpi}^N\right)^2\boldsymbol{\rm {\Sigma}}_{\varpi_3}\right),
  	\end{align}
  \end{small}for $\varpi\in\left\{\{BU,k\}, \{JU,qk\}, \{RU,k\}\right\}$ and $\varpi_3\in\left\{\{B\}, \{J,q\}\right\}$. For the constraint (\ref{constraint_1}), we can also simplify it as follows:

  \vspace{-5mm}
  \begin{small}
	\begin{align}\label{constraint_1new}
	P_A=\sum_{k=1}^{K}\boldsymbol{\rm w}_k^H\boldsymbol{\rm C}_{4}\boldsymbol{\rm w}_k&+\sum_{k=1}^{K}\sum_{q=1}^Q\boldsymbol{\rm w}_{J,q}^H\boldsymbol{\rm C}_{5,q}\boldsymbol{\rm w}_{J,q}\notag\\
	&+\sigma^2_{R}\mbox{Tr}\left(\boldsymbol{\rm C}_{6}\right)\le P_{A,\max},
	\end{align}
  \end{small}in which we also introduce definitions

  \vspace{-3mm}
  \begin{small}
	\begin{align}
	\boldsymbol{\rm C}_{4}\triangleq&\boldsymbol{\rm R}_{BR}+\frac{1}{\left(4Z_0\right)^2}\boldsymbol{\rm\widetilde \mu}_{BR}^H\boldsymbol{\rm \widetilde\Gamma}_{A}^H\boldsymbol{\rm \widetilde Z}_{AA}^H\boldsymbol{\rm \widetilde Z}_{AA}\boldsymbol{\rm \widetilde\Gamma}_{A}\boldsymbol{\rm \widetilde \mu}_{BR}\notag\\
	&+\frac{1}{\left(4Z_0\right)^2}\left(\varepsilon_{BR}^N\right)^2\mbox{Tr}\left(\boldsymbol{\rm \widetilde\Gamma}_{A}^H\boldsymbol{\rm \widetilde Z}_{AA}^H\boldsymbol{\rm \widetilde Z}_{AA}\boldsymbol{\rm \widetilde\Gamma}_{A}\boldsymbol{\rm\widetilde {\Sigma}}_{R}\right)\boldsymbol{\rm {\Sigma}}_{B},\notag\\
	\boldsymbol{\rm C}_{5,q}\triangleq&\boldsymbol{\rm R}_{JR,q}+\frac{1}{\left(4Z_0\right)^2}\boldsymbol{\rm\widetilde \mu}_{JR,q}^H\boldsymbol{\rm \widetilde\Gamma}_{A}^H\boldsymbol{\rm \widetilde Z}_{AA}^H\boldsymbol{\rm \widetilde Z}_{AA}\boldsymbol{\rm \widetilde\Gamma}_{A}\boldsymbol{\rm \widetilde \mu}_{JR,q}\notag\\
	&+\frac{1}{\left(4Z_0\right)^2}\left(\varepsilon_{JR,q}^N\right)^2\mbox{Tr}\left(\boldsymbol{\rm \widetilde\Gamma}_{A}^H\boldsymbol{\rm \widetilde Z}_{AA}^H\boldsymbol{\rm \widetilde Z}_{AA}\boldsymbol{\rm \widetilde\Gamma}_{A}\boldsymbol{\rm\widetilde {\Sigma}}_{R}\right)\boldsymbol{\rm {\Sigma}}_{J,q},\notag\\
	\boldsymbol{\rm C}_{6}\triangleq&\frac{\boldsymbol{\rm R}_{AA}^+}{\left(2Z_0\right)^2}+\frac{1}{\left(4Z_0\right)^2}\boldsymbol{\rm\bar Z}_{AA}^H\boldsymbol{\rm \widetilde\Gamma}_{A}^H\boldsymbol{\rm \widetilde Z}_{AA}^H\boldsymbol{\rm \widetilde Z}_{AA}\boldsymbol{\rm \widetilde\Gamma}_{A}\boldsymbol{\rm\bar Z}_{AA}^H,
	\end{align}
  \end{small}where $\boldsymbol{\rm R}_{\varpi}=\frac{1}{\left(2Z_0\right)^2}\left(\boldsymbol{\rm \mu}_{\varpi}^H\boldsymbol{\rm \mu}_{\varpi}+\left(\varepsilon_{\varpi}^N\right)^2M\boldsymbol{\rm {\Sigma}}_{\varpi_3}\right)$ for $\varpi\in\left\{\{BR\}, \{JR,q\}\right\}$, and $\boldsymbol{\rm R}_{AA}^+=(\boldsymbol{\rm  Z}_{AA}^+)^H\boldsymbol{\rm Z}_{AA}^+$.
 \end{proposition}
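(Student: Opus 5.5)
The objective $\widetilde R_e=\sum_k\log_2(1+\mathbb{E}[\gamma_k])$ still contains an expectation of a ratio. The plan is to follow the usual statistical-CSI convention and interpret $\mathbb{E}[\gamma_k]$ as the ratio of the expected desired-signal power to the expected interference-plus-jamming-plus-noise power. This is the standard Jensen-type approximation, and it is what the structure of (\ref{Re}) indicates. With that reading, the proof reduces to computing second moments of the form $\mathbb{E}[|\boldsymbol{\rm H}\boldsymbol{\rm w}|^2]=\boldsymbol{\rm w}^H\mathbb{E}[\boldsymbol{\rm H}^H\boldsymbol{\rm H}]\boldsymbol{\rm w}$ and $\mathbb{E}\Vert\boldsymbol{\rm H}\Vert^2=\mbox{Tr}(\mathbb{E}[\boldsymbol{\rm H}^H\boldsymbol{\rm H}])$. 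Here $\boldsymbol{\rm H}$ ranges over the decoupled channels of Theorem~\ref{theorem_DA}, and $\boldsymbol{\rm \widetilde\Gamma}_A$ and the beamformers are treated as deterministic. The matrices $\boldsymbol{\rm C}_{1,k},\boldsymbol{\rm C}_{2,qk},\boldsymbol{\rm C}_{3,k}$ should then be exactly these Gram expectations, and $\boldsymbol{\rm C}_4,\boldsymbol{\rm C}_{5,q},\boldsymbol{\rm C}_6$ the corresponding ones for the RIS output channels.

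\textbf{Step 1: expand the Gram matrix.} Take $\boldsymbol{\rm H}^{E}_{Z,k}$ from (\ref{DA_model1}) and expand $\boldsymbol{\rm H}^{E\,H}_{Z,k}\boldsymbol{\rm H}^{E}_{Z,k}$ into three groups of terms:
\begin{itemize}
\item the direct term, proportional to $\boldsymbol{\rm Z}_{BU,k}^H\boldsymbol{\rm Z}_{BU,k}$;
\item two cross terms pairing $\boldsymbol{\rm Z}_{BU,k}$ with $\boldsymbol{\rm \widetilde Z}_{RU,k}\boldsymbol{\rm \widetilde\Gamma}_A\boldsymbol{\rm \widetilde Z}_{BR}$;
\item the cascaded term $\boldsymbol{\rm \widetilde Z}_{BR}^H\boldsymbol{\rm \widetilde\Gamma}_A^H\boldsymbol{\rm \widetilde Z}_{RU,k}^H\boldsymbol{\rm \widetilde Z}_{RU,k}\boldsymbol{\rm \widetilde\Gamma}_A\boldsymbol{\rm \widetilde Z}_{BR}$.
\end{itemize}
The links $BU,k$, $RU,k$ and $BR$ are mutually independent. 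The cross terms are therefore evaluated through products of means, and I will argue that they are neglected or absorbed into the stated form. The formula for $\boldsymbol{\rm C}_{1,k}$ contains no cross term, which suggests the paper treats the direct and cascaded contributions as uncorrelated.

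For the direct term, the Gaussian model $\boldsymbol{\rm Z}_{\varpi}\sim\mathcal{CN}(\boldsymbol{\rm\mu}_\varpi,(\varepsilon^N_\varpi)^2\boldsymbol{\rm\Sigma}_{\varpi_1}\otimes\boldsymbol{\rm\Sigma}_{\varpi_2})$ gives $\mathbb{E}[\boldsymbol{\rm Z}^H\boldsymbol{\rm Z}]=\boldsymbol{\rm\mu}^H\boldsymbol{\rm\mu}+(\varepsilon^N)^2\boldsymbol{\rm\Sigma}_{\varpi_3}$, which yields $\boldsymbol{\rm R}_{BU,k}$. For the row-vector links the scalar trace factor equals $1$; for the $M\times N$ link $BR$ it equals $M$, which explains the extra factor $M$ in $\boldsymbol{\rm R}_{BR}$.

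\textbf{Step 2: the cascaded term.} First condition on $\boldsymbol{\rm Z}_{BR}$ and average over $\boldsymbol{\rm Z}_{RU,k}$. This replaces $\boldsymbol{\rm \widetilde Z}_{RU,k}^H\boldsymbol{\rm \widetilde Z}_{RU,k}$ by $\boldsymbol{\rm\widetilde R}_{RU,k}$. Next write $\boldsymbol{\rm \widetilde Z}_{BR}=\boldsymbol{\rm\widetilde\mu}_{BR}+\varepsilon^N_{BR}\boldsymbol{\rm\widetilde\Sigma}_R'^{1/2}\boldsymbol{\rm\bar Z}\boldsymbol{\rm\Sigma}_B^{1/2}$ and apply the identity $\mathbb{E}[\boldsymbol{\rm X}^H\boldsymbol{\rm A}\boldsymbol{\rm X}]=\mbox{Tr}(\boldsymbol{\rm A}\boldsymbol{\rm\Sigma}_r)\boldsymbol{\rm\Sigma}_t$, valid for $\boldsymbol{\rm X}=\boldsymbol{\rm\Sigma}_r^{1/2}\boldsymbol{\rm\bar Z}\boldsymbol{\rm\Sigma}_t^{1/2}$ with i.i.d.\ $\boldsymbol{\rm\bar Z}$. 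This produces the mean term $\boldsymbol{\rm\widetilde\mu}_{BR}^H\boldsymbol{\rm\widetilde\Gamma}_A^H\boldsymbol{\rm\widetilde R}_{RU,k}\boldsymbol{\rm\widetilde\Gamma}_A\boldsymbol{\rm\widetilde\mu}_{BR}$ plus the trace term involving $\boldsymbol{\rm\widetilde\Sigma}_R$ and $\boldsymbol{\rm\Sigma}_B$. The prefactor $(1/(2Z_0)\cdot 1/2)^2=1/(4Z_0)^2$ accounts for the constant.

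\textbf{Step 3: repeat for the other channels.} The same computation applied to $\boldsymbol{\rm H}^{J}_{Z,qk}$ in (\ref{HZJDA}) gives $\boldsymbol{\rm C}_{2,qk}$. For $\boldsymbol{\rm H}^{N}_{Z,k}$ in (\ref{HZNDA}), only $\boldsymbol{\rm Z}_{RU,k}$ is random, so $\mathbb{E}\Vert\boldsymbol{\rm H}^N_{Z,k}\Vert^2$ gives $\mbox{Tr}(\boldsymbol{\rm C}_{3,k})$ directly. For the constraint (\ref{constraint_1}), the channels of (\ref{H_ZADA})--(\ref{H_ZA2DA}) have the deterministic matrix $\boldsymbol{\rm\widetilde Z}_{AA}$ in place of the random $\boldsymbol{\rm\widetilde Z}_{RU,k}$. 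The same identity then gives $\boldsymbol{\rm C}_4$ and $\boldsymbol{\rm C}_{5,q}$ with $\boldsymbol{\rm\widetilde Z}_{AA}^H\boldsymbol{\rm\widetilde Z}_{AA}$ in the role of $\boldsymbol{\rm\widetilde R}_{RU,k}$. The noise term is fully deterministic, so it yields $\boldsymbol{\rm C}_6$ with $\boldsymbol{\rm R}_{AA}^+$.

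\textbf{Main obstacle.} The delicate step is the bookkeeping in Step 2. It requires tracking the $\operatorname{Re}\{\boldsymbol{\rm Z}_{AA}\}^{-1/2}$ whitening consistently, so that the $\boldsymbol{\rm\widetilde\Sigma}_R$ form comes out correctly, and making the removal of the cross terms explicit. I would handle the cross terms by stating the Jensen/uncorrelated approximation up front, rather than leaving it implicit.
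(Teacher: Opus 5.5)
Your plan follows the paper's Appendix~B step for step. The ratio-of-expectations reading of $\mathbb{E}[\gamma_k]$ is what (\ref{Re}) implicitly uses, and your Kronecker identity is the paper's Lemma~\ref{lemma1}. Averaging over $\boldsymbol{\rm Z}_{RU,k}$ to obtain $\boldsymbol{\rm\widetilde R}_{RU,k}$ is step (b), the $\boldsymbol{\rm Z}_{BR}$-average is step (c), and the remaining matrices follow ``by the same method.'' The weak point is the cross terms. The paper's step (a) drops them without comment, and your justification for dropping them does not hold.

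\textbf{The cross terms in $\boldsymbol{\rm C}_{1,k}$ and $\boldsymbol{\rm C}_{2,qk}$.} Independence of the links gives the cross contribution $-\frac{1}{2(2Z_0)^2}\big(\boldsymbol{\rm\mu}_{BU,k}^H\boldsymbol{\rm m}_k+\boldsymbol{\rm m}_k^H\boldsymbol{\rm\mu}_{BU,k}\big)$, where $\boldsymbol{\rm m}_k=\mathbb{E}[\boldsymbol{\rm\widetilde Z}_{RU,k}]\boldsymbol{\rm\widetilde\Gamma}_A\mathbb{E}[\boldsymbol{\rm\widetilde Z}_{BR}]$. Under the Rician model ($\kappa_\varpi=3$) this is nonzero. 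It also matches the form of none of the terms in $\boldsymbol{\rm C}_{1,k}$, so nothing is ``absorbed.'' The stated equality holds only if this term is discarded by fiat, or if one of the LOS means vanishes.

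\textbf{Step~3 is false as stated.} You claim $\boldsymbol{\rm C}_{3,k},\boldsymbol{\rm C}_4,\boldsymbol{\rm C}_{5,q},\boldsymbol{\rm C}_6$ come out ``directly,'' but there the two summands are not independent. Since $\boldsymbol{\rm\widetilde Z}_{RU,k}=\boldsymbol{\rm Z}_{RU,k}\operatorname{Re}\{\boldsymbol{\rm Z}_{AA}\}^{-\frac12}$ is the same random matrix, you can write $\boldsymbol{\rm H}^{N}_{Z,k}=\frac{1}{2Z_0}\boldsymbol{\rm Z}_{RU,k}\boldsymbol{\rm G}$ with $\boldsymbol{\rm G}=\boldsymbol{\rm I}_M-\frac12\boldsymbol{\rm D}$ and $\boldsymbol{\rm D}=\operatorname{Re}\{\boldsymbol{\rm Z}_{AA}\}^{-\frac12}\boldsymbol{\rm\widetilde\Gamma}_A\boldsymbol{\rm\bar Z}_{AA}$. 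Then $\mathbb{E}\Vert\boldsymbol{\rm H}^{N}_{Z,k}\Vert^2$ contains the extra term $-\frac{1}{(2Z_0)^2}\operatorname{Re}\,\mbox{Tr}\big(\mathbb{E}[\boldsymbol{\rm Z}_{RU,k}^H\boldsymbol{\rm Z}_{RU,k}]\boldsymbol{\rm D}\big)$. This is a second moment, and it survives even for zero-mean channels.

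The same happens for $\boldsymbol{\rm\breve H}^{E}_{Z}=\frac{1}{2Z_0}\boldsymbol{\rm G}'\boldsymbol{\rm Z}_{BR}$ with $\boldsymbol{\rm G}'=\boldsymbol{\rm I}_M-\frac12\boldsymbol{\rm\widetilde Z}_{AA}\boldsymbol{\rm\widetilde\Gamma}_A\operatorname{Re}\{\boldsymbol{\rm Z}_{AA}\}^{-\frac12}$. It also happens for $\boldsymbol{\rm\breve H}^{J}_{Z,q}$ and for the deterministic $\boldsymbol{\rm\breve H}^{N}_{Z}$. So the stated $\boldsymbol{\rm C}$-matrices are only the non-cross parts of the true Gram expectations.

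\textbf{How to fix it.} You have two options:
\begin{itemize}
\item State up front that \emph{all} cross terms between the $\boldsymbol{\rm\widetilde\Gamma}_A$-free and $\boldsymbol{\rm\widetilde\Gamma}_A$-dependent parts of each channel are neglected. This is what the paper's step (a) tacitly does. Present it as an approximation, not as a consequence of uncorrelatedness.
\item Compute exactly. Apply the Kronecker identity once to the factored forms $\boldsymbol{\rm Z}_{RU,k}\boldsymbol{\rm G}$ and $\boldsymbol{\rm G}'\boldsymbol{\rm Z}_{BR}$, and keep the mean-product term in $\boldsymbol{\rm C}_{1,k}$ and $\boldsymbol{\rm C}_{2,qk}$. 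This gives closed forms that differ from the stated ones.
\end{itemize}
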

 \textit{Proof:} Please see Appendix B.$\hfill\blacksquare$ 
 
 After eliminating expectations based on Proposition~{\ref{proposition}}, $\textbf{\textit{P}1}$ can be rewritten as follows:
 \begin{subequations}\label{Problem:P1-A}\small
 	\begin{alignat}{2}
 	\textbf{\textit{P}1-A:}
 	&\mathop{\max}\limits_{\boldsymbol{\rm w}_k,\boldsymbol{\rm \Gamma}_A} \widetilde{R}_e\left(\boldsymbol{\rm w}_k,\boldsymbol{\rm \Gamma}_A\right)  \notag \\
 	{\rm{s.t.}}:&\ (\ref{Problem:P1}\text{a}),~ (\ref{constraint_1new}),~(\ref{constraint_2}),~\text{and}~(\ref{constraint_3}),
 	\end{alignat}
 \end{subequations}
 where the expression of $\widetilde{R}_e$ is given by Eq.~(\ref{Re}). To further handle the fractions in the objective function, we first reconstruct $\boldsymbol{\rm C}_{1,k}=\boldsymbol{\rm H}_{C,k}^H\boldsymbol{\rm H}_{C,k}$,  where $\boldsymbol{\rm H}_{C,k}=\left[\boldsymbol{\rm H}_{C1,k}; \boldsymbol{\rm H}_{C2,k}; \boldsymbol{\rm H}_{C3,k}\right]$, with $\boldsymbol{\rm H}_{C1,k}$, $\boldsymbol{\rm H}_{C2,k}$, and $\boldsymbol{\rm H}_{C3,k}$ being $\boldsymbol{\rm H}_{C1,k}=\boldsymbol{\rm R}_{BU,k}^{\frac{1}{2}}\in\mathbb{C}^{N\times N}$, $\boldsymbol{\rm H}_{C2,k}=\frac{1}{4Z_0}\boldsymbol{\rm\widetilde R}_{RU,k}^{\frac{1}{2}}\boldsymbol{\rm\widetilde \Gamma}_{A}\boldsymbol{\rm\widetilde \mu}_{BR}\in\mathbb{C}^{M\times N}$, and $\boldsymbol{\rm H}_{C3,k}=\frac{\varepsilon_{BR}^N}{4Z_0}\left\Vert\boldsymbol{\rm\bar \Gamma}_{A}\boldsymbol{\rm\widetilde\Sigma}_{RU,k}^{\frac{1}{2}}\right\Vert_2\boldsymbol{\rm\widetilde\Sigma}_{B}^{\frac{1}{2}}\in\mathbb{C}^{N\times N}$, respectively, where $\boldsymbol{\rm\bar \Gamma}_{A}$ is the column vector composed of the main diagonal elements of $\boldsymbol{\rm\widetilde \Gamma}_{A}$ and $\boldsymbol{\rm\widetilde\Sigma}_{RU,k}=\boldsymbol{\rm\widetilde R}_{RU,k}\odot\boldsymbol{\rm\widetilde\Sigma}_{R}^T$. Note that $\boldsymbol{\rm H}_{C1,k}$ and $\boldsymbol{\rm H}_{C2,k}$ hold because $\boldsymbol{\rm R}_{BU,k}$ and $\boldsymbol{\rm\widetilde R}_{RU,k}$ are positive
 definite Hermitian matrices with an arithmetic square root; $\boldsymbol{\rm H}_{C3,k}$ stands for our invoking property: $\mbox{Tr}\left(\boldsymbol{\rm \widetilde\Gamma}_{A}^H\boldsymbol{\rm\widetilde R}_{RU,k}\boldsymbol{\rm \widetilde\Gamma}_{A}\boldsymbol{\rm\widetilde {\Sigma}}_{R}\right)=\boldsymbol{\rm\bar \Gamma}_{A}^T\boldsymbol{\rm\widetilde\Sigma}_{RU,k}\boldsymbol{\rm\bar \Gamma}_{A}$, where $\boldsymbol{\rm\widetilde\Sigma}_{RU,k}$ is also a positive definite Hermitian matrix based on schur product theory. Thus, we have
 
 \vspace{-3mm}
 \begin{small}
 	\begin{align}\label{change1}
 	\boldsymbol{\rm w}_k^H\boldsymbol{\rm C}_{1,k}\boldsymbol{\rm w}_k=\left\Vert\boldsymbol{\rm H}_{C,k}\boldsymbol{\rm w}_k\right\Vert^2.
 	\end{align}
 \end{small}
 \vspace{-5mm}

 Substituting Eq.~(\ref{change1}) into (\ref{Re}), we can then utilize multidimensional Lagrangian Dual Transform (mLDT) and quadratic transform (QT) \cite{danbi} to handle the logarithmic and fractional forms in the objective function of $\textbf{\textit{P}1-A}$. Specifically, the objective function of $\textbf{\textit{P}1-A}$ can be transformed by invoking auxiliary variables $\boldsymbol{\rm {\omega}} =\left[\omega_{1},\cdots,\omega_{K}\right]^T$ and $\boldsymbol{\rm {\nu}}=\left[\boldsymbol{\rm {\nu}}_{1},\cdots,\boldsymbol{\rm {\nu}}_{K}\right]^T$ with $\boldsymbol{\rm {\nu}}_{k}\in\mathbb{C}^{{2N+M}\times 1}$ and ignoring irrelevant items as follows:
 
 \vspace{-3mm}
 \begin{small}
 	\begin{align}
 	f_{OF}&=\sum_{k=1}^{K}\left[\mbox{ln}(1+\omega_{k})+2\sqrt{1+\omega_{k}}\mbox{Re}\left\{\boldsymbol{\rm {\nu}}_{k}^H\boldsymbol{\rm {H}}_{C,k}\boldsymbol{\rm {w}}_{k}\right\}\right.\nonumber\\
 	&-\omega_{k}-\left.\left(\boldsymbol{\rm {\nu}}_{k}^H\boldsymbol{\rm {\nu}}_{k}\right)\hspace{-1mm}\left(\hspace{-1mm}\textstyle\sum\limits_{j=1}^{K}\left\Vert\boldsymbol{\rm {H}}_{C,k}\boldsymbol{\rm {w}}_{j}\right\Vert^2+\bar{J}_{k}+\sigma^2_{k}\right)\right]\hspace{-1mm}.
 	\end{align}
 \end{small}Thus, $\textbf{\textit{P}1-A}$ is reconstructed as follows:
 \begin{subequations}\label{Problem:P1-B}\small
 	\begin{alignat}{2}
 	\textbf{\textit{P}1-B:}
 	&\mathop{\max}\limits_{\boldsymbol{\rm w}_k,\boldsymbol{\rm \Gamma}_A,\boldsymbol{\rm {\omega}},\boldsymbol{\rm {\nu}}} f_{OF}\left(\boldsymbol{\rm {w}}_{k},\boldsymbol{\rm \widetilde{\Gamma}}_{A}, \boldsymbol{\rm {\omega}},\boldsymbol{\rm {\nu}}\right)  \notag \\
 	{\rm{s.t.}}:&\ (\ref{Problem:P1}\text{a}),~ (\ref{constraint_1new}),~(\ref{constraint_2}),~\text{and}~(\ref{constraint_3}),
 	\end{alignat}
 \end{subequations}
 Note that by setting $\partial f_{OF}/\partial \omega_{k}=0$ and $\partial f_{OF}/\partial\boldsymbol{\rm {\nu}}_{k}=0$, the optimal $\omega_{k}^{\star}$ and $\boldsymbol{\rm {\nu}}_{k}^{\star}$ are obtained, respectively.
 
 \vspace{-3mm}
 \subsection{Optimize $\boldsymbol{\rm {w}}_k$}
 \label{subsec:W}
 
 For simplicity, we define $\boldsymbol{\rm w}=[\boldsymbol{\rm w}_1^T,\cdots,\boldsymbol{\rm w}_K^T]^T$. When the variable $\boldsymbol{\rm \widetilde{\Gamma}}_A$ is fixed, the subproblem with respect to $\boldsymbol{\rm {w}}$ is reconstructed as follows:\vspace{-2mm}
 \begin{subequations}\label{P2}\small
 	\begin{alignat}{2}
 	\textbf{\textit{P}2:}
 	&\mathop{\max}\limits_{\boldsymbol{\rm {w}}} \ \mbox{Re}\left\{\boldsymbol{\rm {\zeta}}^H\boldsymbol{\rm {w}}\right\}-\boldsymbol{\rm {w}}^H\boldsymbol{\rm {\Upsilon}}_1\boldsymbol{\rm {w}} \notag \\
 	{\rm{s.t.}}:&1).\ \boldsymbol{\rm {w}}^H\boldsymbol{\rm {w}}\le {P}_{\max};\\
 	&2).\ \boldsymbol{\rm {w}}^H\boldsymbol{\rm {\Upsilon}}_2\boldsymbol{\rm {w}}\le \bar{P}_{A,\max},
 	\end{alignat}
 \end{subequations}
  where we define
  
  \vspace{-4mm}
  \begin{small}
  	\begin{align}
  	&\boldsymbol{\rm {\zeta}}\triangleq[\boldsymbol{\rm {\zeta}}_1^T,\cdots,\boldsymbol{\rm {\zeta}}_K^T]^T,\boldsymbol{\rm {\zeta}}_k=2\sqrt{1+\gamma_k}\boldsymbol{\rm {\nu}}_k^H\boldsymbol{\rm {H}}_{C,k},\notag\\
  	&\boldsymbol{\rm {\Upsilon}}_1\triangleq\boldsymbol{\rm {I}}_k\otimes\left(\sum_{k=1}^{K}\boldsymbol{\rm {\nu}}_k^H\boldsymbol{\rm {\nu}}_k\boldsymbol{\rm {H}}_{C,k}^H\boldsymbol{\rm {H}}_{C,k}\right),~\boldsymbol{\rm {\Upsilon}}_2\triangleq\boldsymbol{\rm {I}}_k\otimes\boldsymbol{\rm{C}}_{4}\notag\\
  	&\bar{P}_{A,\max} \triangleq{P}_{A,\max}\hspace{-1mm}-\hspace{-1mm}\sum_{k=1}^{K}\sum_{q=1}^Q\hspace{-1mm}\boldsymbol{\rm w}_{J,q}^H\boldsymbol{\rm C}_{5,q}\boldsymbol{\rm w}_{J,q}\hspace{-1mm}-\hspace{-1mm}\sigma^2_{R}\mbox{Tr}\left(\boldsymbol{\rm C}_{6}\right).
  	\end{align}
  \end{small}Owing to its standard quadratically constrained quadratic programming (QCQP) structure, $\textbf{\textit{P}2}$ can in principle be solved using the CVX toolkit. Nevertheless, the computational burden associated with CVX-based solutions is unbearable for practical implementation. To this end, we instead develop a low-complexity approach to obtain an approximate optimal solution by leveraging the ADMM framework \cite{ADMM} via the following lemma.
  \begin{lemma}\label{lemma0}
  	Consider an optimization problem:
  	
  	\vspace{-3mm}
  	\begin{small}
  		\begin{align}
  		&\mathop{\min}\limits_{\boldsymbol{\rm {x}}} \ \boldsymbol{\rm {x}}^H\boldsymbol{\rm {K}}_1\boldsymbol{\rm {x}}-2\mbox{Re}\left\{\boldsymbol{\rm {k}}^H\boldsymbol{\rm {x}}\right\}~{\rm{s.t.}}:\boldsymbol{\rm {x}}^H\boldsymbol{\rm {K}}_2\boldsymbol{\rm {x}}\le P,
  		\end{align}
  	\end{small}where $\boldsymbol{\rm {K}}_1$ and $\boldsymbol{\rm {K}}_2$ denote positive definite Hermitian matrices. Let $\boldsymbol{\rm \widetilde{K}}_1=\boldsymbol{\rm {K}}_2^{-\frac{1}{2}}\boldsymbol{\rm {K}}_1\boldsymbol{\rm {K}}_2^{-\frac{1}{2}}$ with eigenvalue decomposition $\boldsymbol{\rm \widetilde{K}}_1=\boldsymbol{\rm \widetilde{V}}^H\boldsymbol{\rm \widetilde{\Xi}}\boldsymbol{\rm \widetilde{V}}$ and $\boldsymbol{\rm \widetilde{k}}=\boldsymbol{\rm U}^H\boldsymbol{\rm{k}}$ with $\boldsymbol{\rm U}=\boldsymbol{\rm {K}}_2^{-\frac{1}{2}}\boldsymbol{\rm \widetilde{V}}^H$. The optimal value of $\boldsymbol{\rm {x}}^\star$ can be obtained by examining the following two cases: 1). If $\left(\boldsymbol{\rm {K}}_1^{-1}\boldsymbol{\rm {k}}\right)^H\boldsymbol{\rm {K}}_2\boldsymbol{\rm {K}}_1^{-1}\boldsymbol{\rm {k}}\le P$, $\lambda_x^{\star}=0$, $\boldsymbol{\rm {x}}^\star=\boldsymbol{\rm {K}}_1^{-1}\boldsymbol{\rm {k}}$; 2). Otherwise, $\lambda_x^{\star}>0$, $\boldsymbol{\rm {x}}^\star=\left(\boldsymbol{\rm {K}}_1^{-1}+\lambda_x^{\star}\boldsymbol{\rm {K}}_2\right)\boldsymbol{\rm {k}}$. The optimal value of $\lambda_x^{\star}$ can be determined using the bisection method or Newton's method from $\sum_{i}\frac{\left\vert\widehat{k}_i\right\vert^2}{\Xi_i+\lambda_x} = P$, where $\widehat{k}_i$ is the element of $\boldsymbol{\rm \widetilde{k}}$ and $\Xi_i$ is the eigenvalue of $\boldsymbol{\rm \widetilde{K}}_1$.
  \end{lemma}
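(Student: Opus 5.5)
The lemma concerns a convex quadratically constrained quadratic program with a single constraint, so I would argue through the Karush--Kuhn--Tucker (KKT) conditions. Because $\boldsymbol{\rm K}_1\succ 0$, the objective is strictly convex. The feasible set $\{\boldsymbol{\rm x}:\boldsymbol{\rm x}^H\boldsymbol{\rm K}_2\boldsymbol{\rm x}\le P\}$ is a convex ellipsoid. For $P>0$ the point $\boldsymbol{\rm x}=\boldsymbol{\rm 0}$ is strictly feasible, so Slater's condition holds. Hence the KKT conditions are necessary and sufficient, strong duality holds, and the minimizer is unique.

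First I would form the Lagrangian $\mathcal{L}(\boldsymbol{\rm x},\lambda_x)=\boldsymbol{\rm x}^H\boldsymbol{\rm K}_1\boldsymbol{\rm x}-2\mbox{Re}\{\boldsymbol{\rm k}^H\boldsymbol{\rm x}\}+\lambda_x(\boldsymbol{\rm x}^H\boldsymbol{\rm K}_2\boldsymbol{\rm x}-P)$ with $\lambda_x\ge 0$. Setting the Wirtinger derivative with respect to $\boldsymbol{\rm x}^*$ to zero gives $(\boldsymbol{\rm K}_1+\lambda_x\boldsymbol{\rm K}_2)\boldsymbol{\rm x}=\boldsymbol{\rm k}$. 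Since $\boldsymbol{\rm K}_1+\lambda_x\boldsymbol{\rm K}_2\succ 0$, this yields $\boldsymbol{\rm x}(\lambda_x)=(\boldsymbol{\rm K}_1+\lambda_x\boldsymbol{\rm K}_2)^{-1}\boldsymbol{\rm k}$.

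I read the expression $(\boldsymbol{\rm K}_1^{-1}+\lambda_x^\star\boldsymbol{\rm K}_2)\boldsymbol{\rm k}$ in case 2) as a typographical slip for this inverse. Likewise, I expect the secular equation to carry a squared denominator, as the computation below shows.

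Complementary slackness $\lambda_x(\boldsymbol{\rm x}^H\boldsymbol{\rm K}_2\boldsymbol{\rm x}-P)=0$ then splits the argument into two cases.
\begin{itemize}
\item If $\lambda_x^\star=0$, the candidate is the unconstrained minimizer $\boldsymbol{\rm K}_1^{-1}\boldsymbol{\rm k}$. It satisfies all KKT conditions exactly when it is feasible, which is the condition stated in case 1).
\item Otherwise $\lambda_x^\star>0$ and the constraint must be active.
\end{itemize}

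To find $\lambda_x^\star$ in the second case, I would whiten the constraint. Factor $\boldsymbol{\rm K}_1+\lambda_x\boldsymbol{\rm K}_2=\boldsymbol{\rm K}_2^{\frac{1}{2}}(\boldsymbol{\rm \widetilde K}_1+\lambda_x\boldsymbol{\rm I})\boldsymbol{\rm K}_2^{\frac{1}{2}}$ and substitute the eigenvalue decomposition $\boldsymbol{\rm \widetilde K}_1=\boldsymbol{\rm \widetilde V}^H\boldsymbol{\rm \widetilde\Xi}\boldsymbol{\rm \widetilde V}$, where $\boldsymbol{\rm \widetilde V}$ is unitary. This gives $(\boldsymbol{\rm K}_1+\lambda_x\boldsymbol{\rm K}_2)^{-1}=\boldsymbol{\rm U}(\boldsymbol{\rm \widetilde\Xi}+\lambda_x\boldsymbol{\rm I})^{-1}\boldsymbol{\rm U}^H$ with $\boldsymbol{\rm U}=\boldsymbol{\rm K}_2^{-\frac{1}{2}}\boldsymbol{\rm \widetilde V}^H$.

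Using $\boldsymbol{\rm U}^H\boldsymbol{\rm K}_2\boldsymbol{\rm U}=\boldsymbol{\rm I}$, the constraint function becomes
\begin{equation*}
g(\lambda_x)\triangleq\boldsymbol{\rm x}(\lambda_x)^H\boldsymbol{\rm K}_2\boldsymbol{\rm x}(\lambda_x)=\boldsymbol{\rm \widetilde k}^H(\boldsymbol{\rm \widetilde\Xi}+\lambda_x\boldsymbol{\rm I})^{-2}\boldsymbol{\rm \widetilde k}=\sum_i\frac{|\widehat k_i|^2}{(\Xi_i+\lambda_x)^2}.
\end{equation*}
Every $\Xi_i>0$, so $g$ is continuous and strictly decreasing on $[0,\infty)$, and $g(\lambda_x)\to 0$ as $\lambda_x\to\infty$. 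In case 2) we have $g(0)>P$. Hence $g(\lambda_x)=P$ has a unique root $\lambda_x^\star>0$. Bisection finds it, because an upper bracket such as $\lambda_x=\|\boldsymbol{\rm\widetilde k}\|/\sqrt{P}$ is available. Newton's method also applies, since $g$ is convex and monotone.

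The remaining steps are routine, apart from this whitening and diagonalization, which turns the implicit active-constraint condition into a scalar monotone equation. The main subtlety I expect is bookkeeping: confirming that $\boldsymbol{\rm U}$ diagonalizes both quadratic forms simultaneously. That same check is what fixes the squared denominators and the placement of the inverse noted above.
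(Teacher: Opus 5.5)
Your proof is correct and takes the same route as the paper, which only sketches it: it defers to the KKT argument of the cited Lemma~2 together with the whitening/generalized eigen-decomposition of $\boldsymbol{\rm \widetilde K}_1$, and you carry out both steps explicitly. Your two corrections to the statement are also right: the case-2 solution should read $\boldsymbol{\rm x}^\star=(\boldsymbol{\rm K}_1+\lambda_x^\star\boldsymbol{\rm K}_2)^{-1}\boldsymbol{\rm k}$, and because $\boldsymbol{\rm U}^H\boldsymbol{\rm K}_2\boldsymbol{\rm U}=\boldsymbol{\rm I}$ the active-constraint equation is $\sum_i|\widehat k_i|^2/(\Xi_i+\lambda_x)^2=P$ (the unsquared sum in the statement equals $\boldsymbol{\rm k}^H\boldsymbol{\rm x}(\lambda_x)$, not the constraint value).
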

  \textit{Proof:} The proof of this conclusion is analogous to the proof of the KKT conditions in \cite[Lemma 2]{KKT}, with the modification being the application of the constant transformation for generalized eigenvalue decomposition to $\boldsymbol{\rm \widetilde{K}}_1$. $\hfill\blacksquare$

  To leverage Lemma~\ref{lemma0} for developing an efficient solution, we adopt the ADMM framework by introducing a copy $\boldsymbol{\rm w}_C$ of $\boldsymbol{\rm w}$ (i.e., $\boldsymbol{\rm w}_C = \boldsymbol{\rm w}$) to address the constraints. The augmented Lagrangian (AL) function of $\textbf{\textit{P}2}$ is constructed as
 
 \vspace{-5mm}
  \begin{subequations}\label{P2-A}\small
  	\begin{alignat}{2}
  	\textbf{\textit{P}2-A:}
  	&\mathop{\min}\limits_{\boldsymbol{\rm {w}},\boldsymbol{\rm {w}}_C,\boldsymbol{\rm {\lambda}}_w} \ \boldsymbol{\rm {w}}^H\boldsymbol{\rm {\Upsilon}}_1\boldsymbol{\rm {w}}-\mbox{Re}\left\{\boldsymbol{\rm {\zeta}}^H\boldsymbol{\rm {w}}\right\}+\frac{\rho}{2}\left\Vert\boldsymbol{\rm {w}}-\boldsymbol{\rm {w}}_C\right\Vert^2\notag\\
  	&\qquad\qquad\qquad\qquad\qquad+\mbox{Re}\left\{\boldsymbol{\rm {\lambda}}_w^H\left(\boldsymbol{\rm {w}}-\boldsymbol{\rm {w}}_C\right)\right\} \notag \\
  	{\rm{s.t.}}:&1).\ \boldsymbol{\rm {w}}^H\boldsymbol{\rm {w}}\le {P}_{\max};\\
  	&2).\ \boldsymbol{\rm {w}}^H_C\boldsymbol{\rm {\Upsilon}}_2\boldsymbol{\rm {w}}_C\le \bar{P}_{A,\max},
  	\end{alignat}
  \end{subequations}
 where $\boldsymbol{\rm {\lambda}}_w$ and $\rho$ represent the AL multiplier and a predefined positive constant. Then, we can alternately update variables $\boldsymbol{\rm {w}}$, $\boldsymbol{\rm {w}}_C$, and $\boldsymbol{\rm {\lambda}}_w$. The sub-optimization problem with regard to $\boldsymbol{\rm {w}}$ can be written as follow:
 
 \vspace{-3mm}
 \begin{small}
 	\begin{align}\label{w1}
 	&\mathop{\min}\limits_{\boldsymbol{\rm {w}}} \ \boldsymbol{\rm {w}}^H\boldsymbol{\rm \widetilde{\Upsilon}}_1\boldsymbol{\rm {w}}-2\mbox{Re}\left\{\boldsymbol{\rm {\zeta}}_W^H\boldsymbol{\rm {w}}\right\}~{\rm{s.t.}}:(\ref{P2-A}{\text{a}}),
 	\end{align}
 \end{small}where $\boldsymbol{\rm {\zeta}}_W=\boldsymbol{\rm {\zeta}}+\frac{\rho\boldsymbol{\rm {w}}_C-\boldsymbol{\rm {\lambda}}_w}{2}$ and $\boldsymbol{\rm \widetilde{\Upsilon}}_1=\boldsymbol{\rm {\Upsilon}}_1+\frac{\rho}{2}\boldsymbol{\rm {I}}$. Therefore, we can apply Lemma~\ref{lemma0} to deal with sub-optimization problem with regard to $\boldsymbol{\rm {w}}$.
  
  For the sub-optimization problem for $\boldsymbol{\rm {w}}_C$, we have 
  
  \vspace{-3mm}
  \begin{small}
  	\begin{align}\label{w2}
  	&\mathop{\min}\limits_{\boldsymbol{\rm {w}}_C} \left\Vert\boldsymbol{\rm {w}}_C\right\Vert^2\hspace{-1mm}-2\mbox{Re}\left\{\boldsymbol{\rm {\zeta}}_C^H\boldsymbol{\rm {w}}_C\right\}{\rm{s.t.}}:(\ref{P2-A}{\text{b}}),
  	\end{align}
  \end{small}where $\boldsymbol{\rm {\zeta}}_C=\boldsymbol{\rm {w}}+\rho^{-1}\boldsymbol{\rm {\lambda}}_w$. The sub-optimization problem for $\boldsymbol{\rm {w}}_C$ can also be solved by using Lemma~\ref{lemma0}. In addition, $\boldsymbol{\rm {\lambda}}_w$ can be updated via $\boldsymbol{\rm {\lambda}}_w^{i+1}=\boldsymbol{\rm {\lambda}}_w^i+\rho\left(\boldsymbol{\rm {w}}-\boldsymbol{\rm {w}}_C\right)$. Note that the proof of the optimality of the solution for the subproblem with respect to $\boldsymbol{\rm {w}}$ is provided in \cite[Theorem 1]{KKT}.
  
  \vspace{-2mm}

  \subsection{Optimize $\boldsymbol{\rm {\Gamma}}_A$}
  \label{subsec:A}
 
  With the variable $\boldsymbol{\rm {w}}$ is fixed, the subproblem with regard to $\boldsymbol{\rm {\Gamma}}_A$, which is recast as follows:
  
  \vspace{-3mm}
  \begin{small}
  	\begin{align}\label{Problem:P3}
  	\textbf{\textit{P}3:}
  	\mathop{\max}\limits_{\boldsymbol{\rm \Gamma}_A} f_{OF}\left(\boldsymbol{\rm \widetilde{\Gamma}}_{A}\right)  
  	{\rm{s.t.}}:\ (\ref{constraint_1new}),~(\ref{constraint_2}),~\text{and}~(\ref{constraint_3}),
  	\end{align}
  \end{small}The primary challenge in solving $\textbf{\textit{P}3}$ lies in the non-convexity of the constraints imposed by the amplitude and phase shift matrices in $\boldsymbol{\rm \Gamma}_A$. Therefore, based on the construction of $\boldsymbol{\rm \Gamma}_A$ in Eq.~(\ref{PA1}), we perform alternating iterative optimization of the amplitude matrix at RA and the phase shift matrix at PS.

 \subsubsection{Optimize the amplitude matrix $\boldsymbol{\rm \Lambda}$}
 
 For ease of notation, we introduce the new definitions: $\boldsymbol{\rm {\nu}}_k\triangleq[\boldsymbol{\rm {\nu}}_{k,1};\boldsymbol{\rm {\nu}}_{k,2};\boldsymbol{\rm {\nu}}_{k,3}]$, $\boldsymbol{\rm {g}}_{\Lambda,k}\triangleq\boldsymbol{\rm {\mu}}_{BR}\boldsymbol{\rm {w}}_{k}$, $\boldsymbol{\rm {\psi}}_j\triangleq\frac{1}{4Z_0}\varepsilon_{BR}^N\boldsymbol{\rm {\Sigma}}_{B}^{\frac{1}{2}}\boldsymbol{\rm {w}}_{j}$, $\boldsymbol{\rm {B}}_{\Lambda,kj}=\frac{1}{4Z_0}\boldsymbol{\rm\widetilde R}_{RU,k}^{\frac{1}{2}}\mbox{Diag}(\boldsymbol{\rm {g}}_{\Lambda,j})$, and $\ell_k(\boldsymbol{\rm \bar{\Lambda}})=\left\Vert\boldsymbol{\rm {\Sigma}}_{\Lambda,k}^{\frac{1}{2}}\boldsymbol{\rm \bar{\Lambda}}\right\Vert_2$, $\forall k,j\in\mathcal{K}$, where $\boldsymbol{\rm\Sigma}_{\Lambda,k}=(\boldsymbol{\rm\widetilde \Theta}^H\boldsymbol{\rm\widetilde R}_{RU,k}\boldsymbol{\rm\widetilde \Theta})\odot\boldsymbol{\rm\widetilde\Sigma}_{R}^T$ and $\boldsymbol{\rm\widetilde \Theta}=L_{P\hspace{-0.3mm}S}^{2}\exp\left(j2\boldsymbol{\rm \Theta}\right)$. The notation $\boldsymbol{\rm\bar \Lambda}$ is the column vector composed of the main diagonal elements of $(\boldsymbol{\rm\Lambda}+\boldsymbol{\rm I}_M)$. The terms $\mbox{Re}\left\{\boldsymbol{\rm {\nu}}_{k}^H\boldsymbol{\rm {H}}_{C,k}\boldsymbol{\rm {w}}_{k}\right\}$ and $\left\Vert\boldsymbol{\rm {H}}_{C,k}\boldsymbol{\rm {w}}_{j}\right\Vert^2$ of $f_{OF}$ can be redescribed as
 
 \vspace{-3mm}
 \begin{small}
 	\begin{align}\label{gongshi1}
 	\mbox{Re}\left\{\boldsymbol{\rm {\nu}}_{k}^H\boldsymbol{\rm {H}}_{C,k}\boldsymbol{\rm {w}}_{k}\right\}=&\mbox{Re}\left\{\boldsymbol{\rm {\nu}}_{k,1}^H
 	\boldsymbol{\rm R}_{BU,k}^{\frac{1}{2}}\boldsymbol{\rm {w}}_{k}+\boldsymbol{\rm {\nu}}_{k,2}^H\boldsymbol{\rm {B}}_{\Lambda,k}\boldsymbol{\rm\bar \Lambda}\right.\notag\\
 	&\left.+\boldsymbol{\rm {\nu}}_{k,3}^H\ell_k(\boldsymbol{\rm \bar{\Lambda}})\boldsymbol{\rm {\psi}}_k\right\}, \forall k\in\mathcal{K},\notag\\
 	\left\Vert\boldsymbol{\rm {H}}_{C,k}\boldsymbol{\rm {w}}_{j}\right\Vert^2=&\left\Vert\boldsymbol{\rm R}_{BU,k}^{\frac{1}{2}}\boldsymbol{\rm {w}}_{j}\right\Vert^2+\left\Vert\boldsymbol{\rm {B}}_{\Lambda,kj}\boldsymbol{\rm\bar \Lambda}\right\Vert^2\notag\\
 	&+\boldsymbol{\rm\bar \Lambda}^H\boldsymbol{\rm {\Sigma}}_{\Lambda,k}\boldsymbol{\rm\bar \Lambda}\left\Vert\boldsymbol{\rm {\psi}}_j\right\Vert^2, \forall k,j\in\mathcal{K}.
 	\end{align}
 \end{small}Due to the non-convexity of $\ell_k(\boldsymbol{\rm \bar{\Lambda}})$, $\mbox{Re}\left\{\boldsymbol{\rm {\nu}}_{k}^H\boldsymbol{\rm {H}}_{C,k}\boldsymbol{\rm {w}}_{k}\right\}$ remains a non-convex expression. Therefore, we consider employing the SCA method to efficiently handle $\ell_k(\boldsymbol{\rm \bar{\Lambda}})$. By applying a Taylor series expansion, we approximate it using its lower-bound, as follows:

\vspace{-3mm}
\begin{small}
	\begin{align}\label{gongshi1_2}
	\ell_k(\boldsymbol{\rm \bar{\Lambda}})\ge\ell_{\Lambda,k}^{(\tau)}+\mbox{Re}\left\{\left(\triangledown\ell_{\Lambda,k}^{(\tau)}\right)^H\left(\boldsymbol{\rm \bar{\Lambda}}-\boldsymbol{\rm \bar{\Lambda}}^{(\tau)}\right)\right\},
	\end{align}
\end{small}where $\ell_{\Lambda,k}^{(\tau)}=\ell_k(\boldsymbol{\rm \bar{\Lambda}}^{(\tau)})$, $\triangledown\ell_{\Lambda,k}^{(\tau)}=\boldsymbol{\rm {\Sigma}}_{\Lambda,k}^{\frac{1}{2}}\boldsymbol{\rm \bar{\Lambda}}^{(\tau)}/\ell_{\Lambda,k}^{(\tau)}$ at $\tau$th iteration. By introducing similar definitions
 $\boldsymbol{\rm {e}}_{\Lambda,q}\triangleq\boldsymbol{\rm {\mu}}_{JR,q}\boldsymbol{\rm {w}}_{J,q}$, $\boldsymbol{\rm {\chi}}_{q}\triangleq\frac{1}{4Z_0}\varepsilon_{JR,q}^N\boldsymbol{\rm {\Sigma}}_{J,q}^{\frac{1}{2}}\boldsymbol{\rm {w}}_{J,q}$, $\boldsymbol{\rm\bar {B}}_{\Lambda,qk}=\frac{1}{4Z_0}\boldsymbol{\rm\widetilde R}_{RU,k}^{\frac{1}{2}}\mbox{Diag}(\boldsymbol{\rm {e}}_{\Lambda,q})$, $\boldsymbol{\rm\widetilde\Sigma}_{\Lambda,k}=(\frac{1}{\left(4Z_0\right)^2}\boldsymbol{\rm\widetilde \Theta}^H\boldsymbol{\rm\widetilde R}_{RU,k}\boldsymbol{\rm\widetilde \Theta})\odot(\boldsymbol{\rm\bar Z}_{AA}\boldsymbol{\rm\bar Z}_{AA}^H)^T$, $\boldsymbol{\rm\bar\Sigma}_{\Lambda,A}=(\boldsymbol{\rm\widetilde \Theta}^H\boldsymbol{\rm\widetilde Z}_{AA}\boldsymbol{\rm\widetilde Z}_{AA}^H\boldsymbol{\rm\widetilde \Theta})\odot\boldsymbol{\rm\widetilde\Sigma}_{R}^T$, $\boldsymbol{\rm\widehat\Sigma}_{\Lambda,A}=(\frac{1}{\left(4Z_0\right)^2}\boldsymbol{\rm\widetilde \Theta}^H\boldsymbol{\rm\widetilde Z}_{AA}\boldsymbol{\rm\widetilde Z}_{AA}^H\boldsymbol{\rm\widetilde \Theta})\odot(\boldsymbol{\rm\bar Z}_{AA}\boldsymbol{\rm\bar Z}_{AA}^H)^T$, $\boldsymbol{\rm\widetilde {B}}_{\Lambda,k}=\frac{1}{4Z_0}\boldsymbol{\rm\widetilde Z}_{AA}\mbox{Diag}(\boldsymbol{\rm {g}}_{\Lambda,k})$, and $\boldsymbol{\rm\widehat {B}}_{\Lambda,q}=\frac{1}{4Z_0}\boldsymbol{\rm\widetilde Z}_{AA}\mbox{Diag}(\boldsymbol{\rm {e}}_{\Lambda,q})$, the term $\bar J_k$ of objecive function and the constraint (\ref{constraint_1new}) are restructured in a similar manner due to sharing the same structure as the $\left\Vert\boldsymbol{\rm {H}}_{C,k}\boldsymbol{\rm {w}}_{j}\right\Vert^2$ term. Due to space constraints, we omit the specific derivation here.
 
 By substituting Eq.~(\ref{gongshi1}), (\ref{gongshi1_2}), and reconstructed versions of $\bar J_k$ and constraint (\ref{constraint_1new}) and dropping the irrelevant terms, the subproblem with regard to $\boldsymbol{\rm {\Lambda}}$ is written as follows:
 
 \vspace{-3mm}
 \begin{subequations}\label{P3-A}\small
 	\begin{alignat}{2}
 	\textbf{\textit{P}3-A:}
 	&\mathop{\min}\limits_{\boldsymbol{\rm\bar {\Lambda}}} \ \boldsymbol{\rm\bar {\Lambda}}^H\boldsymbol{\rm {Y}}_{\Lambda}\boldsymbol{\rm\bar {\Lambda}}-\mbox{Re}\left\{\boldsymbol{\rm {t}}^H_{\Lambda}\boldsymbol{\rm\bar {\Lambda}}\right\}\notag \\
 	{\rm{s.t.}}:&1).\ \bar{\Lambda}_m \le \Gamma_{A,\max}+1;\\
 	&2).\ \boldsymbol{\rm\bar {\Lambda}}^H\boldsymbol{\rm\widetilde {Y}}_{\Lambda}\boldsymbol{\rm\bar {\Lambda}}\le \widetilde{P}_{A,\max},
 	\end{alignat}
 \end{subequations}
 where the parameters are defined as follows:\begin{small}
 	\begin{align}
 	\boldsymbol{\rm {t}}_{\Lambda}\triangleq&\sum_{k=1}^{K}2\sqrt{1+\omega_{k}}\left(\boldsymbol{\rm {B}}_{\Lambda,k}^H\boldsymbol{\rm {\nu}}_{k,2}+\mbox{Re}\left\{\boldsymbol{\rm {\nu}}_{k,3}^H\boldsymbol{\rm {\psi}}_k\triangledown\ell_{\Lambda,k}^{(\tau)}\right\}\right),\nonumber\\
 	\boldsymbol{\rm {Y}}_{\Lambda}\triangleq&\sum_{k=1}^{K}\left(\boldsymbol{\rm {\nu}}_{k}^H\boldsymbol{\rm {\nu}}_{k}\right)\left(\sum_{j=1}^{K}\left(\boldsymbol{\rm {B}}_{\Lambda,kj}^H\boldsymbol{\rm {B}}_{\Lambda,kj}+\boldsymbol{\rm {\Sigma}}_{\Lambda,k}\left\Vert\boldsymbol{\rm {\psi}}_j\right\Vert^2\right)\right.\nonumber\\
 	&+\hspace{-1mm}\left.\sum_{q=1}^{Q}\left(\boldsymbol{\rm\bar {B}}_{\Lambda,qk}^H\boldsymbol{\rm \bar{B}}_{\Lambda,qk}+\boldsymbol{\rm {\Sigma}}_{\Lambda,k}\left\Vert\boldsymbol{\rm {\chi}}_q\right\Vert^2\right)+\sigma^2_R\boldsymbol{\rm\widetilde\Sigma}_{\Lambda,k}\right)\hspace{-1mm},\nonumber\\
 	\boldsymbol{\rm\widetilde {Y}}_{\Lambda}\triangleq&\sum_{k=1}^{K}\left(\boldsymbol{\rm\widetilde {B}}_{\Lambda,k}^H\boldsymbol{\rm\widetilde {B}}_{\Lambda,k}+\boldsymbol{\rm\bar\Sigma}_{\Lambda,A}\left\Vert\boldsymbol{\rm {\psi}}_k\right\Vert^2\right)\nonumber\\
 	&+\sum_{k=1}^{K}\sum_{q=1}^{Q}\left(\boldsymbol{\rm\widehat {B}}_{\Lambda,q}^H\boldsymbol{\rm\widehat {B}}_{\Lambda,q}+\boldsymbol{\rm\bar\Sigma}_{\Lambda,A}\left\Vert\boldsymbol{\rm {\chi}}_q\right\Vert^2\right)+\sigma^2_R\boldsymbol{\rm\widehat\Sigma}_{\Lambda,A},\nonumber\\
 	\widetilde{P}_{A,\max}&\triangleq{P}_{A,\max}-\sum_{k=1}^{K}\sum_{q=1}^{Q}\boldsymbol{\rm {w}}_{J,q}\boldsymbol{\rm {R}}_{JR,q}\boldsymbol{\rm {w}}_{J,q}\nonumber\\
 	&-\sum_{k=1}^{K}\boldsymbol{\rm {w}}_{k}\boldsymbol{\rm {R}}_{BR}\boldsymbol{\rm {w}}_{k}-\sigma^2_R\mbox{Diag}\left(\frac{1}{(2Z_0)^2}\boldsymbol{\rm R}_{AA}^+\right).
 	\end{align}
 \end{small}

 To leverage Lemma~\ref{lemma0} once more for developing an efficient solution, we introduce a copy $\boldsymbol{\rm\bar {\Lambda}}_C$ of $\boldsymbol{\rm\bar {\Lambda}}$ and employ the ADMM method to address the constraint $\boldsymbol{\rm\bar {\Lambda}}=\boldsymbol{\rm\bar {\Lambda}}_C$ as follows:
 
 \vspace{-3mm}
 \begin{subequations}\label{P3-B}\small
 	\begin{alignat}{2}
 	\textbf{\textit{P}3-B:}
 	&\mathop{\min}\limits_{\boldsymbol{\rm\bar {\Lambda}},\boldsymbol{\rm\bar {\Lambda}}_C,\boldsymbol{\rm {\lambda}}_\Lambda} \ \boldsymbol{\rm\bar {\Lambda}}^H\boldsymbol{\rm {Y}}_{\Lambda}\boldsymbol{\rm\bar {\Lambda}}-\mbox{Re}\left\{\boldsymbol{\rm {t}}^H_{\Lambda}\boldsymbol{\rm\bar {\Lambda}}\right\}+\frac{\rho_{\Lambda}}{2}\left\Vert\boldsymbol{\rm\bar {\Lambda}}-\boldsymbol{\rm\bar {\Lambda}}_C\right\Vert^2\notag\\
 	&\qquad\qquad\qquad\qquad\qquad+\mbox{Re}\left\{\boldsymbol{\rm {\lambda}}_{\Lambda}^H\left(\boldsymbol{\rm\bar {\Lambda}}-\boldsymbol{\rm\bar {\Lambda}}_C\right)\right\} \notag \\
 	{\rm{s.t.}}:&1).\ \bar{\Lambda}_{C,m} \le \Gamma_{A,\max}+1;\\
 	&2).\ \boldsymbol{\rm\bar {\Lambda}}^H\boldsymbol{\rm\widetilde {Y}}_{\Lambda}\boldsymbol{\rm\bar {\Lambda}}\le \widetilde{P}_{A,\max},
 	\end{alignat}
 \end{subequations}
 Then, we can alternately update variables $\boldsymbol{\rm\bar {\Lambda}}$, $\boldsymbol{\rm \bar{\Lambda}}_C$, and $\boldsymbol{\rm {\lambda}}_{\Lambda}$. The sub-optimization problem with regard to $\boldsymbol{\rm {\Lambda}}$ can be written as follow:
 
 \vspace{-3mm}
 \begin{small}
 	\begin{align}\label{L1}
 	&\mathop{\min}\limits_{\boldsymbol{\rm \bar{\Lambda}}} \ \boldsymbol{\rm \bar{\Lambda}}^H\boldsymbol{\rm \widetilde{Y}}_{\Lambda}\boldsymbol{\rm\bar {\Lambda}}-2\mbox{Re}\left\{\boldsymbol{\rm\widetilde {t}}_{\Lambda}^H\boldsymbol{\rm {w}}\right\}~{\rm{s.t.}}:(\ref{P3-B}{\text{b}}),
 	\end{align}
 \end{small}where $\boldsymbol{\rm \widetilde{t}}_{\Lambda}=\frac{\boldsymbol{\rm {t}}_{\Lambda}}{2}+\frac{\rho_{\Lambda}-\boldsymbol{\rm {\lambda}}_{\Lambda}}{2}$ and $\boldsymbol{\rm \widetilde{Y}}_{\Lambda}=\boldsymbol{\rm {Y}}_{\Lambda}+\frac{\rho}{2}\boldsymbol{\rm {I}}$. Therefore, we can apply Lemma~\ref{lemma0} to deal with sub-optimization problem with regard to $\boldsymbol{\rm\bar {\Lambda}}$. For the sub-optimization problem for $\boldsymbol{\rm\bar {\Lambda}}_C$, we have 

 \vspace{-3mm}
\begin{small}
	\begin{align}\label{L2}
	&\mathop{\min}\limits_{\boldsymbol{\rm \bar{\Lambda}}_C} \left\Vert\boldsymbol{\rm\bar {\Lambda}}_C\right\Vert^2\hspace{-1mm}-2\mbox{Re}\left\{\boldsymbol{\rm\bar {t}}_{\Lambda}^H\boldsymbol{\rm\bar {\Lambda}}_C\right\}{\rm{s.t.}}:(\ref{P3-B}{\text{a}}),
	\end{align}
\end{small}where $\boldsymbol{\rm \bar{t}}_{\Lambda}=\boldsymbol{\rm {t}}_{\Lambda}+\rho_{\Lambda}^{-1}\boldsymbol{\rm {\lambda}}_{\Lambda}$. The closed-form solution for $\boldsymbol{\rm\bar {\Lambda}}_C$ can be obtained as $\boldsymbol{\rm \bar{\Lambda}}_C^{\star}=\min\left\{\boldsymbol{\rm \bar{t}}_{\Lambda},\Gamma_{A,\max}+1\right\}$. In addition, $\boldsymbol{\rm {\lambda}}_{\Lambda}$ can be updated via $\boldsymbol{\rm {\lambda}}_{\Lambda}^{i+1}=\boldsymbol{\rm {\lambda}}_{\Lambda}^i+\rho_{\Lambda}\left(\boldsymbol{\rm\bar {\Lambda}}-\boldsymbol{\rm\bar {\Lambda}}_C\right)$.

 \subsubsection{Optimize the phase shift matrix $\boldsymbol{\rm \Theta}$}
 
 With the fixed $\boldsymbol{\rm \Lambda}$, the subproblem for $\boldsymbol{\rm \Theta}$ can be rewritten as
 
 \vspace{-3mm}
 \begin{small}
 	\begin{align}\label{Problem:P4}
 	\textbf{\textit{P}4:}
 	\mathop{\max}\limits_{\boldsymbol{\rm \Phi}} f_{OF}\left(\boldsymbol{\rm \Phi}\right)  
 	~{\rm{s.t.}}:\ (\ref{constraint_3}),
 	\end{align}
 \end{small}where $\boldsymbol{\rm \Phi}=\exp\left(j2\boldsymbol{\rm \Theta}\right)$ with $\boldsymbol{\rm \Phi}=[\phi_1,\cdots,\phi_M]^T$. In fact, the simplified derivation of the subproblem concerning $\boldsymbol{\rm \Phi}$ is analogous to that of the subproblem with respect to $\boldsymbol{\rm \Lambda}$. We can render it easier to solve by introducing a series of new definitions, as follows:

\vspace{-3mm}
\begin{small}
	\begin{align}\label{Problem:P4-A}
	\textbf{\textit{P}4-A:}
	\mathop{\min}\limits_{\boldsymbol{\rm \Phi}} \ \boldsymbol{\rm \Phi}^H\boldsymbol{\rm {Y}}_{\Phi}\boldsymbol{\rm \Phi}-\mbox{Re}\left\{\boldsymbol{\rm {t}}^H_{\Phi}\boldsymbol{\rm \Phi}\right\}  
	~{\rm{s.t.}}:\ (\ref{constraint_3}),
	\end{align}
\end{small}The parameters introduced here are consistent with those in the optimization problem concerning $\boldsymbol{\rm \Lambda}$, where the replacement of the subscript from $\Lambda$ to $\Phi$ signifies that only $\boldsymbol{\rm\widetilde \Theta})$ is substituted with $\boldsymbol{\rm\widetilde\Lambda}=L_{P\hspace{-0.3mm}S}^{2}(\boldsymbol{\rm\Lambda}+\boldsymbol{\rm I}_M)$ in the definitions, with the remainder unchanged. To reduce redundancy, we omit their specific expressions.

%
 
 Given the constraint that the phase shift distribution of the RIS is discrete, the continuous optimization method from Lemma~\ref{lemma0} may not yield optimal results for discrete phase scenarios. Therefore, we consider employing an AO algorithm to optimize the phase shift for each RE. Consequently, when we fix the other ($M-1$) REs of the EMC-Active RIS, the $\textbf{\textit{P}4-A}$ can be reformulated as follows:

 \vspace{-3mm}
 \begin{small}
 	\begin{align}\label{Problem:P4-B}
 	\textbf{\textit{P}4-B:}
 	\mathop{\max}\limits_{ \phi_m} \ \mbox{Re}\left\{\bar{t}_{\Phi,m}\phi_m\right\} +c_{\Phi,m}
 	~{\rm{s.t.}}:\ (\ref{constraint_3}),
 	\end{align}
 \end{small}where $\bar{t}_{\Phi,m}={t}_{\Phi,m}-2\sum_{j\neq m}Y_{\Phi,mj}\phi_j$ and $c_{\Phi,m}=\mbox{Re}\left\{\sum_{i\neq m}{t}^*_{\Phi,i}\phi_i\right\}-\sum_{i,j\neq m}\phi_i^*Y_{\Phi,ij}\phi_i-Y_{\Phi,mm}$ with ${t}_{\Phi,m}$ and $Y_{\Phi,ij}$ being the $m$th element of $\boldsymbol{\rm {t}}_{\Phi}$ and the $(i,j)$th element of $\boldsymbol{\rm {Y}}_{\Phi}$, respectively. For the $m$th RE at EMC-Active RIS, the optimal $\theta_{m}^{\star}$ can be obtained as $\theta_{m}^{\star}=\arg\min_{\theta_{m}\in\mathcal{F}}\left\vert\hspace{-3mm}\mod\hspace{-2mm}\left\{2\theta_{m}+\arg(t_{\Phi,m})+\pi,2\pi\right\}-\pi\right\vert\hspace{-1mm}.$
	Similarly, we can repeat this process for other REs and alternate iterations until convergence. It is worth noting that the proof of convergence for this AO algorithm can be found in \cite[Eq. (26)]{PS}.

 \begin{algorithm}[t]\small
 	\caption{DA-based AO algorithm for Problem \textbf{\textit{P}1-A}.} \label{alg:Framwork_all}
 	\begin{algorithmic}[1]
 		\STATE Initialize $\boldsymbol{\rm {w}}^{(0)}$,$\boldsymbol{\rm \Lambda}^{(0)}$,$\boldsymbol{\rm {\Theta}}^{(0)}$, the accuracy $\eta_{\varsigma}=10^{-3}$, the number of iterations $i=0$, and the maximum number of iterations $i_{\max}$;
 		\REPEAT
 		\STATE Compute the objective function $f_{OF} (i)$ of $\textbf{\textit{P}1-A}$;
 		\STATE Update $\boldsymbol{\rm {\omega}}^{(i+1)}$ and $\boldsymbol{\rm {\nu}}^{(i+1)}$;
 		\STATE Set $\boldsymbol{\rm {w}}^{(i+1,0)}=\boldsymbol{\rm {w}}_C^{(i+1,0)}=\boldsymbol{\rm {w}}^{(i)}$, initialize $\boldsymbol{\rm {\lambda}}_w^{(0)}$ and $i_1 = 0$;
 		\REPEAT
 		\STATE Update $\boldsymbol{\rm {w}}^{(i+1,i_1+1)}$ and $\boldsymbol{\rm {w}}_C^{(i+1,i_1+1)}$ by using Lemma~\ref{lemma0};
 		\STATE Update $\boldsymbol{\rm {\lambda}}_w^{(i_1+1)}$ and $i_1=i_1+1$;
 		\UNTIL Convergence;
 		\STATE Set $\boldsymbol{\rm {w}}^{(i+1)}=\boldsymbol{\rm {w}}^{(i+1,\infty)}$ and $\boldsymbol{\rm\bar {\Lambda}}^{(i+1,0)}=\boldsymbol{\rm\bar {\Lambda}}_C^{(i+1,0)}=\boldsymbol{\rm\bar {\Lambda}}^{(i)}$, initialize $\boldsymbol{\rm {\lambda}}_{\Lambda}^{(0)}$ and the number of iterations $i_2= 0$;
 		\REPEAT
 		\STATE Update $\boldsymbol{\rm\bar {\Lambda}}^{(i+1,i_2+1)}$ and $\boldsymbol{\rm\bar {\Lambda}}_C^{(i+1,i_2+1)}$ by solving (\ref{L1}), (\ref{L2});
 		\STATE Update $\boldsymbol{\rm {\lambda}}_{\Lambda}^{(i_2+1)}$ and $i_2=i_2+1$;
 		\UNTIL Convergence;
 		\STATE Set $\boldsymbol{\rm\bar {\Lambda}}^{(i+1)}=\boldsymbol{\rm\bar {\Lambda}}^{(i+1,\infty)}$, $\boldsymbol{\rm {\Phi}}^{(i+1,0)}=\boldsymbol{\rm {\Phi}}^{(i)}$, and $i_3= 0$;
 		\REPEAT
 		\FOR{$m=1$ to $M$}
 		\STATE Update ${\theta}_m^{(i+1,i_3+1)}$ by computing $\textbf{\textit{P}4-B}$;
 		\ENDFOR
 		\STATE $i_3=i_3+1$;
 		\UNTIL Convergence;
 		\STATE Compute the objective function $f_{OF} (i+1)$ of $\textbf{\textit{P}1-A}$; 
 		\STATE Update $i=i+1$;
 		\UNTIL $\left\vert f_{OF} (i+1)-f_{OF} (i)\right\vert<\eta_{\varsigma}$ or $i = i_{\max}$.
 	\end{algorithmic}
 \end{algorithm}

 \subsection{Complexity and Convergence Analysis}
\label{subsec:AO}
 
 The complete AO algorithm based on decoupling architecture (abbreviated as DA-based AO algorithm) for addressing problem $\textbf{\textit{P}1-A}$ is outlined in Algorithm~\ref{alg:Framwork_all}. By applying Proposition~\ref{proposition}, $\textbf{\textit{P}1}$ is reformulated as $\textbf{\textit{P}1-A}$, which is subsequently addressed through an AO strategy. The resulting DA-based AO framework operates as an iterative procedure that typically converges to a locally optimal solution. Within this framework, $\textbf{\textit{P}1-A}$ is partitioned into three tractable subproblems, namely $\textbf{\textit{P}2}$, $\textbf{\textit{P}3-A}$, $\textbf{\textit{P}4-A}$, which are solved sequentially at each iteration. It is worth noting that the proofs of the convergence of each subproblem's algorithm and the optimality of its solution are provided in their respective subsections.
 
 Next, we analyze the computational complexity of Algorithm~\ref{alg:Framwork_all}. At each iteration, the dominant computational cost arises from the optimization steps associated with $\boldsymbol{\rm {w}}$, $\boldsymbol{\rm \Lambda}$ and $\boldsymbol{\rm {\Theta}}$. The complexity of optimizing $\boldsymbol{\rm w}$ and $\boldsymbol{\rm \Lambda}$ are governed by the ADMM method, which are given as $\mathcal{O}_1=\mathcal{O}\left(i_{1,\max}\left(KN\right)^{3}\right)$ and $\mathcal{O}_2=\mathcal{O}\left(i_{2,\max}\left(M\right)^{3}\right)$, respectively \cite{ADMM}. In addition, the complexity of optimizing $\boldsymbol{\rm \Theta}$ is given as $\mathcal{O}_3=\mathcal{O}\left(i_{3,\max}M\right)$. The notation $i_{s,\max}, s\in\left\{1,2,3\right\}$ represent the maximum number of iteration for subproblems. Therefore, the complexity of the overall Algorithm~\ref{alg:Framwork_all} can be obtained as $\mathcal{O}\left(i_{\max}\left(\mathcal{O}_1+\mathcal{O}_2+\mathcal{O}_3\right)\right)$.
 
 In each outer iteration, the difference in complexity between the DA-based AO algorithm and the SMaN-based AO algorithm used in the original coupled system \cite{Cao3} stems primarily from the optimization of $\boldsymbol{\rm \Lambda}$ and $\boldsymbol{\rm \Theta}$. For DA-based AO, we define the per-iteration complexity at the RIS as $C_{DA} = \mathcal{O}_2+\mathcal{O}_3$. Since $i_{2,\max}$ and $i_{3,\max}$ are typically small iteration constants and only weakly dependent on $M$, the dominant complexity order of the decoupled scheme is $\mathcal{O}(M^3)$ when $M$ becomes large. In constrast, the complexity of updating $\boldsymbol{\rm \Lambda}$ and $\boldsymbol{\rm \Theta}$ in the SMaN-based AO are $\mathcal{O}(M^4+M^2K)$ and $\mathcal{O}(M^3)$, respectively. Thus, the per-iteration complexity at the RIS is defined as $C_{SMaN}=\mathcal{O}(M^4+M^2K+M^3)$. This indicates that DA directly alters the computational structure of each iteration: the $\boldsymbol{\rm \Lambda}$ subproblem reduces the complexity from quadratic to cubic, while the $\boldsymbol{\rm \Theta}$ subproblem further reduces it from cubic to linear, thereby significantly reducing the optimization overhead per iteration.
 
 From the perspective of scaling behavior, the dominant per-iteration complexity of the original coupled scheme scales as $\mathcal{O}(M^4)$. Hence, the per-iteration complexity ratio between the two schemes satisfies $\frac{C_{SMaN}}{C_{DA}}\approx\mathcal{O}(M)$, when $K\ll M^2$. This implies that the computational advantage of the proposed DA-based AO method becomes increasingly pronounced as the RIS size grows, thereby demonstrating superior scalability for large-scale RIS systems. Furthermore, since DA simplifies the optimization structure of subproblems, the DA-based AO algorithm significantly accelerates convergence, reducing the number of outer iterations by up to three orders of magnitude compared to traditional methods. This conclusion is validated by the simulations in Section \ref{sec:simulation}.

\section{Simulation Results and Analysis}
\label{sec:simulation}

 Numerical simulations are conducted to evaluate the performance of EMC-Active RIS-assisted anti-jamming system at $2.4$ GHz. Unless otherwise specified, all system parameters are fixed throughout the simulations. The considered setup consists of a BS equipped with $N=4$ antennas,  $Q=3$ jammer equipped with $N_J = 4$, and $K=4$ single-antenna UEs. The parameters of active RIS are given the RE spacing $d_{s}=\lambda_c/4$ and $M=M_h\times M_v =100$, where $M_h=10$ and $M_v=10$. The power parameters are defined as $P_{\max}=30$ dBm, $\Gamma_{\max}^2=30$ dB, $P_{J,\max}=10$ dBm, and $\sigma_R^2=\sigma_k^2=-105$ dBm, respectively. A 3D Cartesian coordinate system is adopted to describe the network geometry, where the BS and the center of the active RIS are located at (40m,0,1m) and (0,60m,2m), respectively. UEs and jammers are  positioned along a circular trajectory with radius $r_c=15$ m centered at (20m,120m,1m) and a rectangular region restricted by the boundaries with (10m, 120m, 0) and (40m, 150m, 0), respectively. The path loss in dB is defined as $PL_{\varpi} = \beta_0 +10\beta_{\varpi} \mbox{log}_{10}(d_{\bar\chi}/d_0)$, where $\beta_0$, $\beta_{\varpi}$ and $d_{\varpi}$ are the path loss at the reference distance $d_0=1$ m, the path loss exponent, and the distance between corresponding devices. According to the settings, we can calculate the distance among BS, RIS, jammers and UEs. In addition, the channel parameters are given as ${\beta}_{BU} = {\beta}_{JU}=2.75$, ${\beta}_{BR} ={\beta}_{JR} = 2.5$, ${\beta}_{RU} = 2.2$, and $\kappa_{\varpi} = 3$ \cite{Cao2}. For the $Z$-parameter matrix containing MC and mismatch effects, ie., $\boldsymbol{\rm {Z}}_{AA}$, its expression can be given as \cite{DN1}: $\boldsymbol{\rm {Z}}_{AA}(i,j)=-\frac{Z_0e^{-j2\pi d_s\vert i-j\vert/\lambda_c}}{2\pi d_s\vert i-j\vert/\lambda_c}$, if $i\neq j$; $\boldsymbol{\rm {Z}}_{AA}(i,i)=Z_0$. All simulation curves are derived from the average of 300 random tests.

 \begin{figure}[t]
 	\vspace{-15pt}
 	\centering
 	\includegraphics[scale=0.38]{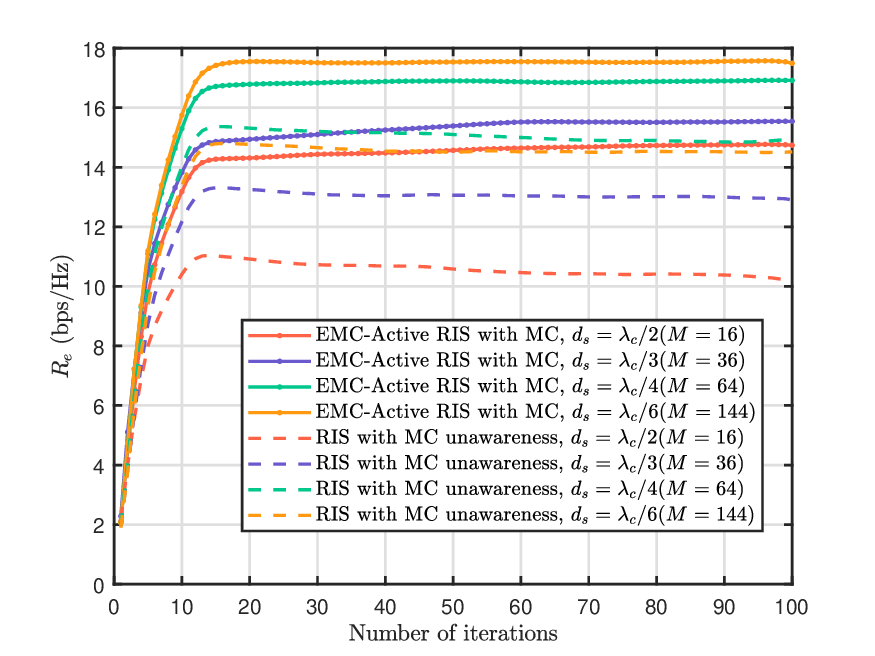}
 	\vspace{-10pt}
 	\caption{Convergence with fixed EMC-Active RIS size.}
 	\vspace{-13pt}
 	\label{fig:ite_EE}
 \end{figure}

 For comparison, the proposed EMC-Active RIS with MC is benchmarked against three baseline schemes: 1) the ideal active/passive RIS model that neglects MC \cite{Active_RIS1}; 2) the active RIS model with MC unawareness, where the parameters optimized under the ideal active RIS assumption are directly applied to EMC-Active RIS with MC; and 3) the passive RIS model with MC that explicitly accounts for MC \cite{S_parameter}. For fairness, the transmit power budget of the passive RIS-based system is set equal to the sum of the power budgets allocated to the active RIS and the BS in the active RIS-assisted system.

 \begin{figure}[t]
 	\centering
 	\includegraphics[scale=0.42]{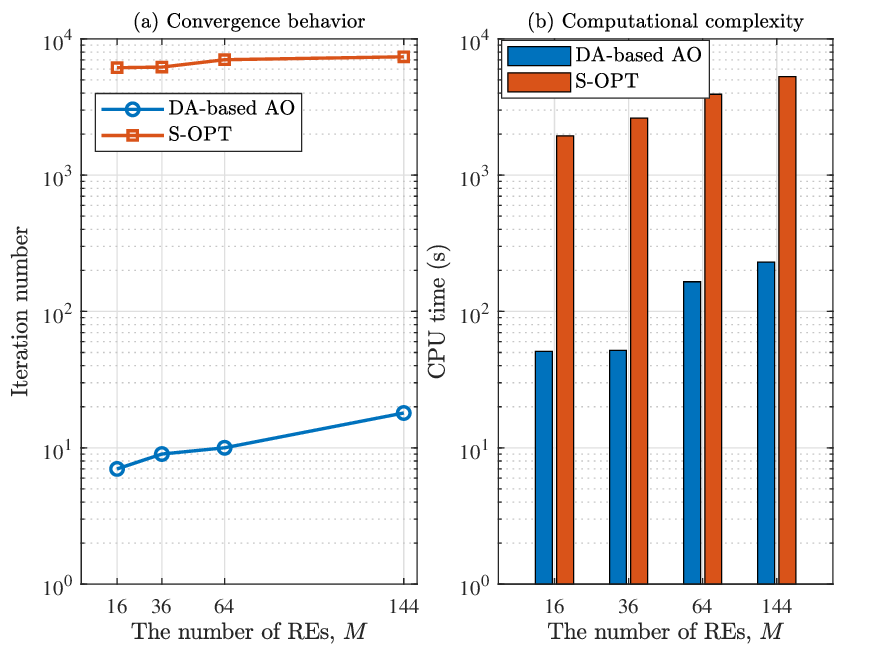}
 	\vspace{-10pt}
 	\caption{Algorithm complexity versus the number $M$ of REs.}
 	\vspace{-13pt}
 	\label{fig:ite_R_1}
 \end{figure}

 Figure~\ref{fig:ite_EE} depicts the convergence behavior of Algorithm~\ref{alg:Framwork_all} under different values of the number of REs $M$ and the inter-element spacing $d_s$. As the iteration proceeds, the achievable rate $R$ increases rapidly and subsequently stabilizes, thereby confirming the effectiveness of the proposed algorithm and the validity of the theoretical analysis. It is also observed that the convergence rate is influenced by both $M$ and $d_s$. For a fixed physical size of the active RIS, reducing the spacing distance---corresponding to a larger number of reflection elements---leads to slower convergence. This is attributed to the fact that denser REs deployment intensifies MC effects, increasing the complexity of the optimization problem. Nevertheless, the achievable rate improves with an increasing number of REs, as the enhanced array gain compensates for the performance degradation caused by stronger MC. Moreover, it is observed that the performance disparity between the proposed EMC-Active RIS model with MC and the active RIS model with MC unawareness becomes increasingly pronounced as $d_s$ decreases. This trend arises because denser deployment of REs intensifies MC effects, which severely degrades performance when ignored. In contrast, the EMC-Active RIS model with MC effectively mitigates the adverse impact of MC, thereby preserving its performance advantage under closely spaced REs configurations.
  
  Figure~\ref{fig:ite_R_1} compares the proposed DA-based AO algorithm with the baseline method (the S-OPT algorithm in \cite{S_parameter})  in terms of convergence behavior and computational complexity uunder different values of the number of REs $M$ and the inter-element spacing $d_s$. As illustrated in Fig.~\ref{fig:ite_R_1}(a), the proposed algorithm converges significantly faster than the baseline, requiring several orders of magnitude fewer iterations to reach the same convergence criterion, and its iteration number exhibits only a mild increase as $M$ grows. In contrast, the S-OPT algorithm suffers from a rapidly increasing iteration burden with the RIS size. Fig.~\ref{fig:ite_R_1}(b) further reports the CPU computation time, where the DA-based AO algorithm consistently achieves over three orders-of-magnitude reduction compared with the baseline across all RIS configurations. The performance gap becomes more pronounced for larger $M$, indicating that the DA-based AO algorithm is more scalable and computationally efficient for large-scale RIS-assisted systems. 
  
  \begin{figure}[t]
  	\vspace{-15pt}
  	\centering
  	\includegraphics[scale=0.38]{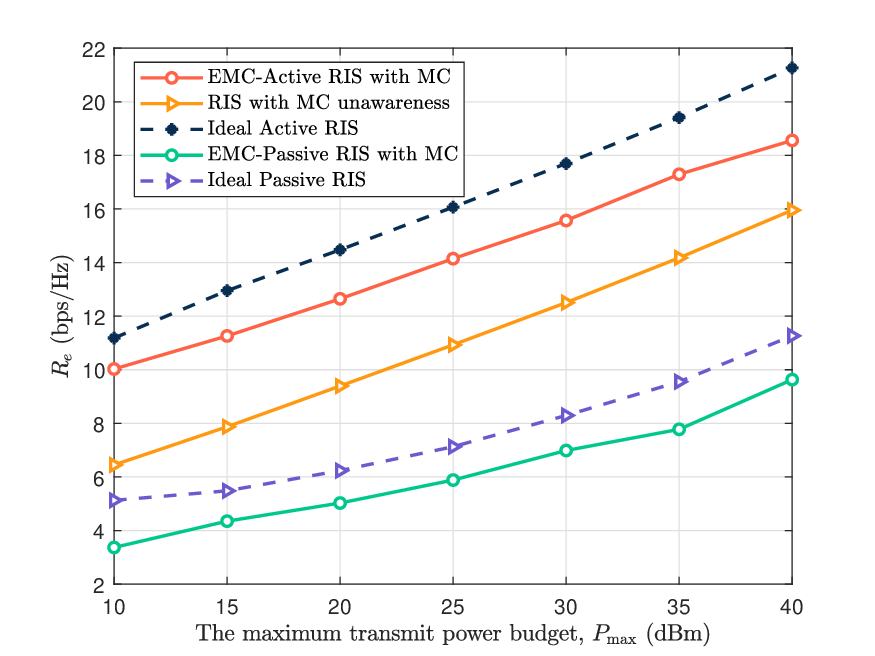}
  	\vspace{-10pt}
  	\caption{The ergodic achievable rate $R_e$ versus the maximum transmit power budget $P_{\max}$, where $d_s=\lambda_c/4$.}
  	\vspace{-13pt}
  	\label{fig:P_BS_EE}
  \end{figure}

\begin{figure}[t]
	\centering
	\includegraphics[scale=0.38]{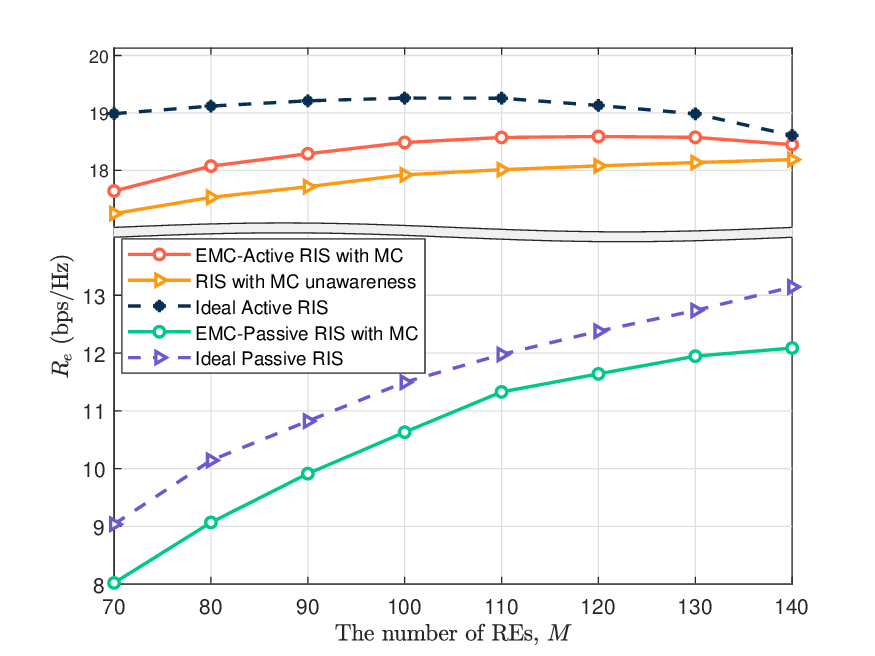}
	\vspace{-10pt}
	\caption{The ergodic achievable rate $R_e$ versus the number $M$ of REs, where $d_s=\lambda_c/4$.}
	\vspace{-13pt}
	\label{fig:M_R}
\end{figure}

  Figure~\ref{fig:P_BS_EE} illustrates the ergodic achievable rate $R_e$ versus the maximum transmit power budget $P_{\max}$ for different schemes. Although $R_e$ increases with $P_{\max}$ for all schemes, their growth behaviors differ noticeably. Owing to signal amplification at the RIS, active RIS-based schemes exhibit substantially higher rate gains than passive RIS as $P_{\max}$ increases. Moreover, the EMC-Active RIS with MC achieves performance close to that of the ideal active RIS, confirming that the EMC-Active RIS modeling preserves the performance benefits of active RIS. In contrast, neglecting MC results in suboptimal designs and a clear performance loss, as reflected by the gap between the EMC-Active RIS with MC scheme and the RIS with MC unawareness scheme. 
  
  Figure~\ref{fig:M_R} shows $R_e$ versus the number of REs $M$ with fixed inter-element spacing. Consistent with Fig.~\ref{fig:P_BS_EE}, similar performance gaps are observed among different schemes. As $M$ increases, the achievable rate of the active RIS first improves and then gradually degrades. This is because the amplification power budget $P_{A,\max}$ is sufficient to fully amplify a small number of elements, yielding increasing array gains, but becomes inadequate as more REs are added, leading to reduced per-RE amplification. This result reveals a fundamental trade-off between the amplification power budget and the reflection amplitude constraint, underscoring the importance of jointly selecting $P_{A,\max}$ and $\Gamma_{A,\max}$ in active RIS design.

   \begin{figure}[t]
   	\vspace{-15pt}
   	\centering
   	\includegraphics[scale=0.38]{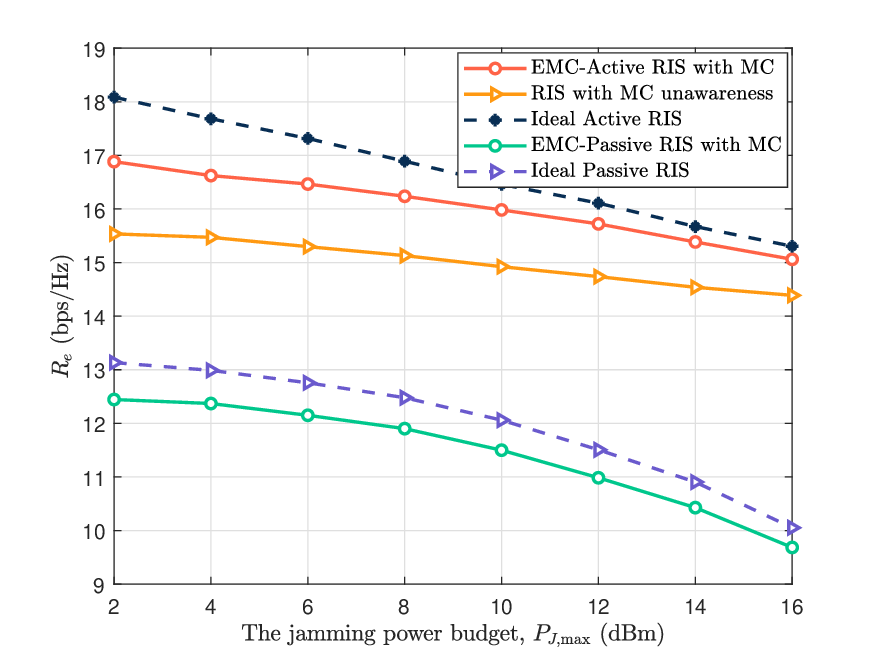}
   	\vspace{-10pt}
   	\caption{The ergodic achievable rate $R_e$ versus the jamming power budget $P_{J,\max}$, where $d_s=\lambda_c/4$.}
   	\vspace{-13pt}
   	\label{fig:PJ}
   \end{figure}

  \begin{figure}[t]
  	\centering
  	\includegraphics[scale=0.38]{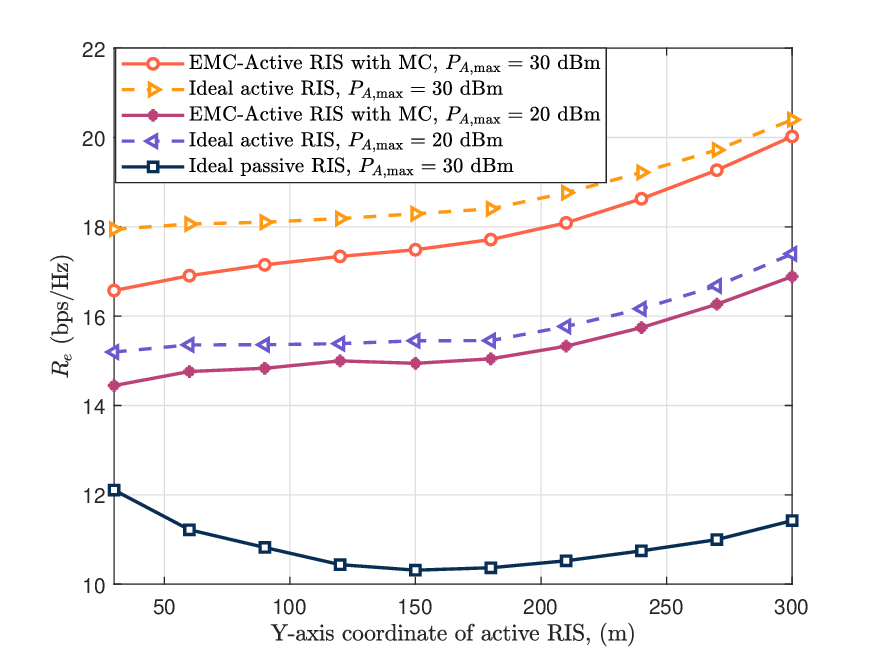}
  	\vspace{-10pt}
  	\caption{The ergodic achievable rate $R_e$ versus the Y-axis coordinate of active RIS, where $d_s=\lambda_c/4$.}
  	\vspace{-13pt}
  	\label{fig:L_EE}
  	\end{figure}
  
  Figure~\ref{fig:PJ} presents $R_e$ versus the jamming power budget $P_{J,\max}$ for different RIS schemes. As the jamming power increases, all schemes experience performance degradation, albeit at different rates. Notably, the EMC-Active RIS with MC scheme exhibits the most robust performance, with only a marginal rate reduction and behavior close to the ideal benchmark. These results demonstrate the superiority of the EMC-Active RIS model with MC in mitigating strong jamming effects under practical MC conditions.

  Figure~\ref{fig:L_EE} illustrates $R_e$ versus the RIS location, which varies with the Y-axis coordinate of RIS. It is observed that the passive RIS exhibits the poorest performance when placed near the midpoint between the BS and UEs, where the double-fading effect is most severe. In contrast, the ergodic achievable rate of the EMC-Active RIS-assisted system improves as the RIS moves closer to UEs. This behavior arises because weaker impinging signals at the active RIS can be more effectively amplified, leading to enhanced system performance. These results indicate that deploying the EMC-Active RIS in proximity to the UEs yields more pronounced performance gains.

  \vspace{-3mm}

\section{Conclusions}\label{sec:Conclusion}

 In this paper, we developed an EMC-Active RIS-assisted anti-jamming framework to jointly account for the EM and physical properties of active RIS, such as mutual coupling effects, channel correlation, and discrete phase. To assess the effectiveness of the proposed framework, a low-complexity DA-based AO algorithm was proposed to maximize the ergodic achievable rate. By employing the DA to mitigate MC among REs, the original coupled system was transformed into an equivalent uncoupled representation, enabling the non-convex optimization problem to be decomposed into three tractable subproblems. Simulation results demonstrate that the proposed DA-based AO algorithm substantially reduces the modeling and optimization complexity while achieving efficient convergence with significantly fewer iterations.

\begin{appendices}
\section{Proof of Theorem 1}
 
 To begin with, based on \cite[Eq.~(3)]{BD_RIS}, we have the following equations
 
 \vspace{-3mm}
 \begin{small}
 	\begin{align}\label{G_A1}
 	&\boldsymbol{\rm \Gamma}_{A}=\left(\boldsymbol{\rm Z}_{A}+Z_0\boldsymbol{\rm I}_{M}\right)^{-1}\left(\boldsymbol{\rm Z}_{A}-Z_0\boldsymbol{\rm I}_{M}\right),\\
 	&\boldsymbol{\rm S}_{AA}=\left(\boldsymbol{\rm Z}_{AA}+Z_0\boldsymbol{\rm I}_{M}\right)^{-1}\left(\boldsymbol{\rm Z}_{AA}-Z_0\boldsymbol{\rm I}_{M}\right).\label{S_AA1}
 	\end{align}
 \end{small}We define $\boldsymbol{\rm  Z}_{AA}^+ = \boldsymbol{\rm  Z}_{AA}+Z_0\boldsymbol{\rm  I}_{M}$, $\boldsymbol{\rm  Z}_{AA}^- = \boldsymbol{\rm  Z}_{AA}-Z_0\boldsymbol{\rm  I}_{M}$, $\boldsymbol{\rm  Z}_{A}^+ = \boldsymbol{\rm  Z}_{A}+Z_0\boldsymbol{\rm  I}_{M}$, and $\boldsymbol{\rm  Z}_{A}^- = \boldsymbol{\rm  Z}_{A}-Z_0\boldsymbol{\rm  I}_{M}$. Therefore, the term $\left(\boldsymbol{\rm I}_{M}-\boldsymbol{\rm \Gamma}_{A}\boldsymbol{\rm  S}_{AA}\right)^{-1}\boldsymbol{\rm \Gamma}_{A}$ in the equivalent channels $\boldsymbol{\rm H}^{E}_{S,k}$, $\boldsymbol{\rm H}^{J}_{S,qk}$, and $\boldsymbol{\rm H}^{N}_{S,k}$ can be rewritten as 

\vspace{-3mm}
\begin{small}
	\begin{align}\label{gam_1}
	&\hspace{-4.5mm}\left(\boldsymbol{\rm I}_{M}-\boldsymbol{\rm \Gamma}_{A}\boldsymbol{\rm  S}_{AA}\right)^{-1}\boldsymbol{\rm \Gamma}_{A}\notag\\
	&\overset{(a)}{=}\left(\boldsymbol{\rm I}_{M}-\left(\boldsymbol{\rm  Z}_{A}^{+}\right)^{-1}\boldsymbol{\rm  Z}_{A}^{-}\left(\boldsymbol{\rm  Z}_{AA}^{+}\right)^{-1}\boldsymbol{\rm  Z}_{AA}^{-}\right)^{-1}\left(\boldsymbol{\rm  Z}_{A}^{+}\right)^{-1}\boldsymbol{\rm  Z}_{A}^{-}\notag\\
	&\overset{(b)}{=}\left(\boldsymbol{\rm  Z}_{A}^{+}-\boldsymbol{\rm  Z}_{A}^{-}\left(\boldsymbol{\rm  Z}_{AA}^{+}\right)^{-1}\boldsymbol{\rm  Z}_{AA}^{-}\right)^{-1}\boldsymbol{\rm  Z}_{A}^{-},
	\end{align}
\end{small}where (a) holds due to substituting the Eqs.~(\ref{G_A1}) and (\ref{S_AA1}), and (b) holds because $\boldsymbol{\rm  Z}_{A}^{+}$ from (a) is moved outside the parentheses. For the term $\boldsymbol{\rm  Z}_{A}^{+}-\boldsymbol{\rm  Z}_{A}^{-}\left(\boldsymbol{\rm  Z}_{AA}^{+}\right)^{-1}\boldsymbol{\rm  Z}_{AA}^{-}$ of Eq.~(\ref{gam_1}), we have 

\vspace{-3mm}
\begin{small}
	\begin{align}\label{gam_2}
	\boldsymbol{\rm  Z}_{A}^{+}-\boldsymbol{\rm  Z}_{A}^{-}\left(\boldsymbol{\rm  Z}_{AA}^{+}\right)^{-1}\boldsymbol{\rm  Z}_{AA}^{-}
	\hspace{-0.5mm}&\overset{(a)}{=}\boldsymbol{\rm  Z}_{A}^{+}-\boldsymbol{\rm  Z}_{A}^{-}+2Z_0\boldsymbol{\rm  Z}_{A}^{-}\left(\boldsymbol{\rm  Z}_{AA}^{+}\right)^{-1}\notag\\
	&\overset{(b)}{=}2Z_0\hspace{-0.5mm}\left(\boldsymbol{\rm  I}_{M}+\boldsymbol{\rm  Z}_{A}^{-}\left(\boldsymbol{\rm  Z}_{AA}^{+}\right)^{-1}\hspace{-0.5mm}\right)\hspace{-1mm},
	\end{align}
\end{small}where (a) holds because $\boldsymbol{\rm  Z}_{AA}^{-}=\boldsymbol{\rm  Z}_{AA}^{+}-2Z_0\boldsymbol{\rm  I}_{M}$, and (b) holds due to substituting the definition of $\boldsymbol{\rm  Z}_{A}^{+}$ and $\boldsymbol{\rm  Z}_{A}^{-}$. Then, substituting Eq.~(\ref{gam_2}) into (\ref{gam_1}) and employing the push-through identity  yields 

\vspace{-3mm}
\begin{small}
	\begin{align}\label{gam_3}
	\left(\boldsymbol{\rm I}_{M}-\boldsymbol{\rm \Gamma}_{A}\boldsymbol{\rm  S}_{AA}\right)^{-1}\boldsymbol{\rm \Gamma}_{A}{=}\frac{1}{2Z_0}\boldsymbol{\rm  Z}_{A}^{-}\left(\boldsymbol{\rm  Z}_{A}+\boldsymbol{\rm  Z}_{AA}\right)^{-1}\boldsymbol{\rm  Z}_{AA}^{+}.
	\end{align}
\end{small}
 \vspace{-3mm}

 Afterwards, the equivalence relationship between $Z$-parameter matrices and $S$-parameter matrices in equivalent channels $\boldsymbol{\rm H}_{S,k}^{E}$ and $\boldsymbol{\rm H}_{Z,k}^{E}$ is given as follows:
 
 \vspace{-3mm}
 \begin{small}
 	\begin{align}\label{equ1}
 	&\boldsymbol{\rm S}_{BU,k}=\frac{\boldsymbol{\rm Z}_{BU,k}}{2Z_0}-\frac{1}{2Z_0}\boldsymbol{\rm Z}_{RU,k}\left(\boldsymbol{\rm  Z}_{AA}^{+}\right)^{-1}\boldsymbol{\rm Z}_{BR},\\
 	&\boldsymbol{\rm S}_{BR}=\left(\boldsymbol{\rm  Z}_{AA}^{+}\right)^{-1}\boldsymbol{\rm Z}_{BR},\label{equ2}\\
 	&\boldsymbol{\rm S}_{RU,k}=\frac{\boldsymbol{\rm Z}_{RU,k}}{2Z_0}\left(\boldsymbol{\rm I}_{M}-\left(\boldsymbol{\rm  Z}_{AA}^{+}\right)^{-1}\boldsymbol{\rm  Z}_{AA}^{-}\right).\label{equ3}
 	\end{align}
 \end{small}By inserting Eqs.~(\ref{equ2}) and (\ref{gam_3}) into (\ref{HE}) , we can get

\vspace{-3mm}
\begin{small}
	\begin{align}\label{HE_zh1}
	\boldsymbol{\rm H}_{S,k}^{E}=\boldsymbol{\rm S}_{BU,k}+\frac{1}{2Z_0}\boldsymbol{\rm  S}_{RU,k}\boldsymbol{\rm  Z}_{A}^{-}\left(\boldsymbol{\rm  Z}_{A}+\boldsymbol{\rm  Z}_{AA}\right)^{-1}\boldsymbol{\rm Z}_{BR}.
	\end{align}
\end{small}Then, substituting Eq.~(\ref{equ1}) and (\ref{equ3}) into (\ref{HE_zh1}) and simplifying the expression, we can obtain

\vspace{-3mm}
\begin{small}
	\begin{align}\label{HE_zh2}
	\boldsymbol{\rm H}_{Z,k}^{E}=&\frac{1}{2Z_0}\boldsymbol{\rm Z}_{BU,k}+\frac{1}{2Z_0}\boldsymbol{\rm  Z}_{RU,k}\left(\boldsymbol{\rm  Z}_{AA}^{+}\right)^{-1}\notag\\
	&\hspace{5mm}\times\left(\boldsymbol{\rm  Z}_{A}^{-}\left(\boldsymbol{\rm  Z}_{A}+\boldsymbol{\rm  Z}_{AA}\right)^{-1}-\boldsymbol{\rm  I}_{M}\right)\boldsymbol{\rm Z}_{BR}\notag\\
	=&\frac{1}{2Z_0}\left(\boldsymbol{\rm Z}_{BU,k}-\boldsymbol{\rm  Z}_{RU,k}\left(\boldsymbol{\rm Z}_{A}+\boldsymbol{\rm  Z}_{AA}\right)^{-1}\boldsymbol{\rm Z}_{BR}\right).
	\end{align}
\end{small}Similarly, by substituting the Eqs.~(\ref{gam_3})-(\ref{equ3}) into $\boldsymbol{\rm H}_{S,qk}^{J}$, $\boldsymbol{\rm H}_{S,k}^{N}$, $\boldsymbol{\rm \breve {H}}_{S}^{E}$, $\boldsymbol{\rm \breve {H}}_{S,q}^{J}$, and $\boldsymbol{\rm \breve {H}}_{S}^{N}$, we can derive the expressions of $\boldsymbol{\rm H}_{Z,qk}^{J}$, $\boldsymbol{\rm H}_{Z,k}^{N}$, $\boldsymbol{\rm \breve {H}}_{Z}^{E}$, $\boldsymbol{\rm \breve {H}}_{Z,q}^{J}$, and $\boldsymbol{\rm \breve {H}}_{S}^{N}$, respectively. Proof is complete.

 \section{Proof of Proposition 1}
 
 To begin with, we provide the following lemma, which plays a crucial role in this proof.
 \begin{lemma}\label{lemma1}
 	For a random matrix $\boldsymbol{\rm Z}$ following Gaussian distribution $\boldsymbol{\rm {Z}}\sim\mathcal{CN}\left(\boldsymbol{\rm \mu},\boldsymbol{\rm {\Sigma}}_{1}\otimes\boldsymbol{\rm {\Sigma}}_{2}\right)$ and given matrix $\boldsymbol{\rm {A}}$, $\mathbb{E}[\boldsymbol{\rm {Z}}^H\boldsymbol{\rm {A}}\boldsymbol{\rm {Z}}]$ can be expressed as 
 	
 	\vspace{-3mm}
 	\begin{small}
 		\begin{align}
 		\mathbb{E}\left[\boldsymbol{\rm {Z}}^H\boldsymbol{\rm {A}}\boldsymbol{\rm {Z}}\right]=\boldsymbol{\rm \mu}^H\boldsymbol{\rm {A}}\boldsymbol{\rm \mu}+\mbox{Tr}(\boldsymbol{\rm {A}}\boldsymbol{\rm {\Sigma}}_{1})\boldsymbol{\rm {\Sigma}}_{2}.
 		\end{align}
 	\end{small}
 \end{lemma}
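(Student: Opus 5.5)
The plan is to work directly from the stochastic representation of $\boldsymbol{\rm Z}$ used throughout Section~\ref{sec:Problem_formulation_2}. There, every channel matrix is built as a deterministic LOS mean plus a Kronecker-correlated NLOS term, e.g.\ $\boldsymbol{\rm Z}_{BR}^N=\boldsymbol{\rm \Sigma}_R^{\frac12}\boldsymbol{\rm \bar Z}_{BR}^N\boldsymbol{\rm \Sigma}_B^{\frac12}$. Accordingly, I will interpret $\boldsymbol{\rm Z}\sim\mathcal{CN}(\boldsymbol{\rm \mu},\boldsymbol{\rm \Sigma}_1\otimes\boldsymbol{\rm \Sigma}_2)$ as
\begin{equation*}
\boldsymbol{\rm Z}=\boldsymbol{\rm \mu}+\boldsymbol{\rm \Sigma}_1^{\frac12}\boldsymbol{\rm W}\boldsymbol{\rm \Sigma}_2^{\frac12},
\end{equation*}
where $\boldsymbol{\rm W}$ has i.i.d.\ $\mathcal{CN}(0,1)$ entries and the square roots are the Hermitian positive semidefinite ones. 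Here $\boldsymbol{\rm \Sigma}_1$ is the row-side (receive) correlation and $\boldsymbol{\rm \Sigma}_2$ the column-side (transmit) correlation.

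Fixing this convention is the step I expect to need the most care. Under the standard vectorization one has $\mathrm{Cov}(\mathrm{vec}(\boldsymbol{\rm Z}))=\boldsymbol{\rm \Sigma}_2^T\otimes\boldsymbol{\rm \Sigma}_1$, so the paper's notation $\boldsymbol{\rm \Sigma}_1\otimes\boldsymbol{\rm \Sigma}_2$ must be read as the row/column factor pair rather than as a literal vec-covariance. I will state this identification explicitly, since otherwise a transpose or conjugate would appear on $\boldsymbol{\rm \Sigma}_2$. For the Hermitian correlation matrices used in the paper this distinction is immaterial in the final formula once the convention is fixed.

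With the representation in hand, I expand
\begin{equation*}
\boldsymbol{\rm Z}^H\boldsymbol{\rm A}\boldsymbol{\rm Z}=\boldsymbol{\rm \mu}^H\boldsymbol{\rm A}\boldsymbol{\rm \mu}+\boldsymbol{\rm \mu}^H\boldsymbol{\rm A}\boldsymbol{\rm \Sigma}_1^{\frac12}\boldsymbol{\rm W}\boldsymbol{\rm \Sigma}_2^{\frac12}+\boldsymbol{\rm \Sigma}_2^{\frac12}\boldsymbol{\rm W}^H\boldsymbol{\rm \Sigma}_1^{\frac12}\boldsymbol{\rm A}\boldsymbol{\rm \mu}+\boldsymbol{\rm \Sigma}_2^{\frac12}\boldsymbol{\rm W}^H\boldsymbol{\rm B}\boldsymbol{\rm W}\boldsymbol{\rm \Sigma}_2^{\frac12},
\end{equation*}
where $\boldsymbol{\rm B}=\boldsymbol{\rm \Sigma}_1^{\frac12}\boldsymbol{\rm A}\boldsymbol{\rm \Sigma}_1^{\frac12}$. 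The two cross terms are linear in $\boldsymbol{\rm W}$ or $\boldsymbol{\rm W}^H$, so they vanish in expectation because $\mathbb{E}[\boldsymbol{\rm W}]=\boldsymbol{\rm 0}$. This leaves $\boldsymbol{\rm \mu}^H\boldsymbol{\rm A}\boldsymbol{\rm \mu}$ plus the expectation of the last term.

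The key auxiliary identity is $\mathbb{E}[\boldsymbol{\rm W}^H\boldsymbol{\rm B}\boldsymbol{\rm W}]=\mathrm{Tr}(\boldsymbol{\rm B})\boldsymbol{\rm I}$ for any deterministic $\boldsymbol{\rm B}$. I will check it entrywise: the $(i,j)$ entry is $\sum_{a,b}W_{ai}^*B_{ab}W_{bj}$, and $\mathbb{E}[W_{ai}^*W_{bj}]=\delta_{ab}\delta_{ij}$, so the expectation equals $\delta_{ij}\sum_a B_{aa}$. Pulling the deterministic factors $\boldsymbol{\rm \Sigma}_2^{\frac12}$ outside the expectation then gives $\mathrm{Tr}(\boldsymbol{\rm \Sigma}_1^{\frac12}\boldsymbol{\rm A}\boldsymbol{\rm \Sigma}_1^{\frac12})\boldsymbol{\rm \Sigma}_2$. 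By cyclicity of the trace, $\mathrm{Tr}(\boldsymbol{\rm \Sigma}_1^{\frac12}\boldsymbol{\rm A}\boldsymbol{\rm \Sigma}_1^{\frac12})=\mathrm{Tr}(\boldsymbol{\rm A}\boldsymbol{\rm \Sigma}_1)$. Adding the mean term yields the claimed formula $\boldsymbol{\rm \mu}^H\boldsymbol{\rm A}\boldsymbol{\rm \mu}+\mathrm{Tr}(\boldsymbol{\rm A}\boldsymbol{\rm \Sigma}_1)\boldsymbol{\rm \Sigma}_2$.

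Finally, I will remark that the argument covers the vector-valued special cases used in Proposition~\ref{proposition}. For example, for $\boldsymbol{\rm Z}_{RU,k}=\boldsymbol{\rm \Sigma}_R^{\frac12}\boldsymbol{\rm \bar Z}^N_{RU,k}$ one takes one side to be the scalar $1$, and the row-vector channels are handled by the same computation applied to the transposed model.
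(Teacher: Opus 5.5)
The paper does not prove this lemma. It is stated in Appendix~B as a standard fact and then used directly, so there is no argument of the authors' to compare against.

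Your derivation is correct and complete, and it is the natural one. With $\boldsymbol{\rm Z}=\boldsymbol{\rm \mu}+\boldsymbol{\rm \Sigma}_1^{\frac12}\boldsymbol{\rm W}\boldsymbol{\rm \Sigma}_2^{\frac12}$:
the cross terms vanish because $\mathbb{E}[\boldsymbol{\rm W}]=\boldsymbol{\rm 0}$; the identity $\mathbb{E}[\boldsymbol{\rm W}^H\boldsymbol{\rm B}\boldsymbol{\rm W}]=\mathrm{Tr}(\boldsymbol{\rm B})\boldsymbol{\rm I}$ handles the quadratic term; and cyclicity of the trace finishes it.

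You only use $\mathbb{E}[W_{ai}]=0$ and $\mathbb{E}[W_{ai}^*W_{bj}]=\delta_{ab}\delta_{ij}$. So Gaussianity plays no role. Any factorizations $\boldsymbol{\rm L}_1\boldsymbol{\rm L}_1^H=\boldsymbol{\rm \Sigma}_1$ and $\boldsymbol{\rm R}_2^H\boldsymbol{\rm R}_2=\boldsymbol{\rm \Sigma}_2$ would also serve in place of the Hermitian square roots.

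Fixing the convention first is the right call, and it matters. Suppose $\boldsymbol{\rm \Sigma}_1\otimes\boldsymbol{\rm \Sigma}_2$ were read literally as the covariance of the column-stacked $\mathrm{vec}(\boldsymbol{\rm Z})$. Then the same computation gives $\boldsymbol{\rm \mu}^H\boldsymbol{\rm A}\boldsymbol{\rm \mu}+\mathrm{Tr}(\boldsymbol{\rm A}\boldsymbol{\rm \Sigma}_2)\boldsymbol{\rm \Sigma}_1^T$, so the two factors swap roles.

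The stated identity holds exactly under your row/column-factor reading. This is also the assignment Appendix~B actually uses: $\boldsymbol{\rm \Sigma}_1=\boldsymbol{\rm \Sigma}_R$ on the row side and $\boldsymbol{\rm \Sigma}_2=\boldsymbol{\rm \Sigma}_B$ on the column side, which produces the term $\mathrm{Tr}(\cdot\,\boldsymbol{\rm \widetilde\Sigma}_R)\boldsymbol{\rm \Sigma}_B$ in $\boldsymbol{\rm C}_{1,k}$. This holds even though the channel-model subsection writes the pair as $\boldsymbol{\rm \Sigma}_{\varpi_1}\otimes\boldsymbol{\rm \Sigma}_{\varpi_2}$, i.e., $\boldsymbol{\rm \Sigma}_B\otimes\boldsymbol{\rm \Sigma}_R$ for $\boldsymbol{\rm Z}_{BR}$.

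Two small corrections, neither of which affects your proof of the lemma itself:

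\begin{enumerate}
\item Drop the sentence claiming the distinction is immaterial for the paper's Hermitian correlation matrices. $\boldsymbol{\rm \Sigma}_B$ and $\boldsymbol{\rm \Sigma}_{J,q}$ come from the exponential model with complex $\delta=re^{j\phi}$, so in general $\boldsymbol{\rm \Sigma}^T=\overline{\boldsymbol{\rm \Sigma}}\neq\boldsymbol{\rm \Sigma}$.
\item The same caveat applies to your closing remark. Literally transposing a column-form term such as $\boldsymbol{\rm \Sigma}_B^{\frac12}\boldsymbol{\rm \bar Z}_{BU,k}^N$ to get a row channel would put $\overline{\boldsymbol{\rm \Sigma}_B}$, not $\boldsymbol{\rm \Sigma}_B$, into $\boldsymbol{\rm R}_{BU,k}$. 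It is cleaner to write row channels directly as $\boldsymbol{\rm \mu}+\boldsymbol{\rm w}^T\boldsymbol{\rm \Sigma}^{\frac12}$ and apply the lemma with $\boldsymbol{\rm \Sigma}_1=1$.
\end{enumerate}
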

 
 To handle the expectations in $\widetilde{R}_e$, we first derive the numerator part of $\gamma_{k}$ in $\widetilde{R}_e$, as follow:
 
 \vspace{-3mm}
 \begin{small}
 	\begin{align}
 	&\mathbb{E}\left[\left\vert\boldsymbol{\rm H}^{E}_{Z,k}\boldsymbol{\rm w}_k\right\vert^2\right]\overset{(a)}{=}\boldsymbol{\rm w}_k^H\mathbb{E}\left[\frac{1}{\left(2Z_0\right)^2}\boldsymbol{\rm Z}_{BU,k}^H\boldsymbol{\rm Z}_{BU,k}\right]\boldsymbol{\rm w}_k\notag\\
 	&\qquad+\boldsymbol{\rm w}_k^H\mathbb{E}\left[\frac{1}{\left(4Z_0\right)^2}\boldsymbol{\rm Z}_{BR}^H\boldsymbol{\rm \widetilde \Gamma}_{A}^H\boldsymbol{\rm \widetilde Z}_{RU,k}^H\boldsymbol{\rm \widetilde Z}_{RU,k}\boldsymbol{\rm \widetilde \Gamma}_{A}\boldsymbol{\rm Z}_{BR}\right]\boldsymbol{\rm w}_k\notag\\
 	&\overset{(b)}{=}\boldsymbol{\rm w}_k^H\boldsymbol{\rm R}_{BU,k}\boldsymbol{\rm w}_k+\boldsymbol{\rm w}_k^H\mathbb{E}\left[\frac{1}{\left(4Z_0\right)^2}\boldsymbol{\rm Z}_{BR}^H\boldsymbol{\rm \widetilde \Gamma}_{A}^H\boldsymbol{\rm \widetilde R}_{RU,k}\boldsymbol{\rm \widetilde \Gamma}_{A}\boldsymbol{\rm Z}_{BR}\right]\boldsymbol{\rm w}_k\notag\\
 	&\overset{(c)}{=}\boldsymbol{\rm w}_k^H\boldsymbol{\rm C}_{1,k}\boldsymbol{\rm w}_k,
 	\end{align}
 \end{small}where $(a)$ holds due to the substitution of expression for $\boldsymbol{\rm H}^{E}_{Z,k}$; $(b)$ holds due to the fact that $\boldsymbol{\rm R}_{BU,k}=\mathbb{E}\left[\frac{1}{\left(2Z_0\right)^2}\boldsymbol{\rm Z}_{BU,k}^H\boldsymbol{\rm Z}_{BU,k}\right]$, $\boldsymbol{\rm\widetilde {Z}}_{RU,k}\sim\mathcal{CN}\left(\boldsymbol{\rm \widetilde\mu}_{RU,k},\left(\varepsilon_{\varpi}^N\right)^2\boldsymbol{\rm {\widetilde\Sigma}}_{R}\right)$ and $\boldsymbol{\rm\widetilde R}_{RU,k}=\mathbb{E}\left[\boldsymbol{\rm\widetilde Z}_{RU,k}^H\boldsymbol{\rm\widetilde Z}_{RU,k}\right]$; $(c)$ stands due to applying the Lemma~\ref{lemma1}. By employing the method to the other expectations in the $\textbf{\textit{P}1}$, we can obtain the result in Proposition~\ref{proposition}.

\end{appendices}
 
\bibliographystyle{IEEEtran}
\bibliography{References}

\end{document}